\newcommand*{\R}{\mathbb{R}}
\newcommand*{\bO}{\mathcal{O}}
\newcommand*{\eps}{\varepsilon}
\newcommand*{\C}{\mathcal{C}}
\newcommand*{\B}{\mathcal{B}}
\newcommand*{\U}{\mathcal{U}}
\newcommand*{\Rays}{\mathcal{R}}
\newcommand*{\Arr}{\mathcal{A}}
\newcommand*{\Lns}{\mathcal{L}}
\newcommand*{\seg}[1]{\overline{#1}}
\newcommand*{\queryseg}{\seg{pq}}
\newcommand*{\SetSize}[1]{\lvert #1 \rvert}
\newcommand*{\DSPartRef}[1]{\textcolor{lipicsGray}{\normalfont\sffamily\bfseries\mathversion{bold}#1}}
\newcommand*{\spds}{GHDS}
\newcommand*{\conetop}{C_\textnormal{top}}
\newcommand*{\tadj}{T_\textnormal{adj}}
\newcommand*{\cred}{\C_\textnormal{red}}
\newcommand*{\cblue}{\C_\textnormal{blue}}
\newcommand*{\cpurple}{\C_\textnormal{purple}}
\newcommand*{\corange}{\C_\textnormal{orange}}
\newcommand*{\cgrey}{\C_\textnormal{grey}}
\newcommand*{\cgreen}{\C_\textnormal{green}}
\newcommand*{\ctop}{\C_\textnormal{top}}
\newcommand*{\norange}{N_\textnormal{orange}}
\newcommand*{\ngrey}{N_\textnormal{grey}}
\newcommand*{\ngreen}{N_\textnormal{green}}
\newcommand*{\borange}{\B_\textnormal{orange}}
\newcommand*{\bgrey}{\B_\textnormal{grey}}
\newcommand*{\bgreen}{\B_\textnormal{green}}
\DeclareMathOperator{\polylog}{polylog}
\title{Segment Visibility Counting Queries in Polygons}
\author{Kevin Buchin}{Department of Computer Science, TU Dortmund, Germany\and\url{https://www.win.tue.nl/~kbuchin/}}{kevin.buchin@tu-dortmund.de}{https://orcid.org/0000-0002-3022-7877}{}
\author{Bram Custers}{Department of Mathematics and Computer Science, TU Eindhoven, The Netherlands}{b.a.custers@tue.nl}{https://orcid.org/0000-0001-9342-319X}{Supported by the Dutch Research Council (NWO) under the project number 628.011.005.}
\author{Ivor van der Hoog}{Department of Applied Mathematics and Computer Science, TU Denmark, Copenhagen, Denmark}{vanderhoog@gmail.com}{}{Supported by the Dutch Research Council (NWO) under the project number 614.001.504.}
\author{Maarten Löffler}{Department of Information and Computing Sciences, Utrecht University, The Netherlands\and\url{https://webspace.science.uu.nl/~loffl001/}}{m.loffler@uu.nl}{}{Partially supported by the Dutch Research Council (NWO) under the project numbers 614.001.504 and 628.011.005.}
\author{Aleksandr Popov}{Department of Mathematics and Computer Science, TU Eindhoven, The Netherlands\and\url{https://www.win.tue.nl/~apopov/}}{a.popov@tue.nl}{https://orcid.org/0000-0002-0158-1746}{Supported by the Dutch Research Council (NWO) under the project number 612.001.801.}
\author{Marcel Roeloffzen}{Department of Mathematics and Computer Science, TU Eindhoven, The Netherlands\and\url{https://www.win.tue.nl/~mroeloff/}}{m.j.m.roeloffzen@tue.nl}{https://orcid.org/0000-0002-1129-461X}{Supported by the Dutch Research Council (NWO) under the project number 628.011.005.}
\author{Frank Staals}{Department of Information and Computing Sciences, Utrecht University, The Netherlands\and\url{https://fstaals.net/}}{f.staals@uu.nl}{}{}
\authorrunning{K. Buchin, B. Custers, I. v/d Hoog, M. Löffler, A. Popov, M. Roeloffzen, F. Staals}
\keywords{Data Structure, Polygons, Visibility, Complexity}
\begin{document}
\maketitle

\begin{abstract}
Let \(P\) be a simple polygon with \(n\) vertices, and let \(A\) be a set of
\(m\) points or line segments inside \(P\).
We develop data structures that can efficiently count the number of objects from
\(A\) that are visible to a query point or a query segment.
Our main aim is to obtain fast, \(\bO(\polylog nm)\), query times, while using
as little space as possible.
In case the query is a single point, a simple visibility-polygon-based solution
achieves \(\bO(\log nm)\) query time using \(\bO(nm^2)\) space.
In case \(A\) also contains only points, we present a smaller,
\(\bO(n + m^{2 + \eps}\log n)\)-space, data structure based on a hierarchical
decomposition of the polygon.
Building on these results, we tackle the case where the query is a line segment
and \(A\) contains only points.
The main complication here is that the segment may intersect multiple regions of
the polygon decomposition, and that a point may see multiple such pieces.
Despite these issues, we show how to achieve \(\bO(\log n\log nm)\) query time
using only \(\bO(nm^{2 + \eps} + n^2)\) space.
Finally, we show that we can even handle the case where the objects in \(A\) are
segments with the same bounds.
\end{abstract}
  
\section{Introduction}\label{sec:intro}
Let \(P\) be a simple polygon with \(n\) vertices, and let \(A\) be a set of
\(m\) points or line segments inside \(P\).
We develop efficient data structures for \emph{visibility counting queries} in
which we wish to report the number of objects from \(A\) visible to some
(constant-complexity) query object \(Q\).
An object \(X\) in \(A\) is \emph{visible} from \(Q\) if there is a line segment
connecting \(X\) and \(Q\) contained in \(P\).
We are mostly interested in the case when \(Q\) is a point or a line segment.
Our aim is to obtain fast, \(\bO(\polylog nm)\), query times, using as little
space as possible.
Our work is motivated by problems in movement analysis where we have sets of
entities, for example, an animal species and their predators, moving in an
environment, and we wish to determine if there is mutual visibility between the
entities of different sets.
We also want to quantify `how much' the sets can see each other.
Assuming we have measurements at certain points in time, solving the mutual
visibility problem between two such times reduces to counting visibility between
line segments (for moving entities) and points (for static objects or entities).

\subparagraph*{Related work.}
Computing visibility is a classical problem in computational
geometry~\cite{ghosh07,orourke87}.
Algorithms for efficiently testing visibility between a pair of points, for
computing visibility polygons~\cite{elgindy81,joe87,lee83}, and for constructing
visibility graphs~\cite{overmars88} have been a topic of study for over thirty
years.
There is even a host of work on computing visibility on terrains and in other
three-dimensional environments~\cite{agarwal93rayshoot,berg94}.
For many of these problems, the data structure version of the problem has also
been considered.
In these versions, the polygon is given up front, and the task is to store it so
that we can efficiently query whether or not a pair of points \(p, q\) is
mutually visible~\cite{chazelle89,guibas87,hershberger95}, or report the entire
visibility polygon \(V(q)\) of \(q\)~\cite{aronov02}.
In particular, when \(P\) is a simple polygon with \(n\) vertices, the former
type of queries can be answered optimally---in \(\bO(\log n)\) time using
linear space~\cite{hershberger95}.
Answering the latter type of queries can be done in
\(\bO(\log^2 n + \SetSize{V(q)})\) time using \(\bO(n^2)\)
space~\cite{aronov02}.
The visibility polygon itself has complexity \(\bO(n)\)~\cite{elgindy81}.

Computing the visibility polygon of a line segment has been considered, as well.
When the polygon modelling the environment is simple, the visibility polygon,
called a \emph{weak visibility polygon,} denoted \(V(\queryseg)\) for a line
segment \(\queryseg\), still has linear complexity, and can be computed in
\(\bO(n)\) time~\cite{guibas87}.
Chen and Wang~\cite{chen15weak} consider the data structure version of the
problem: they describe a linear-space data structure that can be queried in
\(\bO(\SetSize{V(\queryseg)}\log n)\) time, and an \(\bO(n^3)\)-space data
structure that can be queried in \(\bO(\log n + \SetSize{V(\queryseg)})\) time.

Computing the visibility polygon of a line segment \(\queryseg\) allows us to
answer whether an entity moving along \(\queryseg\) can see a particular fixed
point \(r\), i.e.\@ there is a time at which the moving entity can see \(r\) if
and only if \(r\) lies inside \(V(\queryseg)\).
If the point \(r\) may also move, it is not necessarily true that the entity can
see \(r\) if the trajectory of \(r\) intersects \(V(\queryseg)\).
Eades et al.~\cite{eades20} present data structures that can answer such queries
efficiently.
In particular, they present data structures of size \(\bO(n\log^5 n)\) that can
answer such a query in time \(\bO(n^{\sfrac{3}{4}}\log^3 n)\).
They present results even in case the polygon has holes.
Aronov et al.~\cite{aronov02} show that we can also efficiently maintain the
visibility polygon of an entity as it is moving.

Visibility counting queries have been studied before, as well.
Bose et al.~\cite{bose02} studied the case where, for a simple polygon and a
query point, the number of visible polygon edges is reported.
The same problem has been considered for weak visibility from a query
segment~\cite{bygi15}.
For the case of a set of disjoint line segments and a query point, approximation
algorithms exist~\cite{alipour15,gudmundsson10,suri86}.
In contrast to these settings, we wish to count visible line segments with
visibility obstructed by a simple polygon (other than the line segments).
Closer to our setting is the problem of reporting all pairs of visible points
within a simple polygon~\cite{ben-moshe04}.

\subparagraph*{Results and organisation.}
Our goal is to efficiently count the number of objects, in particular, line
segments or points, in a set \(A\) that are visible to a query object \(Q\).
We denote this number by \(C(Q, A)\).
Given \(P\), \(A\), and \(Q\), we can easily compute \(C(Q, A)\) in optimal
\(\bO(n + m\log n)\) time (see \cref{lem:algorithm_oneshot}).
We are mostly interested in the data structure version of the problem, in which
we are given the polygon \(P\) and the set \(A\) in advance, and we wish to
compute \(C(Q, A)\) efficiently once we are given the query object \(Q\).
We show that we can indeed answer such queries efficiently, that is, in
polylogarithmic time in \(n\) and \(m\).
The exact query times and the space usage and preprocessing times depend on the
type of the query object and the type of objects in \(A\).
See \cref{tab:results} for an overview.

\newcolumntype{Y}[1]{>{\hsize=#1\hsize\linewidth=\hsize}X}
\begin{table}[tb]
\centering
\caption{Results in this paper.
\(\bullet\) and \(\slash\) denote points and line segments, respectively.}
\begin{tabularx}{\linewidth}{c c Y{.75} Y{1.5} Y{.75} c}
\toprule
\multirow{2}{*}{\(A\)} & \multirow{2}{*}{\(Q\)} &
    \multicolumn{3}{c}{Data structure} & \multirow{2}{*}{Section}\\
& & Space & Preprocessing & Query & \\
\midrule
\(\bullet\) & \(\bullet\) & \(\bO(nm^2)\) & \(\bO(nm\log n + nm^2)\) &
    \(\bO(\log nm)\) & \ref{sec:arrangement}\\
& & \(\bO(n + m^{2 + \eps} \log n)\) &
    \(\bO(n + m \log^2 n + m^{2 + \eps} \log n)\) & \(\bO(\log n \log nm)\) &
    \ref{sec:point_point}\\
\(\slash\) & \(\bullet\) & \(\bO(nm^2)\) & \(\bO(nm\log n + nm^2)\) &
    \(\bO(\log nm)\) & \ref{sec:arrangement}\\
\(\bullet\) & \(\slash\) & \(\bO(n^2 + nm^{2 + \eps})\) &
    \(\bO(n^2\log m + nm\log n + nm^{2 + \eps})\) & \(\bO(\log n\log nm)\) &
    \ref{sec:point_segment}\\
\(\slash\) & \(\slash\) & \(\bO(n^2 + nm^{2 + \eps})\) &
    \(\bO(n^2\log m + nm\log n + nm^{2 + \eps})\) & \(\bO(\log n\log nm)\) &
    \ref{sec:segment_segment}\\
\bottomrule
\end{tabularx}
\label{tab:results}
\end{table}

In \cref{sec:point}, we consider the case where the query object is a point.
We show how to answer queries efficiently using the arrangement of all (weak)
visibility polygons, one for each object in \(A\).
As Bose et al.~\cite[Section~6.2]{bose02} argued, such an arrangement has
complexity \(\Theta(nm^2)\) in the worst case.

\begin{restatable}{theorem}{theoremArrangement}\label{thm:arrangement}
Let \(P\) be a simple polygon with \(n\) vertices, and let \(A\) be a set of
\(m\) points or line segments inside \(P\).
In \(\bO(nm^2 + nm\log n)\) time, we can build a data structure of size
\(\bO(nm^2)\) that can report the number of points or segments in \(A\) visible
from a query point \(q\) in \(\bO(\log nm)\) time.
\end{restatable}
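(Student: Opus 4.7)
The plan is to reduce the problem to a planar point location problem, exactly as hinted in the preamble. First I would compute, for every $a\in A$, its (weak) visibility polygon $V(a)\subseteq P$. After an $\bO(n)$-time preprocessing of $P$ (e.g.\ triangulation and the shortest-path-tree structure of Guibas et al.), each such polygon has $\bO(n)$ complexity and can be built in $\bO(n)$ time, for a total of $\bO(nm)$ across all $a\in A$. Then I would build the arrangement $\Arr$ inside $P$ obtained by overlaying the $m$ polygons $V(a)$. By the argument of Bose et al.\ cited in the excerpt, $\Arr$ has worst-case complexity $\Theta(nm^2)$, which is tight. Using a red--blue-style sweep on the $\bO(nm)$ input edges with $\bO(nm^2)$ intersection events, $\Arr$ can be constructed in $\bO(nm\log(nm)+nm^2)=\bO(nm^2+nm\log n)$ time, matching the claimed preprocessing bound.

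Next I would annotate every face $f$ of $\Arr$ with the number $c(f)$ of visibility polygons containing $f$. The crucial observation is that, by definition of visibility, a point $q\in f$ sees exactly the objects $a\in A$ with $f\subseteq V(a)$, so $C(q,A)=c(f)$ is constant on each face. To compute $c(f)$ efficiently, I would tag every edge of $\Arr$ during the sweep with the multiset of polygons $V(a)$ whose boundary contributes to it, and then perform a BFS/DFS on the dual graph of $\Arr$ starting from the unbounded face (where $c=0$): crossing an edge changes the count by the signed sum of these contributions. This takes time $\bO(nm^2)$, linear in the size of $\Arr$. Finally I would build a standard planar point location structure (Kirkpatrick, or Sarnak--Tarjan) on $\Arr$ in $\bO(nm^2)$ time, so that a query point $q$ can be located in $\bO(\log(nm))$ time; the answer is simply $c(f)$ for the face $f$ containing $q$.

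The main obstacle I anticipate is bookkeeping in step~3 rather than any deep algorithmic difficulty. During the overlay sweep, several visibility-polygon boundaries may share long stretches along $\partial P$ or meet at reflex vertices of $P$, so a single edge of $\Arr$ can lie on the boundary of many $V(a)$ simultaneously; the sweep must output, for each edge, the correct signed count of incident polygon boundaries so that the DFS update is consistent. A secondary nuisance is the case where $A$ consists of segments: there the ``visibility polygon'' is the weak visibility polygon, whose window edges also emanate from reflex vertices, and one has to verify that overlaying these weak visibility polygons still yields the same $\bO(nm^2)$ complexity bound --- but this is implicit in the Bose et al.\ analysis, which treats the number of windows per visibility region, not the type of viewer. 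Once these accounting issues are handled, the remaining ingredients (visibility-polygon construction, sweep-line overlay, and planar point location) are off-the-shelf and yield the stated bounds.
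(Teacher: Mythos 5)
Your proposal is correct and follows essentially the same route as the paper: construct the \(m\) (weak) visibility polygons, overlay them into the arrangement \(\Arr\) of worst-case complexity \(\Theta(nm^2)\) (per Bose et al.), store with each face the number of visibility polygons containing it, and answer queries by planar point location in \(\bO(\log nm)\) time. The paper leaves the face-count computation implicit, whereas you spell out the dual-graph traversal, but this is just filling in a standard detail rather than a different approach.
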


We then show that if the objects in \(A\) are points, we can significantly
decrease the required space.
We argue that we do not need to construct the visibility polygons of all points
in \(A\), thus avoiding an \(\bO(nm)\) term in the space and preprocessing time.
We use a hierarchical decomposition of the polygon and the fact that the
visibility of a point \(a \in A\) in a subpolygon into another subpolygon is
described by a single constant-complexity cone.
We then obtain the following result.
Here and in the rest of the paper, \(\eps > 0\) is an arbitrarily small
constant.

\begin{restatable}{theorem}{theoremPointPoint}\label{thm:point_point}
Let \(P\) be a simple polygon with \(n\) vertices, and let \(A\) be a set of
\(m\) points inside \(P\).
In \(\bO(n + m^{2 + \eps} \log n + m \log^2 n)\) time, we can build a data
structure of size \(\bO(n + m^{2 + \eps} \log n)\) that can report the number of
points from \(A\) visible from a query point \(q\) in \(\bO(\log n \log nm)\)
time.
\end{restatable}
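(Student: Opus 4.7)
The plan is to avoid materialising all \(m\) visibility polygons explicitly, which was the source of the \(\bO(nm^2)\) bound in \cref{thm:arrangement}; instead, I attach to each \(a \in A\) only \(\bO(\log n)\) constant-complexity wedges. The main ingredient is a balanced hierarchical decomposition \(\mathcal{T}\) of \(P\): a binary tree of depth \(\bO(\log n)\) whose internal nodes \(\nu\) store a subpolygon \(P_\nu\) and a chord \(c_\nu\) that splits \(P_\nu\) into the subpolygons of its two children, and whose leaves have constant complexity. Such a \(\mathcal{T}\) is built in \(\bO(n)\) time by recursing on balanced diagonals of the dual tree of a triangulation, and supports \(\bO(\log n)\) point location.

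I locate every \(a \in A\) in \(\mathcal{T}\) and record its root-to-leaf path, yielding sets \(A_\nu := A \cap P_\nu\) with \(\sum_\nu |A_\nu| = \bO(m\log n)\). For each internal node \(\nu\) and each \(a \in A_\nu\), I compute the constant-complexity \emph{visibility wedge} \(K_a^\nu\), apexed at \(a\) and bounded by the two rays through the endpoints of the sub-interval of \(c_\nu\) visible from \(a\) within \(P_\nu\). The crucial geometric claim---the `single constant-complexity cone' property stated before the theorem---is that, for any \(q\) in the opposite child of \(\nu\), \(q\) is visible from \(a\) in \(P\) exactly when \(q \in K_a^\nu\). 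Using a linear-space shortest-path and ray-shooting structure in \(P\), the \(\bO(\log n)\) wedges for a single \(a\) are computable in \(\bO(\log^2 n)\) total time, for \(\bO(m\log^2 n)\) overall.

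At each internal node \(\nu\) I store, for each child, a wedge-stabbing counting structure on the wedges associated with the points in the opposite child: given a query point \(q\), it returns the number of wedges containing \(q\) in \(\bO(\log m)\) time using \(\bO(|A_\nu|^{2+\eps})\) space, via a standard multilevel partition tree on cones. Because the subpolygons at each fixed level of \(\mathcal{T}\) partition \(P\), summing \(|A_\nu|^{2+\eps}\) over all nodes telescopes to \(\bO(m^{2+\eps}\log n)\), which together with \(\bO(n)\) for \(\mathcal{T}\) and \(\bO(m\log^2 n)\) for the wedges matches the claimed space and preprocessing bounds.

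A query \(q\) is answered by locating its leaf in \(\mathcal{T}\) in \(\bO(\log n)\) time and walking to the root, querying at every ancestor \(\nu\) the wedge-stabbing structure for the side of \(c_\nu\) opposite to \(q\) and summing the counts, with a final constant-time scan in the leaf of \(q\); this yields \(\bO(\log n \log nm)\) query time. The main obstacle I foresee is establishing the cone-correctness claim: the forward direction is immediate because any sightline from \(a\) into the opposite child of \(\nu\) must exit through the visible window on \(c_\nu\), while the reverse direction requires that a sightline entering this window cannot be blocked before reaching \(q\) on the other side. I expect to handle this by choosing the decomposition so that each \(P_\nu\) inherits enough structure from \(P\) (for example, via balanced geodesic triangulations whose chords are short enough that the visible window on \(c_\nu\) from \(a\) is itself a single interval), or, failing that, by augmenting the wedge-stabbing structure with a symmetric visibility test from \(q\)'s side that refines the count to the genuinely visible pairs without changing the asymptotic cost.
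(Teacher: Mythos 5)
Your overall architecture---balanced decomposition, per-node cone structures, telescoping space bound, bottom-up query walk---matches the paper's, but your central geometric claim as primarily stated is false, and this is a genuine gap. The one-sided condition \(q \in K_a^\nu\) is necessary but not sufficient for visibility: the wedge \(K_a^\nu\) only certifies that the ray from \(a\) reaches the chord \(c_\nu\) unobstructed \emph{within \(a\)'s own subpolygon}; after crossing \(c_\nu\), the sightline can still be blocked by reflex vertices of the subpolygon containing \(q\). No choice of decomposition can repair this, since the obstruction lives entirely on \(q\)'s side of the chord and is independent of how the chord was chosen (your hope that ``short enough'' chords make the window a single interval addresses a non-issue---the window is always a single interval by convexity of the visibility polygon restricted to a segment---but does nothing about blockage beyond the chord). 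Your fallback, a ``symmetric visibility test from \(q\)'s side,'' is in fact the correct and necessary statement, not an optional refinement: the paper's \cref{lem:cones} proves that \(a\) and \(q\) are mutually visible if and only if \(q \in V(a, c_\nu, P_L)\) \emph{and} \(a \in V(q, c_\nu, P_R)\), i.e.\@ both cone memberships must hold simultaneously.

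Committing to that fallback closes the gap without changing your asymptotics, and is exactly what the paper does: the conjunction of the two conditions is a four-level query (two half-plane levels to select the points of \(A\) lying in \(q\)'s cone, two more to select those whose cones contain \(q\)), handled by a multilevel cutting tree as in \cref{lem:mutual_cones} with the same \(\bO(m^{2+\eps})\) space per node and \(\bO(\log m)\) query time. Note that this repair forces you to compute \(q\)'s visibility cone through each of the \(\bO(\log n)\) ancestor chords at query time, at \(\bO(\log n)\) each via the shortest-path structure; this is where the \(\bO(\log^2 n)\) contribution to the query bound actually comes from, which your primary (point-stabbing-only) plan would not have incurred and which your write-up does not account for.
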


In \cref{sec:point_segment}, we turn our attention to the case where the query
object is a line segment \(\queryseg\) and the objects in \(A\) are points.
One possible solution in this scenario would be to store the visibility polygons
for the points in \(A\) so that we can count the number of such polygons stabbed
by the query segment.
However, since these visibility polygons have total complexity \(\bO(nm)\) and
the query may have an arbitrary orientation, a solution achieving
polylogarithmic query time will likely use at least \(\Omega(n^2m^2)\)
space~\cite{agarwal93spacepart,agarwal96,gupta95}.
So, we again use an approach that hierarchically decomposes the polygon to limit
the space usage.
Unfortunately, testing visibility between the points in \(A\) and the query
segment is more complicated in this case.
Moreover, the segment can intersect multiple regions of the decomposition, so we
have to avoid double counting.
All of this makes the problem significantly harder.
We manage to overcome these difficulties using careful geometric arguments and
an inclusion--exclusion-style counting scheme.
This leads to the following result, saving at least a linear factor compared to
an approach based on stabbing visibility polygons:

\begin{restatable}{theorem}{theoremPointSegment}\label{thm:point_segment}
Let \(P\) be a simple polygon with \(n\) vertices, and let \(A\) be a set of
\(m\) points inside \(P\).
In time \(\bO(nm^{2 + \eps} + nm \log n + n^2\log m)\), we can build a data
structure of size \(\bO(nm^{2 + \eps} + n^2)\) that can report the number of
points from \(A\) visible from a query segment \(\queryseg\) in
\(\bO(\log n \log nm)\) time.
\end{restatable}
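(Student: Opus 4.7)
Building on \cref{thm:point_point}, I would construct a balanced hierarchical decomposition $T$ of $P$ with depth $\bO(\log n)$, where each internal node $v$ corresponds to a subpolygon $P_v$ split by a diagonal $d_v$ into the subpolygons of $v$'s children. The plan has three parts: (i) localise the query $\queryseg$ in $T$; (ii) reduce visibility counting to cone-stabbing queries at each level; and (iii) resolve double counting via inclusion--exclusion.

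For (i), I would split $\queryseg$ at its $\bO(\log n)$ crossings with diagonals of $T$, yielding $\bO(\log n)$ maximal subsegments, each contained in a single leaf subpolygon. A global point-location structure of size $\bO(n^2)$, built over the overlay of the diagonals of $T$ and enriched with ray-shooting information, would locate these crossings in total $\bO(\log n)$ time; this structure accounts for the $\bO(n^2)$ term in the space bound.

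For (ii), consider a subsegment $s \subseteq L$ for a leaf $L$ with ancestor path $L = v_0, v_1, \ldots, v_h$ in $T$. For each $j \geq 1$, let $w_j$ be the sibling of $v_{j-1}$ at $v_j$; a point $a \in A \cap P_{w_j}$ can see $s$ only through the diagonal $d_{v_j}$. By the geometric lemma underlying \cref{thm:point_point}, the locus in $P_{v_{j-1}}$ visible from $a$ through $d_{v_j}$ is a constant-complexity cone $\C_a^{v_j}$ with apex at $a$. Counting the points of $A \cap P_{w_j}$ visible from $s$ therefore reduces to counting cones in $\{\C_a^{v_j} : a \in A \cap P_{w_j}\}$ stabbed by $s$. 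Using partition trees with multi-level cuttings on the dual parameters of each cone, this primitive is answered in $\bO(\log nm)$ time with $\bO(\SetSize{A \cap P_{w_j}}^{2+\eps})$ space per node. Summing over the $\bO(\log n)$ ancestor levels in which each $a \in A$ participates gives total cone-storage space $\bO(nm^{2+\eps})$.

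The main obstacle is (iii): a single point $a$ may be counted several times, both because $\queryseg$ is broken into many subsegments and because $a$'s visibility to $\queryseg$ can be certified by several ancestral diagonals when $\queryseg$ re-enters $a$'s side of the decomposition. To count each visible $a$ exactly once, I would charge it to a canonical ancestor---the lowest $v_j$ at which $\queryseg \cap P_{v_j}$ witnesses visibility to $a$ across $d_{v_j}$---and subtract, level by level, the contributions already counted deeper in the tree via additional cone-stabbing queries restricted to the already-processed portion of $\queryseg$. Organising this inclusion--exclusion so that the total number of queries on the cone structures is $\bO(\log n)$ rather than $\bO(\log^2 n)$, while also handling degeneracies (e.g.~$\queryseg$ passing through a diagonal endpoint or grazing a reflex vertex of $P$), is the most delicate step of the argument and is what pins the final query time to $\bO(\log n \log nm)$.
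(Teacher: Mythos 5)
There is a genuine gap at the heart of step (ii). You claim that for a subsegment $s$ in a leaf and a point $a \in A \cap P_{w_j}$, \emph{counting the points visible from $s$ reduces to counting the cones $\C_a^{v_j}$ stabbed by $s$}. This is false, and it is precisely the failure mode the paper flags in \cref{rem:no_reuse} and \cref{fig:cone-lemma-fails-for-segments}: the cone $V(a, d_{v_j}, P_{w_j})$ only certifies that rays from $a$ reach the diagonal $d_{v_j}$ unobstructed \emph{within $P_{w_j}$}; once those rays cross the diagonal, they can be blocked by the boundary of $P_{v_{j-1}}$ before reaching $s$. For a point query, the mutual-cone condition of \cref{lem:cones} is an exact characterisation because the single ray $\seg{pq}$ is certified on both sides of the diagonal; for a segment query, $s$ can stab the planar cone of $a$ (and $a$ can lie in the analogous region for $s$) while every actual line of sight is obstructed. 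Your proposal locates the ``delicate step'' in organising the inclusion--exclusion for double counting, but the real difficulty is that the stabbing primitive itself does not characterise visibility. Repairing this is what forces the paper's entire machinery: decomposing $P$ via the polygon cover of $\queryseg$ into hourglasses, end triangles, side polygons and end polygons; characterising non-visibility from a side polygon by whether a cone boundary ray crosses the funnel chains $\pi(v_R,q)$ or $\pi(v_L,p)$ (\cref{lem:hourglassvisibility}); and counting such ray--convex-chain intersections with a precomputed shortest path map per region (\cref{lem:sp_intersection_ds}) --- which, incidentally, is where the $\bO(n^2)$ space actually comes from ($\bO(n)$ regions times an $\bO(n)$-size shortest path map each), not from a point-location overlay of the diagonals as you suggest; locating the crossings of $\queryseg$ with the decomposition already takes $\bO(\log n)$ time and linear space via the segment location query.

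A secondary issue: your double-counting fix (charging each visible $a$ to a canonical ancestor and subtracting deeper contributions) is left entirely unspecified, and it is not clear it can be implemented with $\bO(\log n)$ cone queries. The paper sidesteps the problem altogether: the side and end polygons of the polygon cover \emph{partition} the set $A$ outside the cover, so each point is examined in exactly one region and no inclusion--exclusion across levels is needed; the inclusion--exclusion that does appear is local to a single triangle edge (the red/purple/blue and orange/grey/green case analysis). So while your high-level instinct --- hierarchical decomposition, per-diagonal visibility cones, cutting trees --- matches the paper's starting point, the proof as proposed would not go through without the funnel-based visibility characterisation.
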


In \cref{sec:segment_segment}, we show that we can extend these arguments even
further and solve the scenario where the objects in \(A\) are also line
segments.
Somewhat surprisingly, this does not impact the space or time complexity of the
data structure.
(Note that in this setting, visibility between segments does not represent
visibility between moving points; refer to \cref{sec:extra} for a discussion of
that problem.)

\begin{restatable}{theorem}{theoremSegmentSegment}
\label{thm:segment_segment}
Let \(P\) be a simple polygon with \(n\) vertices, and let \(A\) be a set of
\(m\) segments inside \(P\).
In time \(\bO(nm^{2 + \eps} + nm \log n + n^2\log m)\), we can build a data
structure of size \(\bO(nm^{2 + \eps} + n^2)\) that can report the number of
points from \(A\) visible from a query segment \(\queryseg\) in
\(\bO(\log n \log nm)\) time.
\end{restatable}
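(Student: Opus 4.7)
The plan is to extend the machinery of \cref{thm:point_segment} from points in $A$ to segments in $A$, observing that none of the key structural properties of that proof actually require the objects in $A$ to be zero-dimensional. Since the target bounds match those of \cref{thm:point_segment} exactly, the argument must avoid any extra polynomial factors.

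First I would recall that a segment $s \in A$ is visible from the query segment $\queryseg$ iff there exist points $x \in s$ and $y \in \queryseg$ with $\seg{xy} \subseteq P$, equivalently iff the weak visibility polygons of $s$ and $\queryseg$ intersect. I would then reuse the balanced hierarchical decomposition of $P$ from \cref{thm:point_segment}. For a point $a \in A$ contained in a cell $P_1$, that proof relies on the fact that the set of points in any adjacent cell $P_2$ (across a diagonal $d$) that can see $a$ is a single constant-complexity cone, bounded by the two sight-lines from $a$ through the endpoints of $d$. The central geometric observation I would prove is that the same holds for a segment $s \subseteq P_1$: the set of points in $P_2$ that see \emph{some} point of $s$ is still a constant-complexity region, now bounded by the two outer sight-lines of $s$ through the endpoints of $d$. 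This reduces visibility from a segment in $A$ through a diagonal to exactly the same kind of constant-complexity object that the existing data structure already knows how to handle.

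The new difficulty is that a segment in $A$, unlike a point, may straddle diagonals of the decomposition and therefore lie in several cells at once. I would deal with this by cutting each $s \in A$ at every diagonal it crosses; in a balanced decomposition this produces $\bO(\log n)$ subsegments per original segment, hence $\bO(m \log n)$ subsegments overall. To recover a count in which each original $s$ is weighted exactly once, I would designate a canonical subsegment per $s$ (for instance, the one containing a fixed endpoint) and apply an inclusion--exclusion scheme analogous to the one used in \cref{thm:point_segment} for a query segment crossing multiple cells, now invoked symmetrically for the objects in $A$.

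The main obstacle will be the combinatorial bookkeeping: ensuring that the inclusion--exclusion on the query side and on the $A$ side compose so that every visible pair $(s, \queryseg)$ contributes exactly $+1$, with no double counting arising from $s$ and $\queryseg$ simultaneously spanning multiple cells. The remaining verification is routine: the extra $\polylog n$ factor introduced by splitting the segments in $A$ is absorbed into the $\bO(nm^{2 + \eps})$ term by a standard adjustment of $\eps$, and the data structure of \cref{thm:point_segment}, applied to the collection of subsegments together with the constant-complexity visibility regions described above, yields the stated space, preprocessing, and query bounds verbatim.
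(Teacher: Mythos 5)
There is a genuine gap, and your route diverges from the paper's at the two places where the real ideas are needed. First, your central geometric claim---that the set of points in \(P_2\) seeing some point of \(s\) is bounded by ``the two outer sight-lines of \(s\) through the endpoints of \(d\)''---is not the right object. The correct construction is the \emph{visibility glass cone}: one computes the visibility glass (the set of all visibility segments) between \(s\) and the diagonal \(D\), which is an hourglass between a subsegment \(\seg{or} \subseteq s\) and a subsegment \(\seg{wx} \subseteq D\); its two extreme lines (through \(\seg{ox}\) and \(\seg{rw}\)) need not pass through the endpoints of \(D\) at all, since they are determined by reflex vertices between \(s\) and \(D\). The crucial point, which you assert but do not establish, is that these two lines meet in a single apex \(i\), so that the visible region beyond \(D\) is contained in a genuine single-apex cone whose boundary rays are actual visibility rays (\cref{lem:vis_glass_cone} in the paper); only this makes the segment case ``functionally the same'' as a point and lets the SQDS of \cref{sec:point_segment} be reused verbatim.

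Second, and more seriously, your plan to cut each \(s \in A\) at the diagonals it crosses and then repair the count with a canonical subsegment plus a two-sided inclusion--exclusion is exactly the part you flag as ``the main obstacle'', and it does not go through: whether \(s\) is visible is an OR over all pairs (piece of \(s\), piece of \(\queryseg\)), which does not decompose into a signed sum of per-piece counts, and a segment can in fact cross \(\Theta(n)\) diagonals of the full hierarchical decomposition, so the ``\(\bO(\log n)\) subsegments, absorbed into \(\eps\)'' accounting also fails (the \(\eps\) sits in the exponent of \(m\), not of \(n\)). The paper avoids all of this with a complementary-counting observation you are missing: any segment of \(A\) that is \emph{not} entirely contained in a single side or end polygon of the polygon cover of \(\queryseg\) necessarily intersects the cover and is therefore automatically visible. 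Hence no segment is ever cut; one stores, per side/end polygon, the number of segments with empty and with non-empty visibility glass cones, counts the \emph{invisible} ones (each confined to one subpolygon) using the unmodified point machinery applied to the glass cones, and returns \(m\) minus that count. Without the glass-cone apex lemma and the complementary count, your proposal does not yet constitute a proof.
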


Finally, in \cref{sec:extra}, we discuss some extensions of our results.
We generalise the approach of \cref{sec:segment_segment} to the case where the
objects in \(A\) are simple polygons.
We consider some query variations and show that we can compute the pairwise
visibility of two sets of objects---that is, solve the problem that motivated
this work---in time subquadratic in the number of objects.

\section{Preliminaries}\label{sec:prelims}
In this \lcnamecref{sec:prelims}, we review some basic terminology and tools we
use to build our data structures.

\subparagraph*{Visibility in a simple polygon.}
We refer to the parts of the polygon \(P\) that can be seen from some point
\(p \in P\) as its \emph{visibility polygon,} denoted \(V(p)\).
The visibility polygon has complexity \(\bO(n)\)~\cite{elgindy81}.
We can also construct a visibility polygon for a line segment \(\queryseg\),
denoted \(V(\queryseg)\), which is the union of the visibility polygons of all
points on \(\queryseg\); it is referred to as a \emph{weak visibility polygon.}
Such a polygon still has complexity \(\bO(n)\)~\cite{guibas87}.

\begin{lemma}\label{lem:algorithm_oneshot}
Let \(P\) be a simple polygon with \(n\) vertices, and let \(A\) be a set of
\(m\) points or line segments inside \(P\).
We can compute the number \(C(Q, A)\) of objects from \(A\) visible to a point
or line segment \(Q\) in time \(\bO(n + m\log n)\).
\end{lemma}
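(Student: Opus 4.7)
The plan is to reduce \(C(Q, A)\) to testing, for each \(a \in A\), whether \(a\) meets the visibility region \(V(Q)\), and to do this after a single \(\bO(n)\)-time preprocessing of \(V(Q)\). An object \(a\) is visible from \(Q\) exactly when there is a segment inside \(P\) joining the two, which is equivalent to \(a \cap V(Q) \neq \emptyset\).

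First I would compute \(V(Q)\). When \(Q\) is a point, this is the standard visibility polygon; when \(Q\) is a line segment, it is the weak visibility polygon \(V(\queryseg)\) already discussed in the introduction. Both have complexity \(\bO(n)\), and, given a linear-time triangulation of \(P\), both can be computed in \(\bO(n)\) time by classical algorithms. Since \(V(Q)\) is itself a simple polygon of size \(\bO(n)\), I then triangulate it and build a Kirkpatrick-style point-location hierarchy on top of it, together with a ray-shooting data structure for simple polygons; all of this takes \(\bO(n)\) time and supports \(\bO(\log n)\) queries.

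For each \(a \in A\) I test whether \(a \cap V(Q) \neq \emptyset\). If \(a\) is a point, a single point-location query in \(V(Q)\) decides visibility. If \(a\) is a segment, I first do point location on its two endpoints; if at least one is in \(V(Q)\), I am done, and if not, \(a\) still meets \(V(Q)\) iff it properly crosses \(\partial V(Q)\), which I detect with one ray-shooting query along \(a\) from an endpoint, accepting iff the first hit of \(\partial V(Q)\) lies within distance \(|a|\). Summing the \(\bO(n)\) preprocessing with \(m\) queries costing \(\bO(\log n)\) each yields the claimed \(\bO(n + m \log n)\) bound.

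The main subtlety is the segment case: capturing ``some point of \(a\) is visible from \(Q\)'' with only \(\bO(1)\) elementary queries on \(V(Q)\). The argument relies on the fact that both \(a\) and \(V(Q)\) live inside the simple polygon \(P\), so \(a\) cannot ``loop around'' \(V(Q)\); consequently the combination of endpoint point location and one ray shot along \(a\) correctly distinguishes the three possible configurations (\(a\) entirely inside \(V(Q)\), entirely outside, or crossing \(\partial V(Q)\)) in \(\bO(\log n)\) time per object.
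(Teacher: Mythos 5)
Your overall strategy is the same as the paper's: compute \(V(Q)\) (the visibility polygon or weak visibility polygon) in \(\bO(n)\) time, decide visibility of each \(a \in A\) by testing \(a \cap V(Q) \neq \emptyset\), handle points by point location, and handle segments with both endpoints outside \(V(Q)\) by a constant number of ray-shooting queries. The reduction to \(a \cap V(Q) \neq \emptyset\) and the point case are fine.

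There is, however, a concrete gap in exactly the step you single out as the main subtlety. You build the ray-shooting data structure \emph{for the simple polygon \(V(Q)\)} and then shoot a ray along \(a\) from an endpoint that lies \emph{outside} \(V(Q)\). Linear-size, \(\bO(\log n)\)-query ray-shooting structures for simple polygons (Hershberger--Suri and the like) answer queries for rays whose origin lies \emph{inside} the polygon they are built on; a query from an exterior point asking for the first intersection with \(\partial V(Q)\) is a different problem (essentially segment--polygon intersection detection) and is not supported by that structure as stated. The paper closes this hole by building the ray-shooting structure not on \(V(Q)\) but on its complement within a large bounding box \(B\): the region \(B \setminus V(Q)\) contains the query endpoint, has a single hole, and can be cut open into a simple polygon, after which \(\bO(1)\) interior ray-shooting queries along \(a\) detect whether \(a\) reaches \(\partial V(Q)\) within distance \(\SetSize{a}\). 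With that substitution your argument goes through and the \(\bO(n + m\log n)\) bound holds; without it, the claimed \(\bO(\log n)\)-time test for the ``both endpoints outside'' case is unsupported.
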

\begin{proof}
If \(A\) is a set of points, it suffices to compute the visibility polygon of
\(Q\) and preprocess it for \(\bO(\log n)\)-time point location queries.
Both preprocessing steps take linear time~\cite{guibas87,kirkpatrick83}, and
querying takes \(\bO(m\log n)\) time in total.
In case \(A\) consists of line segments, we can similarly test if a segment of
\(A\) is visible when one of the endpoints is visible.
We also need to count the number of visible segments whose endpoints lie outside
of \(V(Q)\).
We can also do this in \(\bO(n + m \log n)\) time by computing a sufficiently
large bounding box \(B\) and constructing a \(\bO(\log n)\)-time ray shooting
data structure on \(B \setminus V(Q)\).
This allows us to test if a segment intersects \(V(Q)\) in \(\bO(\log n)\) time.
Since \(B \setminus V(Q)\) has only a single hole, we can turn it into a simple
polygon, build a ray shooting structure for simple
polygons~\cite{hershberger95}, and answer a query by \(\bO(1)\) ray shooting
queries.
\end{proof}

\begin{lemma}\label{lem:visibility-seg-intersect}
Given a visibility polygon \(V(p) \subseteq P\) for some point \(p \in P\) and a
line segment \(\seg{rs} \subset P\), if \(\seg{rs}\) and \(V(p)\) intersect,
their intersection is a single line segment.
\end{lemma}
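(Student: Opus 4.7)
The plan is to exploit the fact that $V(p)$ is star-shaped with centre $p$: by definition, for every $q \in V(p)$ the segment $\seg{pq}$ lies in $P$. Both $V(p)$ and $\seg{rs}$ are closed, so their intersection is a closed subset of the line supporting $\seg{rs}$; hence it suffices to show this intersection is connected, since a closed connected subset of a line is automatically a (possibly degenerate) line segment.

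I would proceed by contradiction. Suppose there exist $x, y \in \seg{rs} \cap V(p)$ and a point $z \in \seg{rs}$ strictly between them with $z \notin V(p)$. Consider the triangle $T$ with vertices $p$, $x$, $y$. Its boundary consists of the three segments $\seg{px}$, $\seg{py}$, and $\seg{xy}$. The first two lie in $P$ because $x$ and $y$ are visible from $p$, and $\seg{xy} \subseteq \seg{rs} \subseteq P$ by assumption; thus the entire boundary of $T$ lies in $P$.

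The main step---and the one I expect to need the most care---is upgrading $\partial T \subseteq P$ to $T \subseteq P$. For this I would invoke the simple connectivity of $P$: a simple polygon is homeomorphic to a closed topological disk, and any Jordan curve inside a simply connected planar region bounds a disk that is itself contained in that region. Applied to the Jordan curve $\partial T$, this yields $T \subseteq P$, and in particular $\seg{pz} \subseteq T \subseteq P$. Hence $z$ is visible from $p$, contradicting $z \notin V(p)$. The degenerate case where $p$, $x$, $y$ are collinear is handled separately and more directly: then $\seg{pz}$ is already contained in $\seg{px} \cup \seg{py} \subseteq P$, giving the same contradiction without any topology.
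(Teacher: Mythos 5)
Your proof is correct and follows essentially the same route as the paper: both arguments take two visible points on $\seg{rs}$ with an invisible point between them, form the triangle with apex $p$, use the simplicity of $P$ to conclude the triangle lies in $P$, and derive a contradiction. Your version is slightly more careful in spelling out the Jordan-curve justification and the degenerate collinear case, which the paper leaves implicit.
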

\begin{proof}
Assume for contradiction that the intersection between \(V(p)\) and \(\seg{rs}\)
consists of multiple, possibly degenerate, line segments, \(S_1, \dots, S_k\)
for some \(k > 1\).
Take some points \(q_i\) and \(q_{i + 1}\) on consecutive segments \(S_i\) and
\(S_{i + 1}\).
Consider the line segments \(\seg{pq_i}\) and \(\seg{pq_{i + 1}}\).
By definition of the visibility polygon, these segments are inside \(P\).
Since \(\seg{rs}\) is inside \(P\), the segment \(\seg{q_i q_{i + 1}}\) is also
inside \(P\).
Since \(P\) is simple, it must then hold that the interior of the triangle \(T\)
with vertices \(p\), \(q_i\), and \(q_{i + 1}\) is also inside \(P\).
More precisely, \(T\) cannot contain any of the boundary of \(P\).
Now consider a line segment \(\seg{pq_o}\) for a point \(q_o\) between segments
\(S_i\) and \(S_{i + 1}\) on \(\seg{rs}\).
Since its endpoint \(q_o\) is outside \(V(p)\), the line segment must cross the
boundary of \(P\) inside \(T\).
This contradicts the previous claim that \(T\) is empty; thus, it must be that
the intersection between \(V(p)\) and \(\seg{rs}\) is a line segment if they
intersect.
\end{proof}

A \emph{cone} is a subspace of the plane that is enclosed by two rays starting
at some point \(p\), called the \emph{apex} of the cone.
Let \(D\) be a diagonal of \(P\) and let \(P_L\) and \(P_R\) be the two
subpolygons we obtain when splitting \(P\) with \(D\).
Consider some point \(p\) in \(P_L\).
We define the \emph{visibility cone} of \(p\) \emph{through} \(D\), denoted
\(V(p, D, P_L)\), as the collection of rays starting at \(p\) that intersect
\(D\) and do not intersect the boundary of \(P_L\), except at \(D\) (see
\cref{fig:cone_illustration}).

\begin{figure}[tb]
\centering
\includegraphics{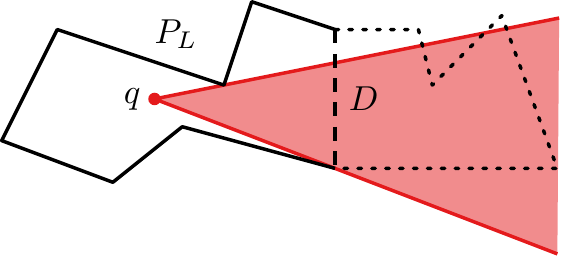}
\caption{The filled shape is the cone \(V(q, D, P_L)\).}
\label{fig:cone_illustration}
\end{figure}

\begin{corollary}\label{cor:cone_continuous}
The intersection between the visibility cone \(V(p, D, P_L)\) and the diagonal
\(D\) is a single line segment or empty.
\end{corollary}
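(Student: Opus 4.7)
The plan is to reduce the statement directly to \cref{lem:visibility-seg-intersect}. First I would observe that $P_L$ is itself a simple polygon, since it is obtained from the simple polygon $P$ by cutting along the diagonal $D$. Hence $V(p)$ interpreted inside $P_L$, call it $V_L(p)$, is a well-defined visibility polygon in a simple polygon.

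Next I would argue that
\[
V(p, D, P_L) \cap D \;=\; V_L(p) \cap D,
\]
i.e.\@ a point $x \in D$ is reached by a ray from $p$ that stays in $P_L$ (and only touches the boundary of $P_L$ at $D$) if and only if $x$ is visible from $p$ inside $P_L$. The forward direction is immediate from the definition of the visibility cone; for the reverse direction, if $x \in D$ is visible from $p$ in $P_L$, the open segment $\seg{px}$ cannot touch the boundary of $P_L$ elsewhere on $D$ (otherwise $P_L$ would lie on both sides of $D$ locally), nor on the rest of $\partial P_L$ (by visibility), and so the ray through $x$ realises $x$ as a cone point.

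Now $D$ is a line segment contained in $P_L$, so applying \cref{lem:visibility-seg-intersect} in the simple polygon $P_L$ with the point $p$ and the segment $D$ yields that $V_L(p) \cap D$ is either empty or a single (possibly degenerate) line segment. Combined with the equality above, this gives the claim.

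I do not expect a real obstacle here: the only point that needs a careful sentence is that the lemma, although phrased for the ambient polygon $P$, is a statement about any simple polygon, and that restricting visibility to $P_L$ is the right notion to match the cone through $D$.
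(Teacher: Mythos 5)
Your proposal is correct and follows essentially the same route as the paper: both reduce the claim to \cref{lem:visibility-seg-intersect} by identifying the cone's trace on \(D\) with that of a visibility polygon. The only cosmetic difference is that you apply the lemma to \(V_L(p)\) inside the subpolygon \(P_L\) via an explicit equality, whereas the paper applies it to \(V(p)\) in \(P\) and uses the containment of the cone in \(V(p)\); your version is, if anything, slightly more careful on that step.
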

\begin{proof}
The visibility cone is clearly a subset of the visibility polygon \(V(p)\).
Since the intersection between \(D\) and the visibility polygon is a single line
segment (or empty) by \cref{lem:visibility-seg-intersect}, the same must hold
for the visibility cone.
\end{proof}

\subparagraph*{Cutting trees.}
Suppose we want to preprocess a set \(\Lns\) of \(m\) lines in the plane so that
given a query point \(q\), we can count the number of lines below the query
point.
Let \(r \in [1, m]\) be a parameter; then a \emph{\((\sfrac{1}{r})\)-cutting} of
\(\Lns\) is a subdivision of the plane with the property that each cell is
intersected by at most \(\sfrac{m}{r}\) lines~\cite{chazelle93}.
If \(q\) lies in a certain cell of the cutting, we know, for all lines that do
not cross the cell, whether they are above or below \(q\), and so we can store
the count with the cell, or report the lines in a precomputed \emph{canonical
subset;} for the lines that cross the cell, we can recurse.
The data structure that performs such a query is called a \emph{cutting tree;}
it can be constructed in \(\bO(m^{2 + \eps})\) time, uses \(\bO(m^{2 + \eps})\)
space, and supports answering the queries in time \(\bO(\log m)\), for any
constant \(\eps > 0\).
Intuitively, the parameter \(r\) here determines the trade-off between the
height of the recursion tree and the number of nodes for which a certain line in
\(\Lns\) is relevant.
If we pick \(r = m\), the \((\sfrac{1}{r})\)-cutting of \(\Lns\) is just the
arrangement of \(\Lns\).
The bounds above are based on picking \(r \in \bO(1)\), so the height of the
recursion tree is \(\bO(\log m)\).
This approach follows the work of Clarkson~\cite{clarkson87}, with
Chazelle~\cite{chazelle93} obtaining the bounds above by improving the cutting
construction.

An obvious benefit of this approach over just constructing the arrangement on
\(\Lns\) and doing point location in that arrangement is that using cuttings,
we can obtain \(\bO(\log m)\) canonical subsets and perform nested queries on
them without an explosion in storage required; the resulting data structure is
called a \emph{multilevel cutting tree.}
Specifically, we can query with \(k\) points and a direction associated with
each point (above or below) and return the lines of \(\Lns\) that pass on the
correct side (above or below) of all \(k\) query points.
If we pick \(r \in \bO(1)\) and nest \(k\) levels in a \(k\)-level cutting tree,
we get the same construction time and storage bounds as for a regular cutting
tree; but the query time is now \(\bO(\log^k m)\).
Chazelle et al.~\cite{chazelle92fast} show that if we set
\(r = n^{\sfrac{\eps}{2}}\), each level of a multilevel cutting tree is a
constant-height tree, so the answer to the query can be represented using only
\(\bO(1)\) canonical subsets.
The space used and the preprocessing time remains \(\bO(m^{2 + \eps})\).

\begin{lemma}\label{lem:multilevel-cutting-tree-2}
Let \(\Lns\) be a set of \(m\) lines and let \(k\) be a constant.
Suppose we want to answer the following query: given \(k\) points and associated
directions (above or below), or \(k\) vertical rays, find the lines in \(\Lns\)
that lie on the correct side of all \(k\) points (or intersect all \(k\) rays).
In time \(\bO(m^{2 + \eps})\), we can construct a data structure using
\(\bO(m^{2 + \eps})\) storage that supports such queries.
The lines are returned as \(\bO(1)\) canonical subsets, and the query time is
\(\bO(\log m)\).
\end{lemma}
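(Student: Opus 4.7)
The plan is to build a $k$-level nested cutting tree following the Chazelle--Sharir--Welzl framework mentioned just before the statement, verify that the two query types in the lemma are handled uniformly, and then check that the bounds claimed match the ones already summarised in the preceding paragraph.

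First, I would observe that the two query types are essentially the same. A line $\ell \in \Lns$ passes above a query point $q$ if and only if the upward vertical ray from $q$ intersects $\ell$, and analogously for ``below''. Hence, a ``point $q$ with direction above'' is equivalent to a ``vertical upward ray from $q$''. So it suffices to build a structure that, given $k$ vertical rays (each oriented up or down), returns the lines of $\Lns$ hit by all of them.

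Next, I would describe the construction level by level. At the first level, I build a $(1/r)$-cutting of $\Lns$ with $r = m^{\eps/(2k)}$. Since the number of lines crossing each cell drops by a factor of $r$ at each step of the recursion, the resulting cutting tree has height $\bO(k/\eps) = \bO(1)$. For a query ray, I locate the cell of the top-level cutting containing its apex in $\bO(\log m)$ time (standard point location on an arrangement of $\bO(r^2)$ segments), and then walk down the constant-depth tree, at each node following the child whose cell contains the apex and marking, as canonical subsets, the sets of lines that separate the apex from the ``wrong'' side inside the node's cell. Because the tree has constant depth and constant fan-out per branching decision relevant to one ray, the answer for a single ray is assembled from $\bO(1)$ canonical subsets, each represented by a node of the tree. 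On each such canonical subset I recursively build the same kind of cutting tree for the second query ray, and so on for $k$ levels. This is the standard multilevel construction.

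For the analysis, space and preprocessing follow by induction on $k$: at each level the sum of canonical-subset sizes is $\bO(m^{2+\eps/(2k) \cdot 2k}) = \bO(m^{2+\eps})$ after absorbing the $\polylog$ factors into the $\eps$ exponent, exactly as in Chazelle et al.~\cite{chazelle92fast}; since $k$ is constant, the recursion multiplies the bound by a constant. For the query, each level contributes one point location costing $\bO(\log m)$ plus a constant-depth descent, so the total query time is $\bO(k \log m) = \bO(\log m)$, and the total number of canonical subsets returned is $\bO(1)^k = \bO(1)$. The only subtle point, and the one I would double-check carefully, is that Chazelle et al.'s ``constant number of canonical subsets'' property is preserved under multilevel nesting when $k$ is constant and the cutting parameter is chosen as above; this is precisely the argument recalled in the paragraph preceding the statement, so invoking it finishes the proof.
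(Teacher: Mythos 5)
Your proposal is correct and matches the paper's justification: the paper states this lemma without a separate proof, relying on exactly the standard Clarkson/Chazelle multilevel cutting-tree machinery (with \(r = m^{\eps/2}\) giving constant-depth levels and \(\bO(1)\) canonical subsets per level) that it summarises in the preceding paragraph and that you reconstruct in detail. The reduction of point-with-direction queries to vertical-ray queries and the inductive space/query analysis are all as intended.
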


Dualising the problem in the usual way, we can alternatively report or count
points from the set \(A\) that lie in a query half-plane; or in the intersection
of several half-planes, using a multilevel cutting tree.

\begin{lemma}\label{lem:multilevel-cutting-tree-1}
Let \(A\) be a set of \(m\) points and let \(k\) be a constant.
In time \(\bO(m^{2 + \eps})\), we can construct a data structure using
\(\bO(m^{2 + \eps})\) storage that returns \(\bO(1)\) canonical subsets with the
points in \(A\) that lie in the intersection of the \(k\) query half-planes in
time \(\bO(\log m)\).
\end{lemma}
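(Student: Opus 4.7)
The plan is to obtain this result by the standard point--line duality applied to \cref{lem:multilevel-cutting-tree-2}. Fix a duality transform that maps a point $a = (a_x, a_y)$ to the line $a^\ast\colon y = a_x x - a_y$ and a non-vertical line $\ell\colon y = \alpha x - \beta$ to the point $\ell^\ast = (\alpha, \beta)$. Under this transform, the point $a$ lies above (respectively below) the line $\ell$ if and only if the dual point $\ell^\ast$ lies below (respectively above) the dual line $a^\ast$. Hence, asking whether a fixed $a \in A$ lies in a query half-plane with boundary line $\ell$ translates into asking on which side of the dual line $a^\ast$ the single dual point $\ell^\ast$ lies.

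First, I would dualise the input: let $\Lns = \{a^\ast : a \in A\}$, which is a set of $m$ lines. Given a query consisting of $k$ half-planes bounded by lines $\ell_1, \dots, \ell_k$ with prescribed sides, I would dualise each $\ell_i$ to a point $\ell_i^\ast$ together with the appropriate above/below label (flipped by the duality as described above). A point $a \in A$ lies in the intersection of all $k$ query half-planes if and only if the dual line $a^\ast$ lies on the indicated side of each of the $k$ dual query points $\ell_1^\ast, \dots, \ell_k^\ast$. This is exactly the query supported by \cref{lem:multilevel-cutting-tree-2} applied to $\Lns$.

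Next, I would invoke \cref{lem:multilevel-cutting-tree-2} on $\Lns$ with the same constant $k$: in $\bO(m^{2+\eps})$ time we obtain a data structure of size $\bO(m^{2+\eps})$ which, in $\bO(\log m)$ time, returns $\bO(1)$ canonical subsets of $\Lns$ whose union is exactly the set of dual lines lying on the correct side of every query point. Translating each canonical subset of dual lines back to the corresponding canonical subset of original points in $A$ is purely notational (the bijection $a \leftrightarrow a^\ast$ is fixed at preprocessing time and can be stored with pointers in the nodes of the tree). The asserted space, preprocessing, and query bounds follow verbatim.

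The only minor technical point to handle is the case of vertical boundary lines in the query, which fall outside the chosen duality; this is a routine issue and can be addressed either by a symbolic tilt or, more cleanly, by maintaining a second cutting tree built under a rotated coordinate system and routing each query to the appropriate structure. With that caveat dispatched, the lemma follows directly from \cref{lem:multilevel-cutting-tree-2} via duality, so no genuine obstacle arises; the value of stating the lemma separately is that it packages the dualised view used repeatedly in the later sections.
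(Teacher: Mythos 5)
Your proposal is correct and matches the paper's approach exactly: the paper justifies this lemma in one line ("dualising the problem in the usual way") as a direct consequence of \cref{lem:multilevel-cutting-tree-2}, which is precisely the duality reduction you spell out, including the side-flipping under the standard transform. Your extra care with vertical boundary lines is a routine detail the paper leaves implicit, and it does not change the argument.
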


We can use these basic results to resolve more complicated queries; the
techniques are similar and are shown in \cref{lem:mutual_cones}.
See \cref{fig:cutting_tree} for an illustration.

\begin{lemma}\label{lem:mutual_cones}
Let \(A\) be a set of \(m\) points, each of them an apex of a cone; and let the
query point be \(q\), again an apex of a cone \(C_q\).
In time \(\bO(m^{2 + \eps})\), we can construct a data structure using
\(\bO(m^{2 + \eps})\) storage that returns a representation of the points in
\(A' \subseteq A\), such that for any \(p \in A'\), \(q\) lies in the cone of
\(p\) and \(p \in C_q\).
The points are returned as \(\bO(1)\) canonical subsets, and the query time is
\(\bO(\log m)\).
Alternatively, the points can be counted.
\end{lemma}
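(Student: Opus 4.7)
The plan is to decompose each cone predicate into half-plane predicates and then handle all of them in a single constant-level multi-level cutting tree.

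First I would make the following reduction. For any $p \in A$ whose cone $C_p$ is bounded by two rays from $p$, the condition ``$q \in C_p$'' is the conjunction of two half-plane predicates: $q$ must lie on a specific side of each of the two lines $\ell_1(p), \ell_2(p)$ through $p$ that support the bounding rays. Symmetrically, ``$p \in C_q$'' is a conjunction of two half-plane predicates on $p$, but here the bounding lines pass through $q$ and are only revealed at query time. So the compound predicate to evaluate for each $p$ is the conjunction of four half-plane tests, two of which concern lines attached to $p$ (queried with $q$) and two of which concern lines attached to $q$ (queried with $p$).

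Next I would build a four-level cutting tree that enforces these four predicates in sequence. The outer two levels are responsible for conditions of the first kind: level one is the cutting tree of \cref{lem:multilevel-cutting-tree-2} on the set of first bounding lines $\{\ell_1(p) : p \in A\}$, oriented consistently and queried with $q$; each of the $\bO(1)$ canonical subsets it returns carries a recursive cutting tree on the corresponding second bounding lines $\{\ell_2(p)\}$, again queried with $q$. The inner two levels handle the conditions of the second kind: each canonical subset emitted by the outer two levels carries a two-level cutting tree of \cref{lem:multilevel-cutting-tree-1} built on the points $p$ themselves, filtering by the two half-planes defining $C_q$ once they are known. With each canonical subset I would additionally store its cardinality so that the counting variant can sum $\bO(1)$ values at query time.

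For the complexity bounds I would invoke the construction of Chazelle et al.\ with branching parameter $r = m^{\eps/2}$ at each level: adding a level multiplies storage and preprocessing by a subpolynomial factor and keeps the number of canonical subsets touched per query at $\bO(1)$, so a constant number of levels still fits in $\bO(m^{2+\eps})$ space and preprocessing, with total query time $\bO(\log m)$.

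The step I expect to need the most care is bookkeeping rather than geometry: each bounding line of each cone has to be oriented so that ``correct side of the line'' really means ``inside the cone'' and not its complement, and the correspondence between a point $p$ and its two lines $\ell_1(p), \ell_2(p)$ must be preserved when switching between primal and dual representations across the four levels. Once a labelling of the two bounding lines and their half-plane orientations is fixed for every cone and is carried along to the canonical subsets, correctness is immediate from \cref{lem:multilevel-cutting-tree-2,lem:multilevel-cutting-tree-1}.
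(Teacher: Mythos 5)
Your proposal is correct and matches the paper's proof essentially verbatim: both build a four-level cutting tree in which each cone-membership condition is split into two half-plane predicates, handled via \cref{lem:multilevel-cutting-tree-1,lem:multilevel-cutting-tree-2} with the \(r = m^{\sfrac{\eps}{2}}\) branching of Chazelle et al.\ to keep \(\bO(1)\) canonical subsets per level and \(\bO(m^{2+\eps})\) space. The only (immaterial) difference is the order of the levels --- the paper filters by \(p \in C_q\) first and then by \(q \in C_p\), while you do the reverse.
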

\begin{proof}
We can construct a four-level cutting tree; the first two levels can select the
nodes that represent points from \(A\) lying in \(C_q\).
Note that to select the points that lie in \(C_q\), we need to perform two
consecutive half-plane queries, as \(C_q\) is an intersection of two half-planes
that meet at point \(q\).
We can use \cref{lem:multilevel-cutting-tree-1} to handle these; note that every
time we get a constant number of canonical subsets, so any new point location
queries can be done in \(\bO(\log m)\) time on each level.
After two levels, we get \(\bO(1)\) canonical subsets.
The next two levels handle the other condition: select the points whose cones
contain \(q\).
This can be done by checking that \(q\) lies below the upper boundaries of the
cones and that \(q\) lies above the lower boundaries of the cones.
Again, we need to do point location queries on each level and for each canonical
subset; we can use \cref{lem:multilevel-cutting-tree-2} to see that we still
have a constant number of those.
Overall, we do a constant number of point location queries and go down a
four-level data structure, where every level is a constant-depth tree.
Therefore, the query overall takes \(\bO(\log m)\) time.
As stated previously, adding the levels does not increase the storage or the
preprocessing time requirements.
\end{proof}

\begin{figure}[tb]
\centering
\includegraphics{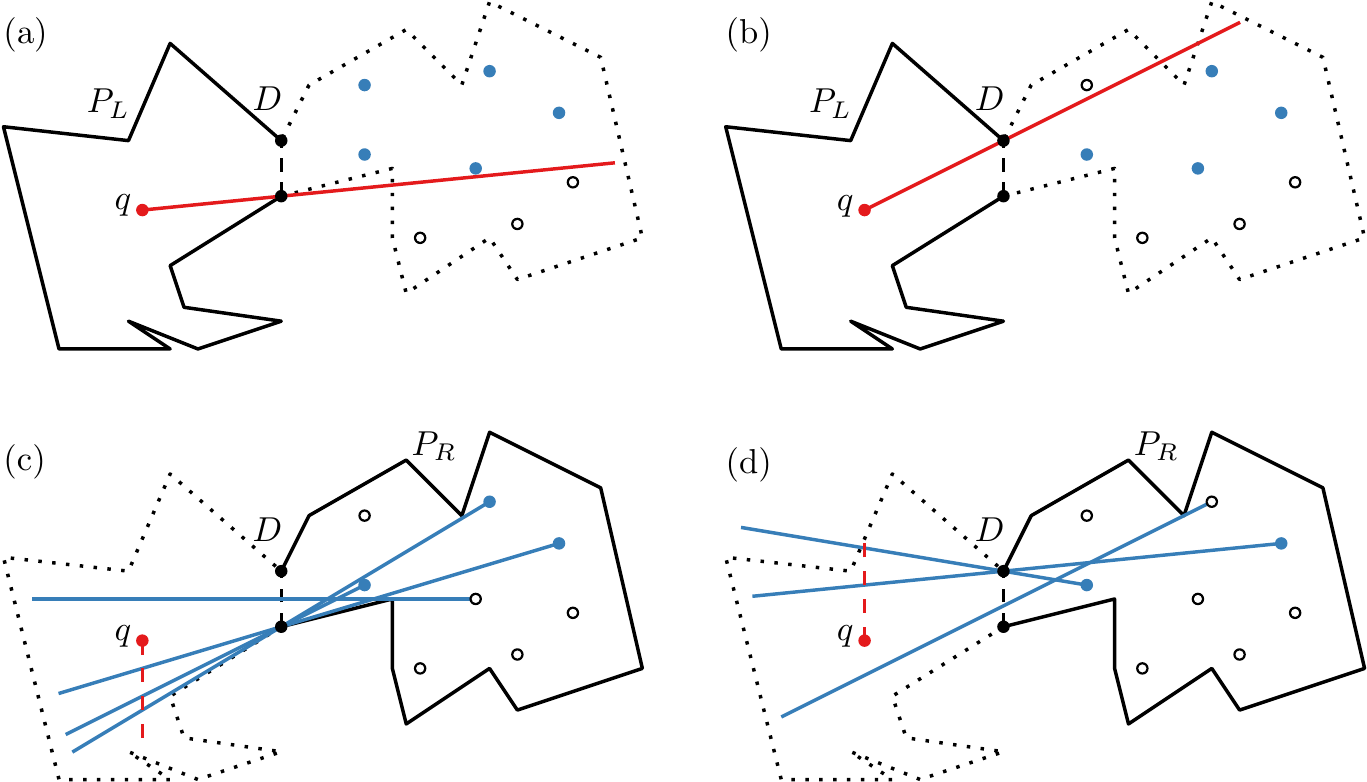}
\caption{A query in a multilevel cutting tree, top left to bottom right.
The query point is red; the selected points of \(A\) are blue.
Black outline shows the relevant part of the polygon.
(a,~b)~We select points in \(A\) above (resp.\@ below) the right (resp.\@ left)
cone boundary of \(q\).
(c,~d)~We refine by taking points whose left (resp.\@ right) cone boundary is
below (resp.\@ above) \(q\).}
\label{fig:cutting_tree}
\end{figure}

\begin{lemma}\label{lem:separated-cones-seg-ds}
Let \(L\) be a vertical line and let \(A\) be a set of \(m\) cones starting left
of \(L\) and whose top and bottom rays intersect \(L\).
In time \(\bO(m^{2 + \eps})\), we can construct two two-level cutting trees for
\(A\) of total size \(\bO(m^{2 + \eps})\), so that for a query segment
\(\queryseg\) that is fully to the right of \(L\), we can count the number of
cones that contain or intersect \(\queryseg\) in \(\bO(\log m)\) time.
\end{lemma}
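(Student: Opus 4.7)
The plan is to count the complement: return $m$ minus the number of cones in $A$ that are disjoint from $\queryseg$. The key geometric observation is that for a cone $C \in A$, the portion of $C$ lying in the right half-plane of $L$ is bounded above by the upper ray $u_C$ and below by the lower ray $\ell_C$. Its complement inside the right half-plane splits into two disjoint unbounded pieces: the region strictly above the line through $u_C$, and the region strictly below the line through $\ell_C$. These pieces are disjoint because the two lines through $u_C$ and $\ell_C$ meet only at the apex of $C$, which lies to the left of $L$; hence in the right half-plane, the line through $u_C$ lies strictly above the line through $\ell_C$, so no point right of $L$ can be simultaneously above the former and below the latter.

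Since $\queryseg$ is connected and lies entirely in the right half-plane, $\queryseg$ is disjoint from $C$ if and only if $\queryseg$ is contained in one of these two complementary regions. Because each region is a half-plane intersected with the right half-plane, containment of $\queryseg$ reduces to containment of its two endpoints. Letting $p$ and $q$ denote the endpoints of $\queryseg$, $C$ is disjoint from $\queryseg$ if and only if both $p$ and $q$ lie above the line through $u_C$, or both lie below the line through $\ell_C$; and by the preceding observation these two alternatives are mutually exclusive. Thus the number of cones disjoint from $\queryseg$ is exactly $N_u + N_\ell$, where $N_u$ and $N_\ell$ count the cones satisfying the first and second condition, respectively.

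To compute $N_u$ I would build a two-level cutting tree on the set of lines containing the upper rays $u_C$, $C \in A$; by \cref{lem:multilevel-cutting-tree-2}, used with $k = 2$ and both query points flagged as lying below the line, this structure supports such counting queries in $\bO(\log m)$ time, using $\bO(m^{2+\eps})$ space and preprocessing. A symmetric two-level cutting tree on the lines through the rays $\ell_C$ computes $N_\ell$ analogously. To answer a query, I query both structures with $p$ and $q$ and return $m - N_u - N_\ell$ in $\bO(\log m)$ total time. The total space and preprocessing remain within the claimed $\bO(m^{2+\eps})$ bound.

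The main obstacle is the geometric characterisation above, specifically the mutual exclusivity of the two disjointness cases: this relies crucially on the fact that all apices lie on the left side of $L$ while $\queryseg$ lies on the right, so that the lines through $u_C$ and $\ell_C$ do not swap vertical order over the query segment. Without this separation, the lines could interleave and one would need inclusion--exclusion with additional cutting trees; once it is in place, the rest of the argument is a direct invocation of \cref{lem:multilevel-cutting-tree-2}.
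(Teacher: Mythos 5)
Your proposal is correct and takes essentially the same approach as the paper: both characterise disjointness of \(\queryseg\) from a cone by requiring that both endpoints lie above the supporting line of the upper ray or both below that of the lower ray, store these supporting lines in two two-level cutting trees, and subtract the count of disjoint cones from \(m\) (your explicit mutual-exclusivity argument is left implicit in the paper). The only blemish is a wording slip: to compute \(N_u\) the endpoints should be flagged so that the upper supporting line lies \emph{below} both query points (i.e.\@ the points lie above the line), not the reverse.
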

\begin{proof}
A cone \(C \in A\) partitions the space to the right of \(L\) in three regions:
the regions above and below \(C\) and the region inside \(C\).
Segment \(\queryseg\) does not intersect \(C\) if it is contained in either the
top or the bottom region.
This is exactly when either both points of \(\queryseg\) are above the
supporting line of the upper boundary of \(C\) or when both are below the
supporting line of the lower boundary of \(C\).
Hence, if we store the supporting lines of the upper and lower boundaries of
\(A\) in two two-level cutting trees, similarly to \cref{lem:mutual_cones}, we
can count the number of cones that are not visible for \(\queryseg\).
By storing the total number of cones, we can now determine the number of visible
cones.
\end{proof}

\begin{lemma}\label{lem:count-segment-line-intersections}
Let \(\Lns\) be a set of \(m\) lines and \(\queryseg\) a query line segment.
We can store \(\Lns\) in a multilevel cutting tree, using \(\bO(m^{2 + \eps})\)
space and preprocessing time, so that we can count the number of lines in
\(\Lns\) intersected by \(\queryseg\) in time \(\bO(\log m)\).
\end{lemma}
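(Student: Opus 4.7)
The key observation is that a line $\ell \in \Lns$ is intersected by the segment $\queryseg$ with endpoints $p$ and $q$ if and only if $p$ and $q$ lie on opposite sides of $\ell$ (with ties, e.g., a point exactly on $\ell$, handled by a standard symbolic perturbation). Hence if we write $N_{ab}$ for the number of lines of $\Lns$ that pass above $p$ and below $q$, and $N_{ba}$ for those that pass below $p$ and above $q$, the answer is simply $N_{ab}+N_{ba}$.

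Each of $N_{ab}$ and $N_{ba}$ is precisely the type of query covered by \cref{lem:multilevel-cutting-tree-2} with $k=2$: we are given two query points together with an ``above/below'' flag for each, and we want to report (or count) the lines in $\Lns$ that lie on the prescribed side of both. I would therefore build two two-level cutting trees on $\Lns$: one whose first level selects lines above a query point and whose second level refines to lines below a second query point, and a symmetric one with the sides swapped. By \cref{lem:multilevel-cutting-tree-2}, each tree uses $\bO(m^{2+\eps})$ space, is built in $\bO(m^{2+\eps})$ time, and returns $\bO(1)$ canonical subsets in $\bO(\log m)$ time. Storing the size of every canonical subset at its representative node turns the two reporting queries into counting queries without changing the asymptotic bounds.

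To answer a query on $\queryseg = \seg{pq}$, I query the first tree with $(p,\text{above})$ and $(q,\text{below})$ to get $N_{ab}$, and the second tree with $(p,\text{below})$ and $(q,\text{above})$ to get $N_{ba}$, in each case summing the precomputed sizes of the $\bO(1)$ returned canonical subsets. The final answer is $N_{ab}+N_{ba}$. The total space, preprocessing time, and query time are dominated by the two cutting trees, giving the claimed $\bO(m^{2+\eps})$ space and preprocessing and $\bO(\log m)$ query time.

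There is no real obstacle here; the statement is essentially a direct repackaging of \cref{lem:multilevel-cutting-tree-2} via the ``opposite sides'' characterization of line--segment intersection. The only delicate point is ensuring that ``above/below'' queries are defined consistently when a query endpoint happens to lie on a line of $\Lns$, which is resolved by a standard general-position assumption or a consistent tie-breaking rule so that each such line is counted exactly once.
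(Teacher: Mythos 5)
Your proposal is correct and matches the approach the paper intends: the lemma is stated without an explicit proof precisely because it is a direct application of \cref{lem:multilevel-cutting-tree-2} with \(k = 2\), using the standard observation that a line meets \(\queryseg\) exactly when the two endpoints lie on opposite sides of it (this is also how the paper invokes it inside the proof of \cref{lem:ds_viscone}). Your only cosmetic deviation is building two separate two-level trees rather than querying one structure twice with swapped above/below flags, which does not affect the bounds.
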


\subparagraph*{Polygon decomposition.}
For a simple polygon \(P\) on \(n\) vertices, Chazelle~\cite{chazelle82}
shows that we can construct a balanced hierarchical decomposition of \(P\) by
recursively splitting the polygon into two subpolygons of approximately
equal size.
The polygon is split on diagonals between two vertices of the polygon.
The recursion stops when reaching triangles.
The decomposition can be computed in \(\bO(n)\) time and stored using \(\bO(n)\)
space in a balanced binary tree.

\subparagraph*{Hourglasses and the shortest path data structure.}
An \emph{hourglass} for two segments \(\seg{pq}\) and \(\seg{rs}\) in a simple
polygon \(P\) is the union of geodesic shortest paths in \(P\) from a point on
\(\seg{pq}\) to a point on \(\seg{rs}\)~\cite{guibas89}.
If the upper chain and lower chain of an hourglass share vertices, it is
\emph{closed,} otherwise it is \emph{open.}
A \emph{visibility glass} is a subset of the hourglass consisting of all line
segments between a point on \(\seg{pq}\) and a point on
\(\seg{rs}\)~\cite{eades20}.

Guibas and Hershberger~\cite{guibas89}, with later
improvements~\cite{hershberger91}, describe a data structure to compute shortest
paths in a simple polygon \(P\).
They use the polygon decomposition by Chazelle~\cite{chazelle82} and also store
hourglasses between the splitting diagonals of the decomposition.
The data structure uses \(\bO(n)\) storage and preprocessing time and can answer
the following queries in \(\bO(\log n)\) time:
\begin{description}
\item[Shortest path query.]
Given points \(p, q \in P\), return the geodesic shortest path between \(p\) and
\(q\) in \(P\) as a set of \(\bO(\log n)\) nodes of the decomposition.
The shortest path between \(p\) and \(q\) is a concatenation of the polygonal
chains of the boundaries of the (open or closed) hourglasses in these
\(\bO(\log n)\) nodes together with at most \(\bO(\log n)\) segments called
\emph{bridges} connecting two hourglass boundaries.
\item[Segment location query.]
Given a segment \(\seg{pq}\), return the two leaf triangles containing \(p\) and
\(q\) in the decomposition and the \(\bO(\log n)\) pairwise disjoint open
hourglasses such that the two leaf triangles and hourglasses fully cover
\(\seg{pq}\).
We refer to the returned structure as the \emph{polygon cover} of \(\seg{pq}\).
\item[Cone query.]
Given a point \(s\) and a line segment \(\seg{pq}\) in \(P\), return the
visibility cone from \(s\) through \(\seg{pq}\).
This can be done by getting the shortest paths from \(s\) to \(p\) and \(q\)
and taking the two segments closest to \(s\) to extend them into a cone.
\end{description}

\section{Point Queries}\label{sec:point}
In this \lcnamecref{sec:point}, given a set \(A\) of \(m\) points in a simple
polygon \(P\) on \(n\) vertices, we count the points of \(A\) that are in the
visibility polygon of a query point \(q \in P\).
We present two solutions: (i)~a simple arrangement-based approach that is also
applicable to the case where \(A\) contains line segments, which achieves a very
fast query time at the cost of large storage and preprocessing time; and (ii)~a
cutting-tree-based approach with query times slower by a factor of
\(\bO(\log n)\), but with much better storage requirements and preprocessing
time.

\subsection{Point Location in an Arrangement}\label{sec:arrangement}
We first consider a straightforward approach that relies on the following
observation: the number of objects in \(A\) visible from a query point \(q\) is
equal to the number of (weak) visibility polygons of the objects in \(A\)
stabbed by \(q\).
Hence, we can construct all (weak) visibility polygons of the objects in \(A\)
and compute the arrangement \(\Arr\) of the edges of these polygons.
For each cell \(C\) in the arrangement, we store the number of visibility
polygons that contain \(C\).
Then a point location query for \(q\) yields the number of visible objects in
\(A\).

Computing the visibility polygons takes \(\bO(nm)\) time, and constructing the
arrangement using an output-sensitive line segment intersection algorithm takes
\(\bO(nm\log nm + \SetSize{\Arr})\) time~\cite{chazelle92intls}, where
\(\SetSize{\Arr}\) is the number of vertices of \(\Arr\).
Building a point location structure on \(\Arr\) for
\(\bO(\log \SetSize{\Arr})\)-time point location queries takes
\(\bO(\SetSize{\Arr})\) time~\cite{kirkpatrick83}.
The space usage is \(\bO(\SetSize{\Arr})\).
As Bose et al.~\cite{bose02} show, the complexity of \(\Arr\) is
\(\Theta(nm^2)\) in the worst case.

\theoremArrangement*

\subsection{Hierarchical Decomposition}\label{sec:point_point}
To design a data structure that uses less storage than that of
\cref{sec:arrangement}, we observe that if we subdivide the polygon, we can
count the number of visible objects by summing up the number of visible objects
residing in the cells of the subdivision.
To efficiently compute these counts, we use the polygon decomposition approach,
as described in \cref{sec:prelims}.
With each split in our decomposition, we store data structures that can
efficiently count the number of visible objects in the associated subpolygon.

\subparagraph*{Cone containment.}
Let us solve the following problem first.
We are given a simple polygon \(P\) and a (w.l.o.g.) vertical diagonal \(D\)
that splits it into two simple polygons \(P_L\) and \(P_R\).
Furthermore, we are given a set \(A\) of \(m\) points in \(P_L\).
Given a query point \(q\) in \(P_R\), we want to count the points in \(A\) that
see \(q\).
We base our approach on the following observation.

\begin{figure}
\centering
\includegraphics{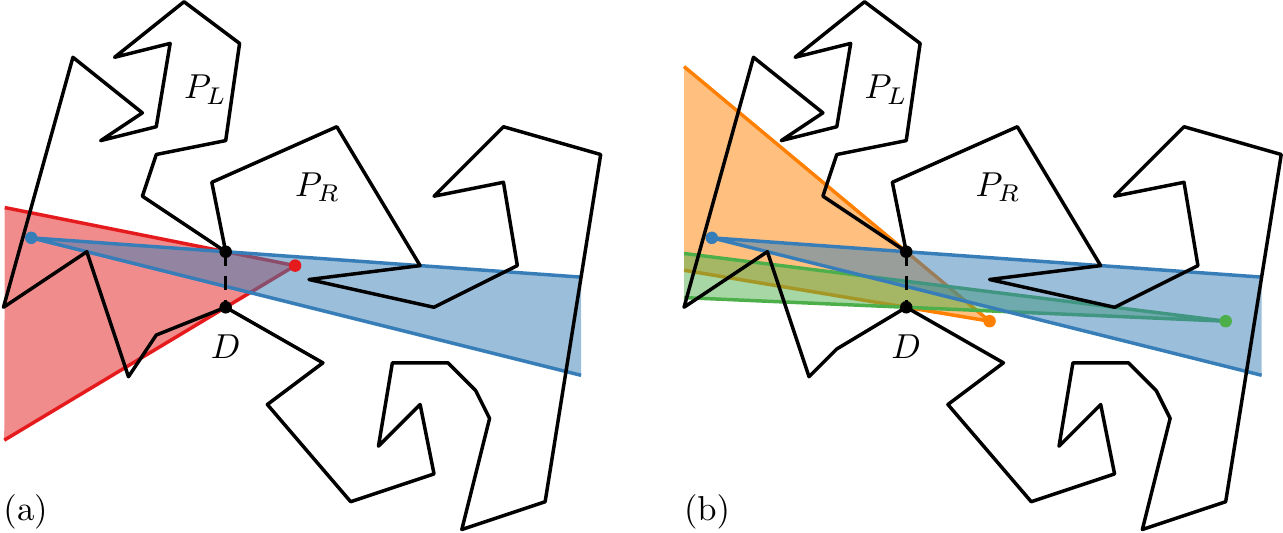}
\caption{Visibility cones (coloured regions) of (coloured) points w.r.t.\@ some
diagonal \(D\).
(a)~Blue and red are mutually visible.
(b)~Green and blue cannot see each other, nor can orange and blue.}
\label{fig:visibility-cone}
\end{figure}

\begin{lemma}\label{lem:cones}
Given a simple polygon \(P\) split into two simple polygons \(P_L\) and \(P_R\)
by a diagonal \(D\) between two vertices and two points \(p \in P_L\) and
\(q \in P_R\), consider the visibility cones \(V(p, D, P_L)\) and
\(V(q, D, P_R)\), i.e.\@ the cones from \(p\) and \(q\) through \(D\) into the
other subpolygons.
Point \(p\) sees \(q\) in \(P\) if and only if \(q \in V(p, D, P_L)\) and
\(p \in V(q, D, P_R)\).
\end{lemma}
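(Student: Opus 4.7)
The plan is to prove the two implications separately, each by tracing the straight segment $\overline{pq}$ across the diagonal $D$ and invoking the definition of the visibility cone directly.

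\emph{Forward direction.} Suppose $p$ sees $q$, i.e.\ $\overline{pq}\subseteq P$. Because $D$ separates $P_L$ from $P_R$ inside $P$ and $p\in P_L$, $q\in P_R$, the segment $\overline{pq}$ crosses $D$ at a single point $x$. Then $\overline{px}$ lies in $\overline{P_L}$, so the ray from $p$ through $q$ reaches $D$ at $x$ without first meeting $\partial P_L$. By definition this ray, and hence the point $q$ on it, lies in $V(p,D,P_L)$. A symmetric argument inside $P_R$ yields $p\in V(q,D,P_R)$.

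\emph{Reverse direction.} Assume both cone memberships. From $q\in V(p,D,P_L)$, the ray from $p$ through $q$ hits $D$ at some point $x_1$, and the initial sub-segment $\overline{px_1}$ meets $\partial P_L$ only at $x_1$; since $P_L$ is a simple polygon and $p\in P_L$, this forces $\overline{px_1}\subseteq\overline{P_L}$. Symmetrically, $p\in V(q,D,P_R)$ gives a point $x_2\in D$ with $\overline{qx_2}\subseteq\overline{P_R}$. Both $x_1$ and $x_2$ lie on $D$ and on the line through $p$ and $q$; since $p$ and $q$ sit on opposite sides of $D$, this line is not parallel to $D$ and meets the segment $D$ in exactly one point, so $x_1=x_2=:x$. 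Hence
\[\overline{pq}=\overline{px}\cup\overline{xq}\subseteq\overline{P_L}\cup\overline{P_R}=P,\]
which means $p$ sees $q$.

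The only subtle step is the identification $x_1=x_2$ in the reverse direction; once this uniqueness is established (alternatively, \cref{cor:cone_continuous} could be cited to obtain a single intersection segment with $D$), both implications reduce to a direct unfolding of the definition of $V(\cdot,D,\cdot)$.
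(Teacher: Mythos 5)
Your proof is correct and takes essentially the same route as the paper's: both implications are proved by unfolding the definition of the visibility cone at the point where \(\overline{pq}\) crosses \(D\) and splitting the segment into a piece in \(\overline{P_L}\) and a piece in \(\overline{P_R}\). The only cosmetic difference is in the reverse direction, where the paper obtains the single crossing point \(r\) via convexity of the two cones, whereas you identify the intersections \(x_1\) and \(x_2\) of the two rays with \(D\) directly; both arguments are equally valid.
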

\begin{proof}
First suppose that \(p \in V(q, D, P_R)\) and \(q \in V(p, D, P_L)\).
We need to show that \(p\) and \(q\) see each other, that is, that the line
segment \(\seg{pq}\) lies in \(P\).
Observe that both \(p\) and \(q\) lie in \(V(p, D, P_L)\), and \(V(p, D, P_L)\)
is convex, so \(\seg{pq}\) lies in \(V(p, D, P_L)\); symmetrically, \(\seg{pq}\)
lies in \(V(q, D, P_R)\).
Furthermore, note that since both cones are cones through \(D\), the segment
\(\seg{pq}\) must cross \(D\) at some point \(r\).
Then by construction of \(V(p, D, P_L)\), the segment \(\seg{pr}\) lies entirely
in \(P_L\); similarly, \(\seg{rq}\) lies entirely in \(P_R\).
As also the diagonal \(D\) lies in \(P\), we conclude that \(\seg{pq}\) lies in
\(P\).

Now suppose that \(p\) and \(q\) see each other in \(P\).
As they are on the opposite sides of the diagonal \(D\) and the polygon \(P\) is
simple, the line segment \(\seg{pq}\) must cross \(D\) at some point \(r\).
As \(\seg{pq}\) lies inside \(P\), clearly, \(\seg{pr}\) lies inside \(P_L\) and
\(\seg{rq}\) lies inside \(P_R\).
Then the visibility cone \(V(p, D, P_L)\) must include the ray from \(p\)
through \(r\), and so \(q\) is in \(V(p, D, P_L)\); symmetrically, \(p\) is in
\(V(q, D, P_R)\).
\end{proof}

\Cref{lem:cones} shows that to count the number of points of \(A\) that see
\(q\), it suffices to construct the cones from all points in \(A\) through \(D\)
and the cone from \(q\) through \(D\) and count the number of points from \(A\)
satisfying the condition of \cref{lem:cones} (see also
\cref{fig:visibility-cone}).
The cones from all \(p \in A\) into \(P_R\) can be precomputed, so only the
cone from \(q\) into \(P_L\) needs to be computed at query time.
The query of this type can be realised using a multilevel cutting tree, as
explained in \cref{lem:mutual_cones}.
We also still need to compute the cone \(V(q, D, P_R)\) at query time and
precompute the cones \(V(p, D, P_L)\) for all \(p \in A\); we shall handle this
later.

\subparagraph*{Decomposition.}
Let us return to the original problem.
To solve it, we can use the balanced polygon decomposition~\cite{chazelle82},
as discussed in \cref{sec:prelims}.
Following Guibas and Hershberger~\cite{guibas89,hershberger91}, we represent it
as a binary tree (see \cref{fig:polygon-decomp}).
Observe that as long as there is some diagonal \(D\) separating our query point
from a subset of points of \(A\), we can use the approach above.

\begin{figure}
\centering
\includegraphics{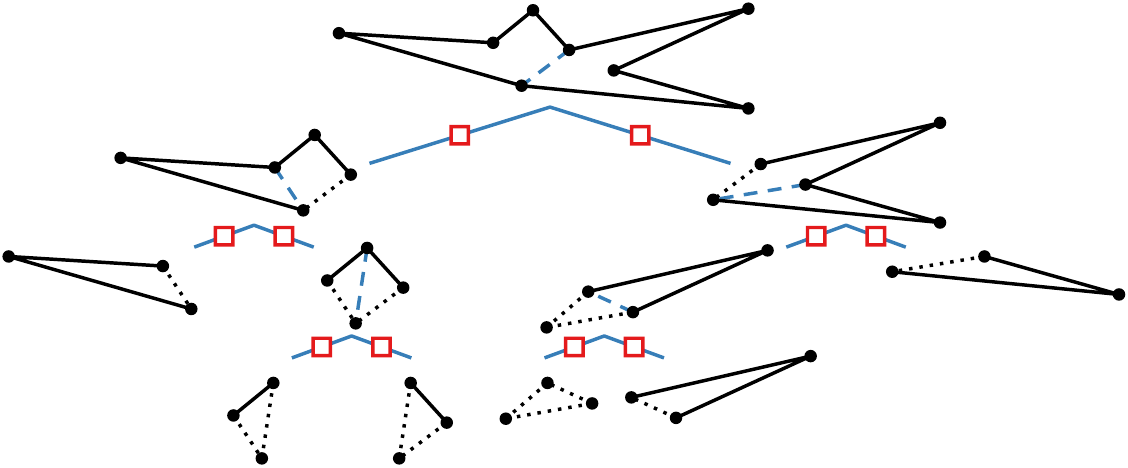}
\caption{Augmented polygon decomposition following the approach by
Chazelle~\cite{chazelle82}.
Each node corresponds to the splitting diagonal (blue dashed line).
Along the tree edges (blue lines), we store the multilevel cutting tree (red
box) for the polygon in the child using the diagonal of the parent.}
\label{fig:polygon-decomp}
\end{figure}

Every node of the tree is associated with a diagonal, and the two children
correspond to the left and the right subpolygon.
With each node, we store two data structures described above: one for the query
point to the left of the diagonal and one for the query to the right.

The query then proceeds as follows.
Suppose the polygon \(P\) is triangulated, and the triangles correspond to the
leaves in the decomposition.
Given a query point \(q\), find the triangle it belongs to; then traverse the
tree bottom up.
In the leaf, \(q\) can see all the points of \(A\) that are in the same
triangle, so we start with that count.
As we proceed up the tree, we query the correct associated data structure---if
\(q\) is to the right of the diagonal, we want to count the points to the left
of the diagonal in the current subpolygon that see \(q\).
It is easy to see that this way we end up with the total number of points in
\(A\) that see \(q\), since the subsets of \(A\) that we count are disjoint as
we move up the tree and cover the entire set \(A\).

\theoremPointPoint*
\begin{proof}
The correctness follows from the considerations above; it remains to analyse the
time and storage requirements.
For the query time, we do point location of the query point \(q\) in the
triangulation of \(P\) and make a single pass up the decomposition tree, making
queries in the associated multilevel cutting trees.
Clearly, the height of the tree is \(\bO(\log n)\).
At every level of the decomposition tree, we need to construct the visibility
cone from the query point to the current diagonal; this can be done in
\(\bO(\log n)\) time using the shortest path data structure for
\(P\)~\cite{guibas89,hershberger91}, as discussed in \cref{sec:prelims}.
Then we need to query the associated data structure, except at the leaf, where
we simply fetch the count.
The query then takes time
\[\sum_{i = 0}^{\bO(\log n)} \bO(\log m_i + \log n)
= \bO(\log n \log m + \log^2 n)\,,\]
where the sum of all \(m_i\) is at most \(m\).
For the storage requirements, we need to store the associated data structures on
in total \(m\) points at every level of the tree, as well as a single copy of
the shortest path data structure, yielding overall
\(\bO(n + m^{2 + \eps} \log n)\) storage.
Finally, we analyse the preprocessing time.
Triangulating a simple polygon takes \(\bO(n)\) time~\cite{chazelle91}.
Constructing the decomposition can be done in additional \(\bO(n)\)
time~\cite{guibas89}.
Constructing the associated data structures takes time \(\bO(m^{2 + \eps})\) per
level, so \(\bO(m^{2 + \eps} \log n)\) overall, after determining the visibility
cones for the points of \(A\) to all the relevant diagonals, which can be done
in time \(\bO(m \log^2 n)\), as each point of \(A\) occurs a constant number of
times per level of the decomposition, and constructing the cone takes
\(\bO(\log n)\) time.
Overall we need \(\bO(n + m^{2 + \eps} \log n + m \log^2 n)\) time.
\end{proof}

\begin{figure}
\centering
\includegraphics{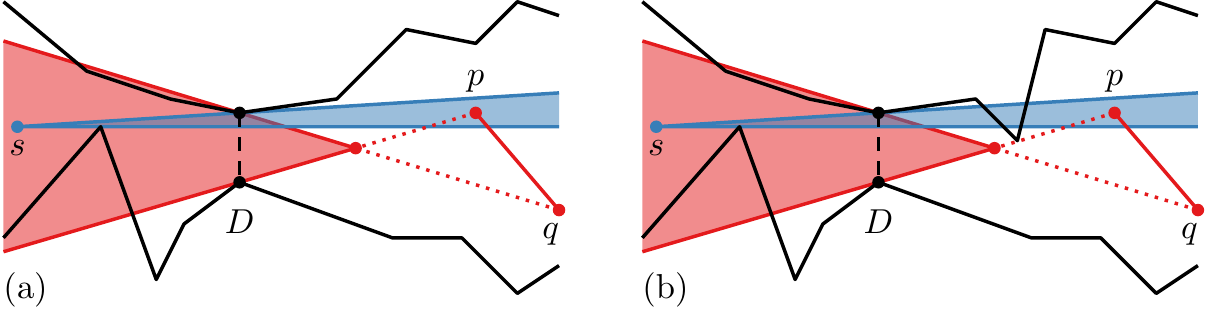}
\caption{(a)~For the cone that describes visibility of \(\seg{pq}\) through
\(D\), \cref{lem:cones} does not hold---there can be visibility without the
apices of the cones seeing each other.
(b)~The segment \(\seg{pq}\) intersects the cone of \(s\), and \(s\) is in the
cone of \(\seg{pq}\), but they cannot see each other, so testing intersection
between the objects and the cones also does not work directly.}
\label{fig:cone-lemma-fails-for-segments}
\end{figure}

\begin{remark}\label{rem:no_reuse}
While this approach uses many of the ideas needed to tackle the setting with
segment queries, \cref{lem:cones} does not apply---see
\cref{fig:cone-lemma-fails-for-segments}.
\end{remark}

\section{Segment Queries}\label{sec:point_segment}
In this \lcnamecref{sec:point_segment}, given a simple polygon \(P\) and a set
\(A\) of stationary entities (points) in \(P\), we construct a data structure
that efficiently counts the points in \(A\) visible from a query segment
\(\queryseg\).
We cannot reuse the approach of \cref{sec:arrangement}, since \(\queryseg\) may
intersect multiple arrangement cells, so the query time would depend on the
number of visible entities, as we need to keep the list to avoid double
counting; even if this issue were solved, we would need to sum up the values
from the \(\Omega(n)\) cells we might cross.
Therefore, we construct a new data structure using the insights of the
hierarchical decomposition of \cref{sec:point_point}.
That approach is also not directly usable, as discussed in \cref{rem:no_reuse}.

We use the data structure by Guibas and Hershberger~\cite{guibas89} (abbreviated
\spds) on \(P\), discussed in \cref{sec:prelims}, as the foundation.
For a given query \(\seg{pq}\), the \spds\ partitions \(P\) into four types of
regions (\cref{fig:dead_monkeys_alive_query_overview}): hourglasses (orange);
triangles that contain \(p\) or \(q\), denoted by \(T_L\) and \(T_R\) (blue);
regions that have as a border the upper or the lower chain of an hourglass,
referred to as \emph{side polygons} (green); and regions that have as a border
one of the edges of \(T_L\) or \(T_R\), referred to as \emph{end polygons}
(red).
The number of visible objects in \(A\) is the sum of the counts of objects in at
least one of the relevant hourglasses or triangles, plus the size of the set of
objects contained in a side or an end polygon that are visible to \(\seg{pq}\).
This allows us to subdivide the problem into tractable parts with strong
assumptions.

Counting the visible objects inside the relevant hourglasses and triangles is
easy, since all of them are visible.
For the side polygons, we make an observation regarding the conditions for an
object \emph{not} to be visible, and we subtract that count from the overall
count of points in the relevant side polygons.
Finally, for the end polygons, we make a case distinction on the way the
visibility cones of the objects cross the adjacent triangles, and use
inclusion--exclusion-style arguments to obtain the correct count.

\begin{figure}[tb]
\centering
\includegraphics{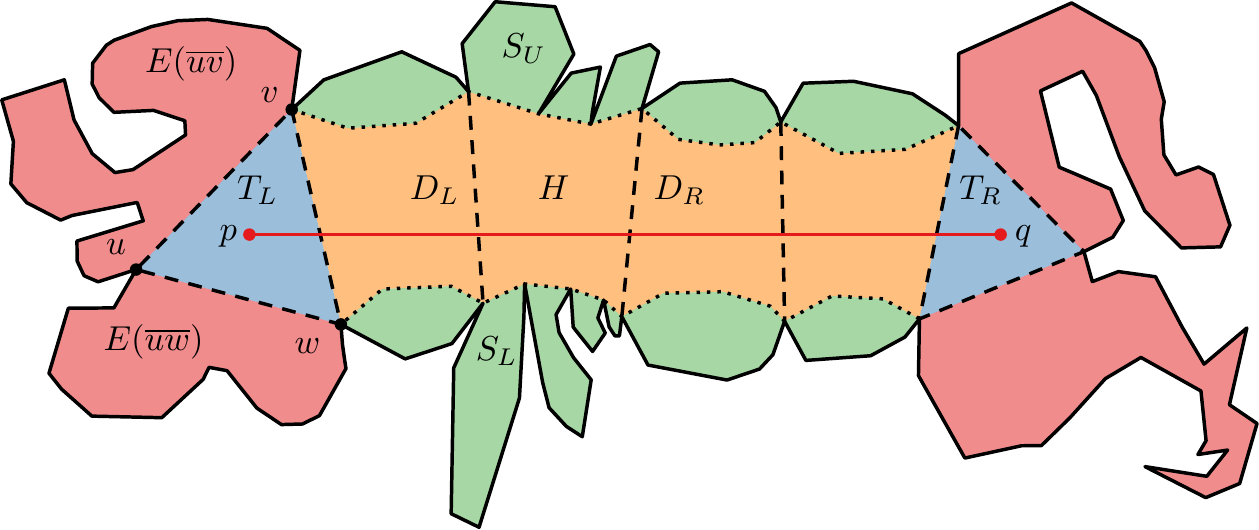}
\caption{Partitioning of the polygon based on the polygon cover of
\(\seg{pq}\).}
\label{fig:dead_monkeys_alive_query_overview}
\end{figure}

\subsection{The Data Structure}\label{sec:ps_ds}
In this \lcnamecref{sec:ps_ds}, we describe what our data structure stores and
how to compute it.
We start with some helpful observations, leading to a helper data structure.

\begin{lemma}\label{lem:ds_viscone}
Let \(H\) be an hourglass bounding a side polygon \(S\).
Denote the left diagonal of \(H\) by \(D_L\); and denote the polygon bounded by
\(D_L\) by \(P_L\).
Let \(\Rays\) be a set of visibility rays from objects in \(S\) into \(H\) that
exit \(H\) through \(D_L\).
(See \cref{fig:visibility-proof}.)
In time \(\bO(\SetSize{\Rays}^{2 + \eps})\), we can compute a data structure of
size \(\bO(\SetSize{\Rays}^{2 + \eps})\), so that given a query segment
\(\queryseg \subset P_L\), we can count the number of rays in \(\Rays\) that are
intersected by \(\queryseg\) in time \(\bO(\log \SetSize{\Rays})\).
\end{lemma}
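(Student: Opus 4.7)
The plan is to reduce the problem to the segment--line intersection counting handled by \cref{lem:count-segment-line-intersections}. The key geometric observation is that each $r \in \Rays$ originates at an object in $S$ and, by hypothesis, exits $H$ through $D_L$ into $P_L$; because $S$ and $P_L$ lie on opposite sides of the line supporting $D_L$, the origin of $r$ and any point of \queryseg\ end up in opposite closed half-planes with respect to that line.

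The central claim I would establish is: for any \queryseg $\subset P_L$ and any $r \in \Rays$, the segment \queryseg\ intersects $r$ if and only if \queryseg\ intersects the supporting line $\ell_r$ of $r$. The forward direction is immediate since $r \subseteq \ell_r$. For the converse, consider any $x \in \queryseg \cap \ell_r$; then $x \in P_L$, so $x$ lies on the opposite side of $D_L$ from the origin of $r$. But $r$ starts at its origin in $S$ and extends infinitely in the direction in which it exits $H$ through $D_L$, so the entire portion of $\ell_r$ lying on the $P_L$-side of $D_L$ is contained in $r$. Hence $x \in r$.

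Given the claim, counting the rays in $\Rays$ intersected by \queryseg\ is equivalent to counting the lines in $\{\ell_r : r \in \Rays\}$ intersected by \queryseg. I would therefore apply \cref{lem:count-segment-line-intersections} directly to this set of $\SetSize{\Rays}$ supporting lines, yielding a multilevel cutting tree with $\bO(\SetSize{\Rays}^{2 + \eps})$ preprocessing time and storage, and $\bO(\log \SetSize{\Rays})$ query time, which matches the bounds claimed in the statement.

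The only conceptual step here is the geometric reduction, i.e.\ recognising that once \queryseg\ is constrained to lie on the opposite side of $D_L$ from every ray's origin, the ray/line distinction becomes immaterial for testing intersection with \queryseg. I do not anticipate any serious obstacle; the only subtlety is routine care with degenerate configurations (for example, \queryseg\ touching $D_L$ at an endpoint, or a ray passing through an endpoint of \queryseg), but these are absorbed by the same argument and do not affect the asymptotic bounds.
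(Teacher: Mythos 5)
Your overall plan---replace each ray by its supporting line and query a multilevel cutting tree as in \cref{lem:count-segment-line-intersections}---is the same reduction the paper uses, and the data structure and bounds come out right. However, your central claim contains a genuine gap. You assert that because \(\queryseg \subset P_L\), every point of \(\queryseg\) lies in the closed half-plane on the opposite side of the supporting line of \(D_L\) from the ray origins. That is not true in general: \(P_L\) is a simple subpolygon with \(D_L\) as one edge, and a simple polygon need not be contained in a half-plane bounded by the line supporting one of its edges---it can wrap around (think of a spiral or C-shaped polygon), so \(\queryseg\) may well have points on the \(H\)-side of that line. For such a point \(x\), lying on \(\ell_r\) does not imply lying on \(r\): the portion of \(\ell_r\) on the \(H\)-side of the line consists of the piece of \(r\) between its apex and its crossing of \(D_L\) (which lies in \(S \cup H\), disjoint from \(P_L\), so \(\queryseg\) cannot meet it) together with the backward extension of \(\ell_r\) behind the apex, which is \emph{not} part of \(r\) but can perfectly well re-enter \(P_L\) and cross \(\queryseg\). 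In that configuration your structure counts \(r\) even though \(\queryseg \cap r = \emptyset\), and since this is an exact count, a single such ray already gives a wrong answer.

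The missing idea is a clipping step, which is exactly what the paper does and costs nothing asymptotically: first observe that any intersection of \(\queryseg\) with a ray of \(\Rays\) must occur at a point of \(\queryseg\) on the \(P_L\)-side of the supporting line of \(D_L\) (for the reason above), clip \(\queryseg\) in \(\bO(1)\) time to the subsegment \(\seg{ps}\) on that side, and only then apply your equivalence---which is valid for \(\seg{ps}\), because the entire portion of \(\ell_r\) on that side of the line is contained in \(r\). With that amendment your argument matches the paper's proof; the ``routine degenerate configurations'' you flag are not where the real subtlety lies.
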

\begin{proof}
Suppose that some ray \(R \in \Rays\) intersects \(\queryseg\).
Since \(R\) is a visibility ray into \(H\) that intersects \(D_L\), it extends
into \(P_L\) and its apex is to the right of the supporting line of \(D_L\).
Since \(\queryseg \subset P_L\), it follows that \(R\) can only intersect
\(\queryseg\) left of the supporting line of \(D_L\).

Let \(\seg{ps}\) be the subsegment of \(\queryseg\) that is left of this line.
We can compute \(\seg{ps}\) in constant time.
The segment \(\seg{ps}\) intersects \(R\) if and only if it intersects the
supporting line of \(R\).
Testing for the intersection between a preprocessed set of lines and a query
segment can be done with a two-level cutting tree
(\cref{lem:multilevel-cutting-tree-2}), concluding the proof.
\end{proof}

\begin{figure}
\centering
\includegraphics{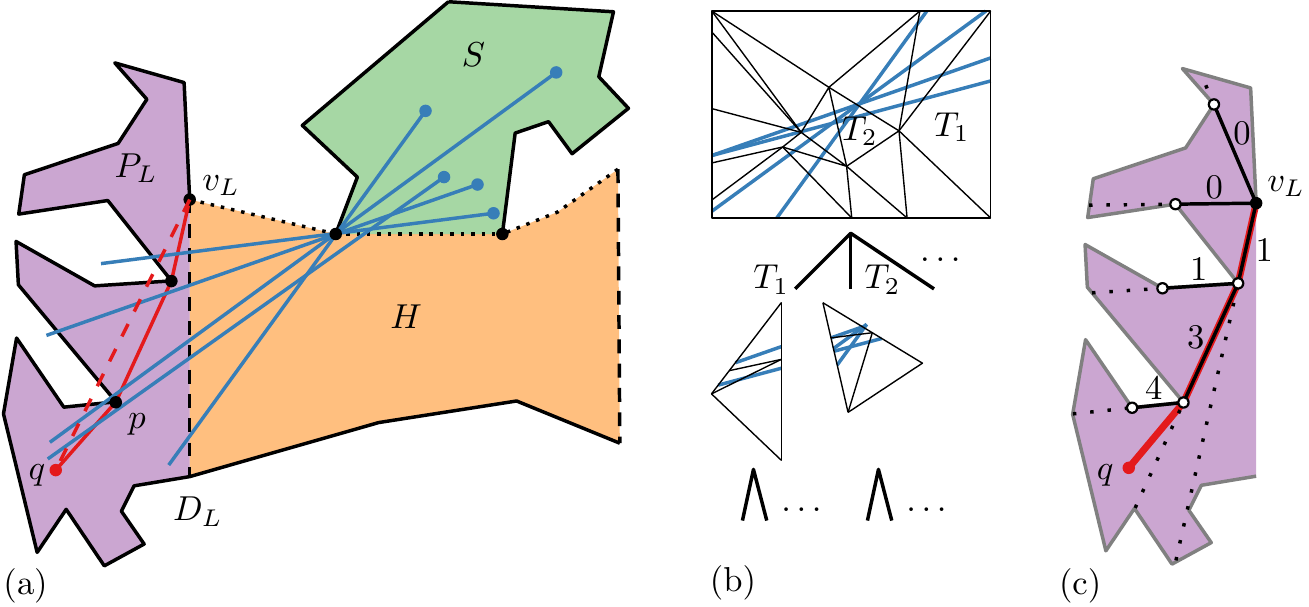}
\caption{(a)~We want to count the blue rays intersecting the shortest path
between \(q\) and \(v_L\).
(b)~We store a multilevel cutting tree to query ray intersections with
\(\seg{pq}\) and \(\seg{qv_L}\).
(c)~We store a shortest path map to count the rays intersecting the shortest
path from \(v_L\) to \(p\).
In this case, we count \(\sfrac{1}{2}\cdot (3 + 1 + 4) = 4\).}
\label{fig:visibility-proof}
\end{figure}

\begin{lemma}\label{lem:sp_intersection_ds}
Let \(H\) be an hourglass bounding a side polygon \(S\).
Denote the left diagonal of \(H\) by \(D_L\); and denote the polygon bounded by
\(D_L\) by \(P_L\).
Let \(\Rays\) be a set of visibility rays from objects in \(S\) into \(H\) that
exit \(H\) through \(D_L\).
Finally, denote the leftmost vertex of the convex chain separating \(H\) from
\(S\) by \(v_L\).
(See \cref{fig:visibility-proof}.)
Given a query point \(q \in P_L\) to the left of the supporting line of \(D_L\),
whose shortest path to \(v_L\) in \(P_L\) forms an upwards convex chain, we wish
to count the rays in \(\Rays\) that intersect this convex chain.
In time \(\bO(\SetSize{\Rays}^{2 + \eps} + \SetSize{P_L} \log\SetSize{\Rays})\),
we can compute a data structure of size
\(\bO(\SetSize{\Rays}^{2 + \eps} + \SetSize{P_L})\) that can answer such queries
in time \(\bO(\log\SetSize{\Rays}\SetSize{P_L})\).
\end{lemma}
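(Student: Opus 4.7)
My plan is a parity argument based on the Jordan curve theorem: I turn the shortest path into one side of a simple closed curve whose other side is the straight chord from $q$ to $v_L$. Let $u_f$ denote the first vertex after $q$ on the shortest path $C$ from $q$ to $v_L$ in $P_L$, so $C = \seg{qu_f} \cup \pi(u_f, v_L)$, where $\pi(u_f, v_L)$ is the path from $u_f$ to $v_L$ in the shortest path tree rooted at $v_L$. Since $C$ is upward convex, $C$ together with its chord $\seg{qv_L}$ bounds a convex region, so $\partial R = \seg{qu_f} \cup \pi(u_f, v_L) \cup \seg{qv_L}$ is a simple closed convex curve, and any line crosses it transversally in $0$ or $2$ points. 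A line with two crossings must have at least one of them on $C$, since $\seg{qv_L}$ is a single segment and contributes at most one crossing, while a line with zero crossings avoids $C$ entirely. Hence a ray in $\Rays$ (identified with its supporting line by \cref{lem:ds_viscone}, because $C$ and $\seg{qv_L}$ lie to the left of $D_L$) intersects $C$ exactly when its supporting line contributes two crossings to $\partial R$, and summing crossings with multiplicity over the three pieces counts each such ray exactly twice:
\[\SetSize{\{R \in \Rays : R \cap C \neq \emptyset\}} = \tfrac{1}{2}\bigl(N(\seg{qu_f}) + h(u_f) + N(\seg{qv_L})\bigr),\]
where $N(s)$ is the number of rays crossing segment $s$, and $h(u) \coloneqq \sum_{e \in \pi(u, v_L)} N(e)$.

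The preprocessing has three parts. First, build the cutting tree of \cref{lem:ds_viscone} on $\Rays$ in $\bO(\SetSize{\Rays}^{2+\eps})$ time and space, supporting $N(s)$ queries in $\bO(\log\SetSize{\Rays})$ time. Second, compute the shortest path map of $v_L$ in $P_L$, together with a point location structure on it, in $\bO(\SetSize{P_L})$ time using the shortest path data structure. Third, traverse the shortest path tree top-down from $v_L$ and compute $h(u) = h(\mathrm{parent}(u)) + N(e_u)$ for every vertex $u$, making one cutting tree query per tree edge; this costs $\bO(\SetSize{P_L} \log\SetSize{\Rays})$ time and $\bO(\SetSize{P_L})$ additional storage. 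Overall I obtain $\bO(\SetSize{\Rays}^{2+\eps} + \SetSize{P_L}\log\SetSize{\Rays})$ preprocessing and $\bO(\SetSize{\Rays}^{2+\eps} + \SetSize{P_L})$ storage.

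At query time, I locate $q$ in the shortest path map to obtain $u_f$ in $\bO(\log\SetSize{P_L})$ time, make two cutting-tree queries with $\seg{qu_f}$ and $\seg{qv_L}$ in $\bO(\log\SetSize{\Rays})$ time each, look up the stored value $h(u_f)$ in $\bO(1)$ time, and evaluate the identity above. The total query time is $\bO(\log\SetSize{P_L} + \log\SetSize{\Rays})$, matching the stated bound.

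The main obstacle is verifying that the parity identity holds in the intended setting. I need to argue that $\partial R$ is a simple closed curve --- which follows from an upward convex chain lying strictly on one side of its chord --- and carefully track which lines contribute two transversal crossings, noting that no two crossings can both lie on $\seg{qv_L}$ (a single segment) nor both on $\seg{qu_f}$, and that a line crossing $\seg{qv_L}$ must exit $R$ through $C$ and therefore also intersects $C$. Degenerate cases, in which a ray passes through a vertex of $\partial R$, need to be handled by a general-position assumption or standard symbolic perturbation. Finally, invoking \cref{lem:ds_viscone} for both query segments is justified because $q$, $u_f$, and $v_L$ all lie on or to the left of the supporting line of $D_L$, so the segments $\seg{qu_f}$ and $\seg{qv_L}$ lie in the half-plane where ray-segment and line-segment intersections coincide.
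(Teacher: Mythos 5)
Your proposal is correct and follows essentially the same route as the paper's own proof: the same convex-polygon parity argument (each ray crossing the chain crosses the closed boundary formed by the shortest path and the chord \(\seg{qv_L}\) exactly twice), the same combination of the cutting tree from \cref{lem:ds_viscone} with a shortest-path map rooted at \(v_L\) storing prefix intersection counts at its vertices, and the same query decomposition into \(\seg{qu_f}\), the stored tree-path count, and \(\seg{qv_L}\), followed by division by two. Your explicit treatment of degeneracies and of why rays may be identified with their supporting lines left of \(D_L\) is slightly more careful than the paper's, but the argument is the same.
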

\begin{proof}
The shortest path from \(q\) to \(v_L\), together with \(\seg{qv_L}\), forms a
convex polygon.
A ray starting in \(S\) (and thus outside this convex polygon) that intersects
an edge of the convex polygon must intersect exactly two edges of the polygon.

We store \(\Rays\) in the data structure of \cref{lem:ds_viscone}.
We also store a \emph{shortest path map} on \(P_L\) with \(v_L\) as its
root~\cite{guibas87}, computed in time \(\bO(\SetSize{P_L})\) and consisting of
\(\bO(\SetSize{P_L})\) segments.
For every edge in the map, we query the data structure of \cref{lem:ds_viscone}
to obtain the number of rays in \(\Rays\) that intersect that edge.
With every vertex \(s\) in the shortest path map, we store the total number of
intersections between the rays in \(\Rays\) and the path from \(v_L\) to \(s\).
Constructing this augmented data structure uses \(\bO(\SetSize{P_L})\) space and
takes time \(\bO(\SetSize{P_L} \log\SetSize{\Rays})\).

Given this data structure, we answer a query as follows.
That shortest path from \(q\) to \(v_L\) consists of a convex chain of segments
of the shortest path map between \(v_L\) and some vertex \(p\), followed by a
segment \(\seg{pq}\).
In time \(\bO(\log\SetSize{P_L})\), we can identify the vertex \(p\).
We query the vertex \(p\) for the (stored) total number of intersections with
the rays in \(\Rays\) in \(\bO(1)\) time.
Then we query the data structure of \cref{lem:ds_viscone} with segments
\(\seg{pq}\) and \(\seg{qv_L}\) in \(\bO(\log\SetSize{\Rays})\) time and add all
three counts together.
This way, we count all the intersections of the rays in \(\Rays\) with the
boundary of the convex polygon defined by \(\seg{qv_L}\) and the shortest path
from \(q\) to \(v_L\).
For each ray, we now count both intersections; so we divide the total by two and
return the result.
This procedure is needed, since a ray can intersect the shortest path two times
(and not intersect \(\seg{qv_L}\)).
The query takes time \(\bO(\log\SetSize{\Rays} + \log\SetSize{P_L}) =
\bO(\log\SetSize{\Rays}\SetSize{P_L})\).
\end{proof}

We now introduce our Segment Query Data Structure (SQDS).
It is based on the \spds, augmented with extra data for visibility queries.
The \spds\ decomposes the polygon into hourglasses and triangles; we describe
the data structures we store with each class.

\begin{figure}[tb]
\centering
\includegraphics{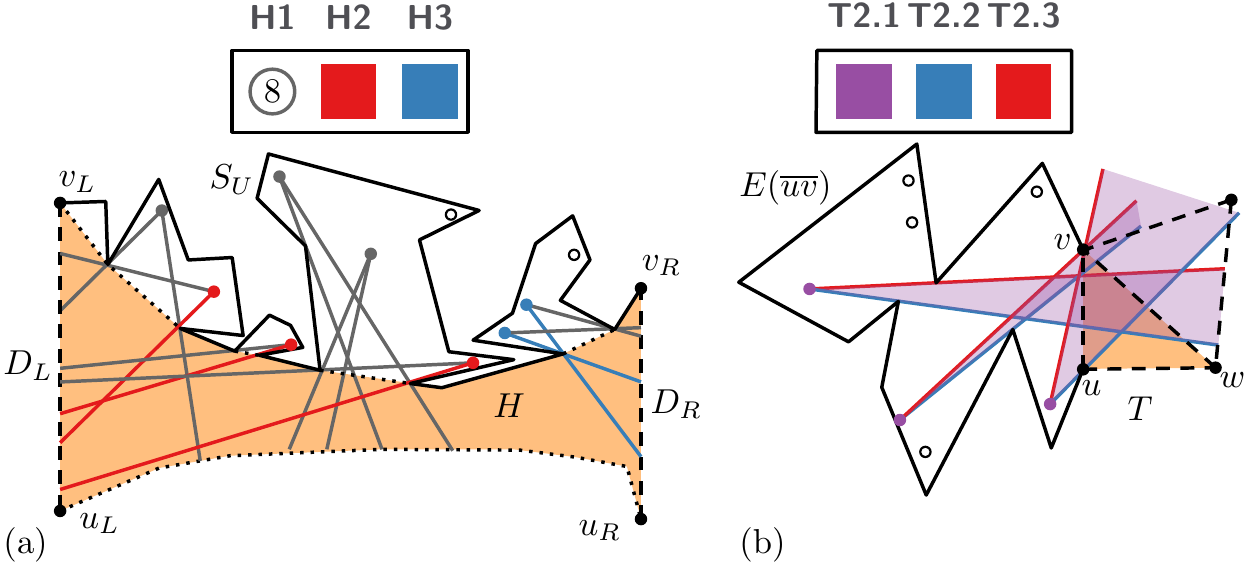}
\caption{(a)~Data structures per chain of an hourglass.
(b)~Data structures in \DSPartRef{T2}, for cones from \(E(\seg{uv})\) restricted
to pass through \(\seg{vw}\).}
\label{fig:sqds-hourglass-and-triangles}
\end{figure}

\subparagraph*{The data structure for hourglasses.}
Consider an hourglass \(H\) bounded by diagonals \(D_L = \seg{v_Lu_L}\),
\(D_R = \seg{v_Ru_R}\), and the upper and the lower chains \(\pi(v_L, v_R)\) and
\(\pi(u_L, u_R)\) in the \spds\ (\cref{fig:sqds-hourglass-and-triangles}a).
Let \(S_U\) be the side polygon of \(H\) that is incident to the upper chain,
and let \(\C_U\) be the visibility cones of entities in \(S_U\) into \(H\).
For the hourglass \(H\) itself, we store the number of objects in \(A\) that are
contained in \(H\).
For ease of exposition, we refer to the boundaries of a cone \(C \in \C_U\) as
the \emph{left} and the \emph{right} boundary, when viewed from the apex of the
cone in the direction of the cone.
For the upper chain of \(H\), we store in SQDS:
\begin{description}
\item[H1.] The number of non-empty visibility cones in \(\C_U\).
\item[H2.] The right cone boundaries of all cones in \(\C_U\) that fully exit
\(H\) through \(D_R\) in the data structure of \cref{lem:sp_intersection_ds}.
\item[H3.] The left cone boundaries of all cones in \(\C_U\) that fully exit
\(H\) through \(D_L\) in the data structure of \cref{lem:sp_intersection_ds}.
\end{description}
We store symmetrical data structures for the bottom chain of \(H\) (i.e.\@ the
left cone boundaries for the cones that exit through \(D_R\) and the right cone
boundaries for the cones that exit through \(D_L\)).

\subparagraph*{The data structure for triangles.}
Let \(T = uvw\) be a triangle in the polygon decomposition underlying the \spds.
For \(T\), we store the number of objects in \(A\) that are contained in \(T\).
In addition, consider an edge \(\seg{uv}\) of \(T\).
To introduce our data structure, we assume that \(\seg{uv}\) is vertical.
Denote the subpolygon (end polygon) adjacent to \(T\) and bounded by
\(\seg{uv}\) by \(E(\seg{uv})\); for ease of exposition, assume \(T\) is to the
right of \(E(\seg{uv})\).
Let \(\C_{E(\seg{uv})}\) be the visibility cones into \(T\) of objects in
\(A \cap E(\seg{uv})\).
Let \(\C_{E(\seg{uv})}(\seg{vw})\) be the set of subcones of the cones in
\(\C_{E(\seg{uv})}\) that intersect \(\seg{vw}\); similarly, define
\(\C_{E(\seg{uv})}(\seg{uw})\) as the subcones intersecting \(\seg{uw}\).
We store:
\begin{description}
\item[T1.] All cones \(\C_{E(\seg{uv})}\) in a data structure of
\cref{lem:separated-cones-seg-ds}.
\item[T2.] The subcones of \(\C_{E(\seg{uv})}(\seg{vw})\) in
(\DSPartRef{T2.1})~a data structure of \cref{lem:separated-cones-seg-ds} and
(\DSPartRef{T2.2})~the right and (\DSPartRef{T2.3})~the left boundaries of these
cones in data structures of \cref{lem:sp_intersection_ds}.
\item[T3.] Symmetric to \DSPartRef{T2}, but for \(\C_{E(\seg{uv})}(\seg{uw})\).
\item[T4.] All the left cone boundaries that intersect \(\seg{vw}\) in a cutting
tree.
\item[T5.] All the right cone boundaries that intersect \(\seg{uw}\) in a
cutting tree.
\end{description}
This list contains the data structures we store for \(\seg{uv}\); we store
analogous data structures for edges \(\seg{vw}\) and \(\seg{uw}\) of \(T\).
See also \cref{fig:sqds-hourglass-and-triangles}b.

\begin{lemma}
\label{lem:sqds-size-and-build-time}
The SQDS requires \(\bO(nm^{2 + \eps} + n^2)\) space and can be constructed in
time \(\bO(nm^{2 + \eps} + nm\log n + n^2\log m)\).
\end{lemma}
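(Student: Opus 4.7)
My plan is to bound both the space and the preprocessing cost by summing contributions across the $\bO(n)$ hourglasses and triangles of the underlying \spds, using the size/time bounds of the helper data structures (\cref{lem:sp_intersection_ds,lem:separated-cones-seg-ds,lem:count-segment-line-intersections,lem:ds_viscone,lem:mutual_cones}) as black boxes.

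\textbf{Space.} First, the base \spds{} occupies $\bO(n)$ space. For each hourglass $H$, H1 is a single integer, while H2 and H3 (and their lower-chain symmetrics) are instances of \cref{lem:sp_intersection_ds}, each of size $\bO(k_H^{2+\eps}+|P_H|)$, where $k_H$ is the number of cones stored at $H$ and $|P_H|$ is the size of the subpolygon on the opposite side of the relevant diagonal. For each triangle $T$, the structures T1--T5 take $\bO(k_T^{2+\eps}+|P_T|)$ space by \cref{lem:separated-cones-seg-ds,lem:sp_intersection_ds,lem:count-segment-line-intersections}. Each point of $A$ is placed in at most one side or end polygon per node of the \spds, so $\sum k = \bO(nm)$; since individually each $k\le m$, I can use $\sum k^{2+\eps}\le m^{1+\eps}\cdot \sum k = \bO(nm^{2+\eps})$. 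Every relevant subpolygon has size $\bO(n)$, so $\sum |P| = \bO(n^2)$. Combining gives a total of $\bO(nm^{2+\eps}+n^2)$.

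\textbf{Preprocessing time.} I would first build the \spds{} in $\bO(n)$ time. For every (point, relevant diagonal) pair I then invoke the \spds{} cone query in $\bO(\log n)$ time; by the same counting as above this costs $\bO(\sum k\cdot \log n) = \bO(nm\log n)$ overall. Constructing each cutting-tree--based structure takes $\bO(k^{2+\eps})$ time, summing to $\bO(nm^{2+\eps})$. Finally, each shortest-path-map augmentation in \cref{lem:sp_intersection_ds} issues $\bO(|P|)$ queries into a cutting tree at $\bO(\log m)$ apiece, contributing $\bO(\sum |P|\cdot \log m) = \bO(n^2\log m)$. Aggregating all three terms yields $\bO(nm^{2+\eps}+nm\log n+n^2\log m)$.

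\textbf{Main obstacle.} The subtlety is bookkeeping of the cones: a single point of $A$ may contribute cones to many nodes of the \spds, so a naive bound on $\sum k^{2+\eps}$ could blow up. The key inequality $\sum k^{2+\eps}\le m^{1+\eps}\sum k$, together with the worst-case bounds $k\le m$ per structure and $\sum k = \bO(nm)$, keeps the cutting-tree portion within $\bO(nm^{2+\eps})$ and avoids any finer per-level argument. A small auxiliary point to verify is that for shortest-path structures we pay $\bO(|P|)$ space and $\bO(|P|\log k)$ time per instance exactly once, which follows since each hourglass/triangle couples to a single subpolygon $P_H$ or $P_T$ on each side of its diagonal.
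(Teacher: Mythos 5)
Your proposal is correct and follows essentially the same route as the paper: sum the per-node bounds of the helper structures (\cref{lem:sp_intersection_ds,lem:separated-cones-seg-ds}) over the \(\bO(n)\) hourglasses and triangles, charge \(\bO(m\log n)\) per node for cone computation, and attribute the \(\bO(n^2\log m)\) term to the shortest-path-map augmentation queries. Your accounting is somewhat more explicit than the paper's (which simply multiplies \(\bO(n)\) nodes by \(\bO(m^{2+\eps}+n)\) each), but the argument is the same.
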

\begin{proof}
During preprocessing, we construct the \spds\ and explicitly store all the
hourglasses, requiring \(\bO(n\log n)\) space and time (\cref{sec:prelims}).
We augment the decomposition with our data structures.
There are \(\bO(n)\) hourglasses and \(\bO(n)\) triangles, so we need (by
\cref{lem:sp_intersection_ds}) at most \(\bO(nm^{2 + \eps} + n^2)\) space.

For each of our data structures, we need the visibility cones to the diagonals
of the triangles and hourglasses to compute the data structures.
We compute these in \(\bO(m\log n)\) per triangle or hourglass, giving a total
time of \(\bO(nm\log n)\).
Constructing our data structures, given the visibility cones, takes additionally
\(\bO(nm^{2 + \eps} + n^2\log m)\) time.
In total, we construct the segment query data structure in time
\(\bO(nm^{2 + \eps} + nm\log n + n^2\log m)\).
\end{proof}

Next, we describe how to query the data structure, so that we can efficiently
count visible entities.
Given a query segment \(\queryseg\), we compute the number of visible entities
in \(A\) from \(\queryseg\) using the SQDS.
We obtain the start triangle \(T\) containing \(p\), \(\bO(\log n)\)
hourglasses, and an end triangle containing \(q\).
Since all entities of \(A\) inside the hourglasses and the two triangles are
visible, we sum up their counts that we store in SQDS.
It remains to count the objects in the end polygons bounded by the triangles and
the two side polygons per hourglass.

\subsection{Counting Entities in the End Polygons}\label{sec:end}
Let \(T = uvw\) be a triangle in our decomposition that contains \(p\).
Assume without loss of generality that \(\seg{uv}\) is vertical with \(T\) to
its right and \(v\) above \(u\) (see \cref{fig:triangle_cases}a).
Let \(E(\seg{uv})\) be the end polygon adjacent to \(T\) left of \(\seg{uv}\),
and let \(\C_{E(\seg{uv})}\) be the set comprised, for each point
\(s \in A \cap E(\seg{uv})\), of visibility cones
\(V(s, \seg{uv}, E(\seg{uv})\).
To shorten the phrasing, when we say that a cone \(C \in \C_{E(\seg{uv})}\)
\emph{sees} a segment \(\queryseg\), we mean that there is mutual visibility
between the apex of that cone and \(\queryseg\) in the complete polygon \(P\).
A cone \(C \in \C_{E(\seg{uv})}\) sees \(\queryseg\) whenever \(\queryseg\)
intersects \(C\) and there are no edges of \(P \setminus E(\seg{uv})\) that
block visibility.
First, consider the special case where \(\queryseg\) is contained in \(T\).
Since there are no edges of \(P\) in \(T\), a cone \(C \in \C\) sees the segment
if and only if \(C\) intersects it, and we conclude:

\begin{observation}\label{obs:enclosed-segment-ds}
Using \DSPartRef{T1} stored per edge of the triangle \(T\), we can count the
objects in all the end polygons of \(T\) that see a query segment
\(\queryseg \subset T\) in \(\bO(\log m)\) time.
\end{observation}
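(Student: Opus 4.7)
The plan is to reduce the visibility count to three independent cone-intersection queries, one per edge of $T$, each answered by the \DSPartRef{T1} structure stored for that edge.

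Consider an edge of $T$, say $\seg{uv}$, bordering the end polygon $E(\seg{uv})$. Each point $s \in A \cap E(\seg{uv})$ contributes a cone $V(s,\seg{uv},E(\seg{uv})) \in \C_{E(\seg{uv})}$, and by \cref{lem:cones} (with diagonal $\seg{uv}$, which separates $s$ from every point of $T$), $s$ sees a point $x \in T$ if and only if $x$ lies in this cone \emph{and} $s$ lies in the visibility cone from $x$ through $\seg{uv}$. Now observe that $T$ has no polygon boundary in its interior, so $T$ itself is contained in the visibility cone of every one of its points through $\seg{uv}$. Therefore the second condition is automatic for any $x \in T$, and $s$ sees $\queryseg$ if and only if the cone of $s$ intersects (or contains) $\queryseg$. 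Since $\queryseg \subset T$ lies entirely on one side of the supporting line of $\seg{uv}$, the hypothesis of \cref{lem:separated-cones-seg-ds} is met, and \DSPartRef{T1} returns the desired count in $\bO(\log m)$ time.

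The only subtlety is to avoid over- or undercounting when summing the answers of the three edges of $T$. The three end polygons of $T$ are pairwise disjoint subpolygons of $P$ (they are separated from one another by $T$ itself), so every object in $A \setminus T$ that can see $\queryseg$ contributes to at most one of the three cone sets. Conversely, any object $s$ outside $T$ that sees some point $x \in \queryseg$ must have the segment $\seg{sx}$ cross $\partial T$, and since $\seg{sx} \subset P$, it can only cross $\partial T$ through one of the three diagonals $\seg{uv}$, $\seg{vw}$, $\seg{uw}$; the corresponding end polygon is exactly the one containing $s$, so $s$'s cone through that edge is counted by the argument above. Summing the three \DSPartRef{T1} queries therefore gives the exact count of objects in the end polygons of $T$ visible from $\queryseg$, in $\bO(\log m)$ total time.

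The main (and only) conceptual step is the reduction from visibility to cone-intersection, which hinges on the emptiness of $T$; once this is established, everything else is bookkeeping over the three edges, and the query bound follows directly from \cref{lem:separated-cones-seg-ds}.
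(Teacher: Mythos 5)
Your proposal is correct and takes essentially the same route as the paper: the paper justifies \cref{obs:enclosed-segment-ds} with exactly your reduction---since \(T\) contains no edges of \(P\), a cone from an end polygon sees \(\queryseg \subset T\) if and only if it intersects it, after which \DSPartRef{T1} (the structure of \cref{lem:separated-cones-seg-ds}) counts the intersecting cones per edge in \(\bO(\log m)\) time, and disjointness of the end polygons makes the three counts additive. One sentence in your write-up is off, however: it is not true that \(T\) is contained in the visibility cone of each of its points through \(\seg{uv}\) (that cone is the wedge from \(x\) spanned by \(\seg{uv}\), which excludes, for instance, the opposite vertex \(w\)), and even if it were true it would say nothing about \(s\), which lies outside \(T\). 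The correct reason the second condition of \cref{lem:cones} comes for free is that, because \(T\) is empty, the visibility cone of any \(x \in T\) through \(\seg{uv}\) is the \emph{full} wedge spanned by \(\seg{uv}\); hence whenever \(x\) lies in the cone of \(s\), the segment \(\seg{sx}\) crosses \(\seg{uv}\), so \(s\) lies on a ray of that full wedge and therefore in the cone of \(x\). With that one sentence repaired, your argument coincides with the paper's.
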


\subsubsection{From Segments to Piercing Segments}\label{sec:non_pierce}
Now suppose that \(p \in T\) and \(q \notin T\).
We assume that \(\queryseg\) pierces \(\seg{vw}\) (the case when \(\queryseg\)
pierces \(\seg{uw}\) is symmetrical).
In this and the next \lcnamecrefs{sec:non_pierce}, we consider a case
distinction on the types of the cones, based on how they pass through \(T\) and
some adjacent triangle.
We argue that in all cases we can count the visible objects correctly.

We partition the set of cones \(\C\) into three classes using the vertex \(w\)
of \(T\) (\cref{fig:triangle_cases}):
\begin{itemize}
\item blue cones \(\cblue\) pass completely above \(w\);
\item red cones \(\cred\) pass completely below \(w\); and
\item purple cones \(\cpurple\) contain \(w\).
\end{itemize}
Next, we argue that we can count the number of visible cones per class.

\begin{figure}[tb]
\centering
\includegraphics{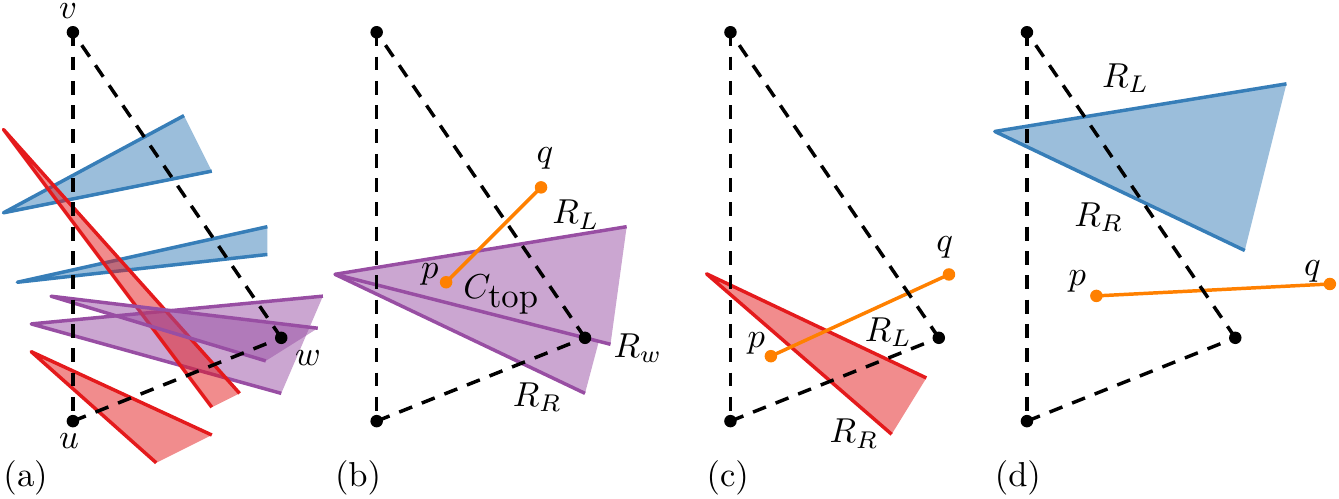}
\caption{(a)~The three classes of cones.
(b)~For the purple cones, it suffices to look at \(\conetop\).
(c)~For the red cones, we test \(R_R\).
(d)~For the blue cones, we need to look into the adjacent triangle.}
\label{fig:triangle_cases}
\end{figure}

\subparagraph*{Counting visible red cones.}
Consider a cone \(C \in \cred\).
Since \(C\) only intersects \(\seg{uv}\) and \(\seg{uw}\), and \(\queryseg\)
exits the triangle through \(\seg{vw}\) by assumption, any part of \(\queryseg\)
that \(C\) can see must be in \(T\) (see \cref{fig:triangle_cases}c).
Hence, \(C\) sees \(\queryseg\) if and only if \(p\) lies below the left ray of
\(C\).

\begin{lemma}\label{lem:count_red}
Using \DSPartRef{T4} (resp.\@ \DSPartRef{T5}), we can count the number of red
cones in \(\C_{E(\seg{uv})}\) that see \(\queryseg\), assuming \(\queryseg\)
pierces \(\seg{vw}\) (resp.\@ \(\seg{uw}\)), in \(\bO(\log m)\) time.
\end{lemma}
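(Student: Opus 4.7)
The plan is to reduce visibility of a red cone to a half-plane predicate on its left-ray supporting line, and then answer the count with a nested cutting-tree query.

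First, I would justify the geometric equivalence stated just before the lemma: for a red cone $C$ with left-ray supporting line $\ell$, the cone sees $\queryseg$ if and only if $p$ lies below $\ell$. The ``only if'' direction uses that, because $C \in \cred$ passes below $w$, the region $C \cap T$ lies entirely below $\ell$, and the exit point $p' := \queryseg \cap \seg{vw}$ lies above $\ell$ (as $w$ does, hence all of $\seg{vw}$); if $p$ were above $\ell$ as well, the segment $\seg{pp'}$ would be trapped in the upper half-plane of $\ell$ and miss $C$. Conversely, if $p$ lies below $\ell$, then either $p$ already sits in $C \cap T$, or $p$ lies below the lower ray of $C$ too; in the latter case, as $\seg{pp'}$ goes from $p$ (below the cone) to $p'$ (above the cone), it first crosses the lower ray of $C$ inside $T$ and later crosses $\ell$, so it meets $C \cap T$. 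In either sub-case, since $T$ is obstruction-free, the apex of $C$ actually sees the intersection point, so $C$ sees $\queryseg$.

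Second, I would phrase the count as a two-level half-plane query on left-ray supporting lines: pick the lines with (a)~$w$ above them (this is exactly the defining property of $\cred$, since a cone passes below $w$ precisely when its upper/left ray does), and (b)~$p$ below them. Both predicates are linear in the line's parameters, so the multilevel cutting tree \DSPartRef{T4} (\cref{lem:multilevel-cutting-tree-2}) answers the conjunction in $\bO(\log m)$ time. The symmetric case where $\queryseg$ pierces $\seg{uw}$ is handled by \DSPartRef{T5} in the mirror way, swapping the roles of the two rays and of the two bounding edges of $T$.

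I expect the main obstacle to be the ``if'' direction of the equivalence in the sub-case where $p$ is strictly below both rays of $C$: I need to verify that the lower ray of $C$ indeed lies entirely inside $T$ so that the intermediate-value crossing used in the argument happens in the interior of $T$. This follows from the red cone's entry on $\seg{uv}$ and exit on $\seg{uw}$ both being points of $T$, together with the convexity of the triangle $T$, which guarantees that the chord between these two boundary points stays inside.
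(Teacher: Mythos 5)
Your overall route is the same as the paper's: reduce visibility of a red cone to the position of $p$ relative to the supporting line $\ell$ of the cone's left boundary, and answer the count with a cutting-tree query in $\bO(\log m)$ time. Your ``if'' direction, including the convexity argument that the crossing with the lower ray happens inside $T$, is sound. There is, however, a gap in your ``only if'' direction: you only show that the portion $\seg{pp'}$ of $\queryseg$ inside $T$ misses $C$ when $p$ lies above $\ell$, but you never rule out that the apex of a red cone sees a point of $\queryseg$ \emph{beyond} $\seg{vw}$. This is exactly the point the paper settles in the sentence preceding the lemma: any visibility segment from the apex to a point of $\queryseg$ on the far side of $\seg{vw}$ must cross the diagonal $\seg{vw}$, and such a segment necessarily lies inside the visibility cone $C$ (every unobstructed segment from the apex that crosses $\seg{uv}$ does, by definition of the visibility cone); since a red cone passes entirely below $w$ and hence is disjoint from $\seg{vw}$, no such segment exists. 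Without this step you have only proved the equivalence for $\seg{ps}$, not for $\queryseg$, and the claimed criterion could in principle undercount.

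A second, non-fatal mismatch: you filter for redness at query time with a first cutting-tree level testing whether $w$ lies above the supporting line. Since $w$ is fixed at preprocessing time, the paper instead pre-filters: \DSPartRef{T4} stores only the left boundaries of the cones that exit $T$ through its bottom edge (equivalently, of the red cones), so a single-level test against $p$ suffices. Your two-level variant achieves the same asymptotic bounds, but it presumes \DSPartRef{T4} stores \emph{all} left cone boundaries, which is not how the data structure is defined; either fix the data structure you query or drop the redundant level.
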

\begin{proof}
A red cone sees \(\queryseg\) if and only if it sees \(\seg{ps}\), which is
contained in \(T\).
Moreover, red cones originate to the left of the supporting line of
\(\seg{uv}\), and \(\seg{ps}\) is to the right of that line.
Thus, we can employ the cutting tree we stored in \DSPartRef{T4} to count the
cones that see \(\seg{ps}\).
\end{proof}

\subparagraph*{Counting visible purple cones.}
Consider a cone \(C \in \cpurple\); we show the following
\lcnamecref{lem:top_sees_all}.
\begin{lemma}\label{lem:top_sees_all}
Let \(C \in \cpurple\) be bounded by the right and the left rays \(R_R\),
\(R_L\), and let \(R_w\) be the ray from the apex of \(C\) through the vertex
\(w\).
The query segment \(\queryseg\) sees \(C\) if and only if it sees the cone
\(\conetop\) bounded by \(R_L\) and \(R_w\).
\end{lemma}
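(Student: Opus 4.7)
The plan is to prove both directions of the equivalence. The direction ``$\conetop$ sees $\queryseg$ implies $C$ sees $\queryseg$'' is immediate, since $\conetop$ is a sub-cone of $C$, so any $\conetop$-ray witnessing visibility of a point on $\queryseg$ from the apex $a$ of $C$ also lies in $C$.

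For the converse, I would suppose $\queryseg$ sees $C$, witnessed by a point $x \in \queryseg$ with $\seg{ax} \subseteq P$ along a ray of $C$. Because $T$ is a triangle, $p \in T$, $q \notin T$, and $\queryseg$ pierces $\seg{vw}$, the intersection $\queryseg \cap T$ is a single sub-segment $\seg{ps}$ with $s$ strictly between $v$ and $w$, and the remainder $\seg{sq}$ lies on the far side of $\seg{vw}$. The ray $R_w$ passes through the vertex $w$ and is a boundary ray of $C$, so it meets the edge $\seg{vw}$ only at $w$; in particular both $s$ and the vertex $v$ lie strictly on the same side of $R_w$ as the interior of $\conetop$. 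I would then split on where $x$ sits on $\queryseg$.

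In Case 1 ($x \in \seg{sq}$), the ray $\seg{ax}$ must cross $\seg{vw}$ strictly between $v$ and $w$. Rays of $C$ in the opposite sub-cone (between $R_w$ and $R_R$) leave $T$ through $\seg{uw}$ and never meet $\seg{vw}$ except at $w$, so $\seg{ax}$ must lie in the angular wedge between $R_L$ and $R_w$, giving $x \in \conetop$. In Case 2 ($x \in \seg{ps}$), either $x$ lies on the $\conetop$-side of $R_w$, in which case $x \in C$ forces $x \in \conetop$ directly; or $x$ lies strictly on the opposite side, in which case the sub-segment $\seg{xs} \subseteq \queryseg$ must cross $R_w$ at some point $y$ by the intermediate-value argument, using that $s$ lies on the $\conetop$-side. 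The prefix $\seg{ay}$ of the ray $R_w$ stays inside $E(\seg{uv}) \cup T \subseteq P$, because $R_w$ is unobstructed in $E(\seg{uv})$ by definition of the visibility cone $C$ and $T$ is a triangle with no interior obstacles; hence $y$ is visible from $a$, and $y$ lies on the closed boundary of $\conetop$, so $\conetop$ sees $\queryseg$ at $y$.

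I expect the main obstacle to be the sub-case of Case 2 where $x \in \seg{ps}$ lies in the lower sub-cone of $C$: here $\conetop$ does not directly see $x$, and one must manufacture the alternative witness $y$. The geometric fact that unlocks this step is that the boundary ray $R_w$ meets $\seg{vw}$ only at the vertex $w$; this forces $s \ne w$ to lie strictly on the $\conetop$-side of $R_w$, which is what guarantees the required crossing of $\queryseg$ with $R_w$ and with it a witness inside $\conetop$ that is visible from $a$ along $R_w$ itself.
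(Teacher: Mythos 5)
Your proof is correct and follows essentially the same route as the paper's: both arguments reduce to the two facts that \(s\) lies strictly on the \(\conetop\)-side of \(R_w\) (so any visibility involving the lower sub-cone forces \(\queryseg\) to cross \(R_w\) inside \(T\), and that crossing point is itself a visible witness on the boundary of \(\conetop\)), and that lower sub-cone rays leave \(T\) through \(\seg{uw}\) and hence cannot reach the part of \(\queryseg\) beyond \(\seg{vw}\). The only difference is organizational: you case on the location of the visibility witness \(x\) along \(\queryseg\), whereas the paper cases on the position of \(p\) relative to \(\conetop\) (inside it, above \(R_L\), or below \(R_w\)); both decompositions are exhaustive and yield the same conclusion.
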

\begin{proof}
If \(\queryseg\) is visible to \(\conetop\), then it is visible to \(C\).
It remains to show that if \(\queryseg\) is visible to \(C\), then it must be
visible to \(\conetop\).

The point \(p\) is either contained in \(\conetop\), in the area of \(T\) above
\(\conetop\), or in the area of \(T\) below \(\conetop\).
If \(p\) is contained in the area below \(\conetop\), then, since \(\queryseg\)
pierces \(\seg{vw}\), the segment \(\queryseg\) always intersects the ray from
the apex of \(C\) through \(w\), and the segment is thus always visible to
\(\conetop\) (and \(C\)).
If \(p\) is contained in the area above \(\conetop\), then \(\queryseg\)
intersects \(C\) if and only if it enters \(C\) through the left ray \(R_L\).
If \(\queryseg\) intersects \(R_L\) in \(T\), then \(\queryseg\) and
\(\conetop\) (and \(C\)) are always mutually visible, since there are no polygon
vertices in \(T\) that may block the line of sight.
If \(\queryseg\) intersects \(R_L\) outside of \(T\), then \(\queryseg\) cannot
be visible to \(C \setminus \conetop\).
Finally, if \(p\) is contained in \(\conetop\), then the segment \(\queryseg\)
is always visible to \(\conetop\) (and \(C\)).
\end{proof}

\Cref{lem:top_sees_all} implies that whenever \(\queryseg\) intersects
\(\seg{vw}\), we can use \(\conetop \in \ctop\) instead of \(C \in \cpurple\),
where \(\ctop\) is the collection of all such cones.
These cones in \(\ctop\) match the definition of cones in \(\cblue\), and we
store them together in \DSPartRef{T2}.

\subparagraph*{Counting visible blue and top cones.}
Consider a cone \(C \in \ctop \cup \cblue\).
Since \(C\) may see \(\queryseg\) outside triangle \(T\), we consider the
triangle \(\tadj \neq T\) that is incident to \(\seg{vw}\).
If \(q\) is inside \(\tadj\), we can use the data structure of
\cref{lem:separated-cones-seg-ds} in \DSPartRef{T2} to count the number of
visible blue cones, since visibility cannot be blocked in \(\tadj\).
In the next \lcnamecref{sec:pierce}, we show that if \(q\) is outside \(\tadj\),
we can count the visible blue cones in time \(\bO(\log nm)\) using
\DSPartRef{T2} and \DSPartRef{T3}.

\subsubsection{Counting Visible Cones in an Adjacent Triangle}\label{sec:pierce}
\begin{figure}[tb]
\centering
\includegraphics{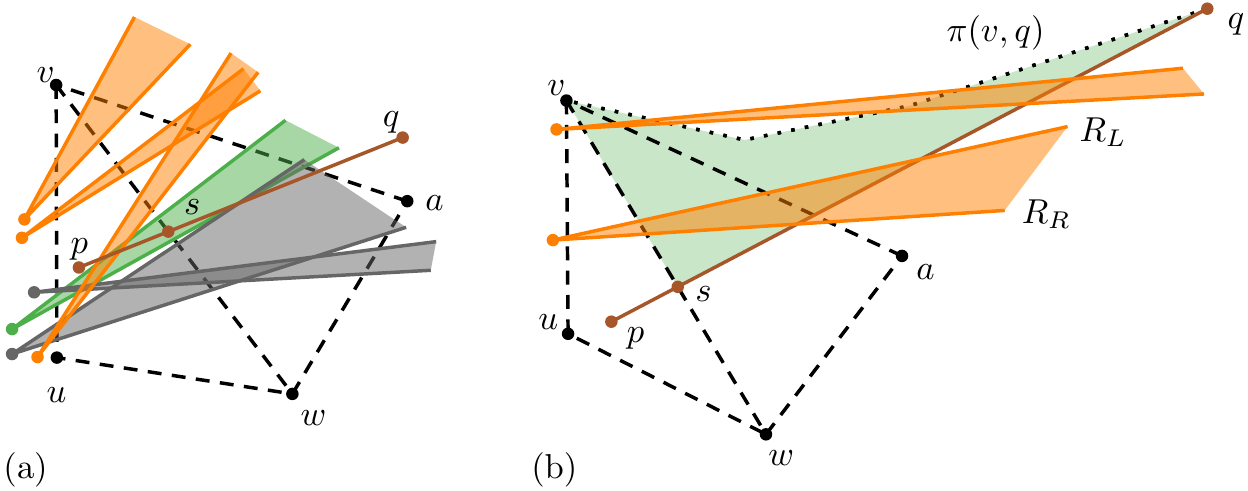}
\caption{(a)~The three types of cones based on the intersection point \(s\).
(b)~For the orange cones, we can test visibility along \(R_R\), and it can only
be blocked by the upper chain \(\pi(v, q)\) of \(F\).}
\label{fig:piercing_case}
\end{figure}

We assume that \(\queryseg\) pierces \(T\) in the edge \(\seg{vw}\) and then
passes through some triangle \(\tadj = wva\).
Let \(s\) be the intersection point between \(\seg{vw}\) and \(\queryseg\).

Abusing (or overloading) notation, we again partition the set of cones
\(\ctop \cup \cblue\) in three classes, but now using \(s\): the orange cones
\(\corange\) pass above \(s\), the grey cones \(\cgrey\) pass below \(s\), and
the green cones \(\cgreen\) contain \(s\) (\cref{fig:piercing_case}a).
Since there are no polygon vertices in \(\tadj\), all the cones in \(\cgreen\)
see \(\queryseg\).
Next, we consider all the cones in \(\corange\)---all the cones in \(\cgrey\)
are treated in a symmetrical fashion.

\begin{lemma}\label{lem:type_2_reduce_bottom}
A cone \(C \in \corange\) sees \(\queryseg\) if and only if it sees
\(\queryseg\) along the right cone boundary \(R_R\).
\end{lemma}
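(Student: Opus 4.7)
The plan is to handle the two directions separately; the backward direction is essentially immediate, while the forward direction is the substantive part. For the backward direction: any unobstructed ray from the apex of $C$ along $R_R$ to a point of $\queryseg$ directly witnesses the visibility of $\queryseg$ from $C$.

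For the forward direction, suppose $C$ sees $\queryseg$ via an unobstructed sight line $\ell$ from the apex of $C$ to some $x \in C \cap \queryseg$. I would first use the defining property of $C \in \corange$: the intersection $C \cap \seg{vw}$ lies strictly above $s = \queryseg \cap \seg{vw}$. Hence $\queryseg$ leaves $T$ through $s$ below $C$, so it must cross the lower boundary $R_R$ to enter $C$ before reaching $x$. Let $y_1 \in R_R \cap \queryseg$ denote this entry point; by construction it lies between $s$ and $x$ along $\queryseg$. The goal then reduces to showing that the ray from the apex of $C$ along $R_R$ to $y_1$ is not obstructed by $\partial P$.

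Region by region: inside $E(\seg{uv})$, the ray $R_R$ lies on the boundary of the visibility cone defining $C$, which by definition does not cross $\partial E(\seg{uv})$ except at $\seg{uv}$; inside the triangles $T$ and $\tadj$ there are no polygon edges to obstruct the ray. If $y_1 \in \tadj$, we are done.

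The main obstacle is the case $y_1 \in F$, where $F$ denotes the region beyond $\tadj$ referenced in \cref{fig:piercing_case}. Here, the only chain of $\partial P$ that can obstruct an orange-cone ray traveling from $\tadj$ toward $\queryseg$ is the upper chain $\pi(v, q)$ bounding $F$ from above. I would argue by contradiction: suppose $R_R$ is blocked by $\pi(v, q)$ at some point $b$ strictly before reaching $y_1$. Because $R_R$ is the lowest ray of $C$ (closest to $\queryseg$ and farthest from $\pi(v,q)$) and because $\pi(v, q)$ is a shortest path, hence convex toward the interior of $F$, every ray of $C$ at an angle above $R_R$ sweeps across $\pi(v, q)$ at a point at least as close to the apex as $b$. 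In particular, the witness sight line $\ell$---which lies in $C$ above $R_R$---would be forced to cross $\pi(v, q)$ before reaching $x \in \queryseg$, contradicting the hypothesis that $\ell$ is unobstructed. Therefore $R_R$ reaches $y_1$ unobstructed, completing the proof.
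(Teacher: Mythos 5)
Your backward direction and your identification of the entry point $y_1$ between $s$ and $x$ match the paper, but from there the two proofs diverge substantially. The paper stays entirely local: it takes the witness ray $R$ hitting $\queryseg$ at $z$, compares slopes to conclude that $R_R$ meets $\queryseg$ between $s$ and $z$, and then observes that the triangle bounded by $R$ (a visibility segment, hence in $P$), the subsegment of $\queryseg$ (in $P$), and $R_R$ can contain no polygon vertices, so $R_R$ realises visibility as well --- no funnel, no case distinction on where $y_1$ lies. Your route instead walks $R_R$ through $E(\seg{uv})$, $T$, $\tadj$ and the funnel $F$ and derives a contradiction from the funnel's upper chain; in doing so you front-load the content of \cref{lem:only_top_chain_can_block}, which the paper proves only afterwards and uses for \cref{cor:nb_orange}. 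There is no circularity in that, but it makes the proof considerably heavier than the paper's three-line empty-triangle argument.

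The substantive problem is the justification of your contradiction step. The claim that ``every ray of $C$ at an angle above $R_R$ sweeps across $\pi(v, q)$ at a point at least as close to the apex as $b$'' does not follow from convexity of the chain and is false in general (rotating the ray can move the first chain crossing farther from the apex); more importantly, distance to the apex is not the invariant you need. What is true, and what the argument actually requires, is that $\ell$ and $R_R$ emanate from a common apex and therefore cannot cross each other: $\ell$ enters the far side of $\seg{vw}$ between the entry point of $R_R$ and $v$, and is thus confined to the region bounded by $\seg{vw}$, the portion of $R_R$ up to $b$, and the subchain of $\pi(v, q)$ from $v$ to $b$; since it can neither re-cross $\seg{vw}$ nor cross $R_R$, and since $\seg{sq} \subseteq F$ never enters that region, $\ell$ must exit through the subchain before reaching $x$. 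Second, crossing $\pi(v, q)$ does not by itself contradict $\ell$ being unobstructed: $\pi(v, q)$ is a geodesic, not a piece of $\partial P$, so you still need the pocket argument --- a straight ray that crosses an edge of the taut upper chain enters a pocket of $P$ bounded by that chord and $\partial P$ and cannot return to $\seg{sq}$ without hitting $\partial P$ (precisely the ``if'' direction of \cref{lem:only_top_chain_can_block}). With those two repairs your proof goes through, but as written the pivotal step rests on an incorrect monotonicity claim and an unproved equivalence.
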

\begin{proof}
Let \(R\) be a ray in \(C\) such that \(C\) sees \(\queryseg\) along \(R\).
Let \(z\) be the intersection point of \(R\) with \(\queryseg\).
Since the apex of \(C \in \corange\) lies left of \(\seg{vw}\), the slope of
\(R\) and \(R_R\) must be lower than that of the supporting line of
\(\queryseg\).
By definition of the right cone boundary (and our chosen orientation of
\(\seg{vw}\)), the slope of \(R\) is at least the slope of \(R_R\).
Thus, if \(R\) intersects \(\queryseg\), then \(R_R\) intersects \(\queryseg\)
left of that point of intersection.
Since \(R\) realises mutual visibility between \(C\) and \(\queryseg\), the area
bounded by \(R\) and \(\queryseg\) cannot contain any polygon vertices, and thus
\(R_R\) must also realise mutual visibility between \(\queryseg\) and \(C\).
\end{proof}

Let \(\B_c\) be the right rays of the cones of class \(c\), e.g.\@ \(\borange\)
are the right rays of the cones in \(\corange\), and let \(\U_c\) be the left
rays of the cones of class \(c\).
Let \(N_c\) be the number of cones of class \(c\) that see \(\queryseg\).
Similarly, define \(N_c(\seg{xy})\) as the number of cones of class \(c\) that
see some segment \(\seg{xy}\).
For some two points \(x\) and \(y\) and a class \(c\), denote the number of rays
from \(\B_c\) along which we can see the segment \(\seg{xy}\) in \(P\) by
\(N^\B_c(\seg{xy})\); and similarly, use \(N^\U_c(\seg{xy})\) for \(\U_c\).

\begin{lemma}\label{lem:visible}
For the query segment \(\queryseg\), we have that \(\norange =
\norange(\seg{ps}) + \norange^\B(\seg{sq})\).
\end{lemma}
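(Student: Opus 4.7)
The plan is to invoke \cref{lem:type_2_reduce_bottom}, which replaces the condition ``the orange cone $C$ sees $\queryseg$'' by ``the right boundary ray $R_R \in \borange$ of $C$ sees $\queryseg$ in $P$''. Since $s = \queryseg \cap \seg{vw}$ splits $\queryseg$ into $\seg{ps}$ and $\seg{sq}$, and the ray $R_R$, viewed as part of a line, can meet the straight segment $\queryseg$ in at most one point, the witness point $R_R \cap \queryseg$ lies in exactly one of the two pieces. This yields the decomposition $\norange = \norange^\B(\seg{ps}) + \norange^\B(\seg{sq})$, with the borderline case $R_R \cap \queryseg = \{s\}$ handled by the usual general-position argument.

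It then remains to show $\norange^\B(\seg{ps}) = \norange(\seg{ps})$. The inequality $\norange^\B(\seg{ps}) \le \norange(\seg{ps})$ is immediate: if $R_R$ realises visibility at a point $z \in \seg{ps}$, then $z$ lies on the boundary of $C$, so $C$ meets $\seg{ps}$; and since $\seg{ps} \subset T$ and $T$ contains no polygon vertices, any cone meeting $\seg{ps}$ actually sees it in $P$. For the reverse inequality, I would use the defining property of orange cones: the intersection of the cone with $\seg{vw}$ lies strictly above $s$, and by the slope argument used in the proof of \cref{lem:type_2_reduce_bottom}, $R_R$ is the lower boundary of $C$, so its crossing of $\seg{vw}$ is the lowest point of the cone's exit interval. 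Consequently $s$ lies strictly below $R_R \cap \seg{vw}$, so in particular $s \notin C$. If the orange cone $C$ sees $\seg{ps}$, then $C \cap \seg{ps} \neq \emptyset$; but $\seg{ps}$ is a straight segment terminating at $s \notin C$, so it must leave $C$ before reaching $s$. Within $T$, the boundary of $C$ on the side containing $s$ is precisely $R_R$, so $\seg{ps}$ must cross $R_R$, and because $T$ contains no polygon vertices the crossing is an unobstructed visibility witness, giving a right-ray visibility with $\seg{ps}$.

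The main obstacle will be the last geometric step: verifying that within $T$ the ``side of $C$ facing $s$'' is exactly $R_R$. This requires combining the slope orientation of $\seg{vw}$ fixed before \cref{lem:type_2_reduce_bottom}, the fact that the apex of $C$ lies left of $\seg{uv}$ so that the cone opens rightward into $T$, and the orange-cone assumption that the exit of $C$ on $\seg{vw}$ sits above $s$. Once this is settled, chaining the three equalities produces the identity $\norange = \norange(\seg{ps}) + \norange^\B(\seg{sq})$.
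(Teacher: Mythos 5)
Your proposal is correct and follows essentially the same route as the paper: apply \cref{lem:type_2_reduce_bottom} to reduce to right-boundary visibility, split the count at \(s\) using the fact that \(R_R\) meets \(\queryseg\) in only one of the two pieces, and then convert \(\norange^\B(\seg{ps})\) back to \(\norange(\seg{ps})\). The paper obtains that last equality by simply re-invoking \cref{lem:type_2_reduce_bottom} on \(\seg{ps}\), whereas you re-derive it geometrically, but the decomposition and the key lemma are identical.
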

\begin{proof}
By \cref{lem:type_2_reduce_bottom}, an orange cone \(C\) can see \(\queryseg\)
if and only if \(\queryseg\) is visible along \(R_R \subseteq C\).
Hence, \(\norange = \norange^\B(\queryseg)\).
Now observe that a ray in \(C\) cannot intersect both \(\seg{ps}\) and
\(\seg{sq}\), since \(C\) is not a green cone.
Finally, by \cref{lem:type_2_reduce_bottom},
\(\norange(\seg{ps}) = \norange^\B(\seg{ps})\).
This implies the statement of the \lcnamecref{lem:visible}.
\end{proof}

Therefore, we can count visibility of orange cones separately for \(\seg{ps}\)
and \(\seg{sq}\).
Let \(F\) be the funnel from \(q\) to \(\seg{vw}\).

\begin{lemma}\label{lem:only_top_chain_can_block}
Let \(R\) be a ray in \(\borange\).
Visibility of \(\seg{sq}\) along \(R\) is blocked if and only if \(R\)
intersects the top boundary of the funnel \(F\) before intersecting
\(\seg{sq}\).
\end{lemma}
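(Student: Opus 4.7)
The plan is to track the ray $R$ as it enters the funnel $F$ and argue that once inside, it is trapped in a sub-region of $F$ whose only exits besides where it entered are the top chain and $\seg{sq}$; then the biconditional falls out immediately. I would begin by setting up notation: let $s'$ be the point where $R$ crosses $\seg{vw}$. Because $R$ is the right boundary of an orange cone, $s'$ lies strictly between $v$ and $s$ on $\seg{vw}$. The portion of $R$ from its apex through $E(\seg{uv})$ and $T$ up to $s'$ is contained in $P$ by construction of the visibility cone, so visibility along $R$ can only fail past $s'$.

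Next I would decompose $F$ using $\seg{sq}$. Since $\queryseg \subset P$ and $\seg{sq}$ is the portion of $\queryseg$ inside $F$ (by \cref{lem:visibility-seg-intersect}), the segment $\seg{sq}$ splits $F$ into two sub-regions: a region $F_v$ bounded by $\seg{sv}$, the top chain of $F$ (the shortest path from $v$ to $q$), and $\seg{sq}$; and a symmetric region $F_w$ bounded by $\seg{sw}$, the bottom chain, and $\seg{sq}$. Because $s'$ lies on $\seg{sv}$, the ray $R$ enters $F_v$ at $s'$ and proceeds into its interior.

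The key observation is that a ray entering $F_v$ through $\seg{sv}$ and moving forward can leave $F_v$ only through the top chain or through $\seg{sq}$; it cannot revisit $\seg{sv}$. This delivers both directions: if $R$ exits $F_v$ through $\seg{sq}$ at some point $z$, the portion of $R$ from $s'$ to $z$ lies entirely in $F_v \subset P$, so visibility along $R$ is realized at $z$; conversely, if $R$ exits through the top chain first, it leaves $P$ and can only re-enter after crossing more boundary, so no point of $\seg{sq}$ is visible along $R$.

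I expect the main subtlety to be ruling out that the bottom chain of $F$ blocks $R$ before $R$ reaches $\seg{sq}$. This is really the statement that $R$ stays on the $v$-side of $\seg{sq}$ until it crosses $\seg{sq}$: the bottom chain lies entirely in $F_w$, on the opposite side of $\seg{sq}$, so $R$ would have to pierce $\seg{sq}$ before even touching the bottom chain. Formally this uses the orange classification (the cone passes above $s$, so $s'$ is on the $v$-side) together with \cref{lem:type_2_reduce_bottom}, which gives that $R$ has slope no greater than that of $\queryseg$ and therefore approaches $\seg{sq}$ monotonically from the $v$-side. Once this geometric point is pinned down, the rest is a direct consequence of the decomposition of $F$ by $\seg{sq}$.
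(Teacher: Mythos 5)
Your proof follows essentially the same route as the paper's: both isolate the region of \(F\) bounded by \(\seg{sv}\), the top chain \(\pi(v, q)\), and \(\seg{sq}\), observe that it (together with \(T\) and \(\tadj\)) contains no polygon vertices or edges, and conclude that the only way \(R\) can fail to reach \(\seg{sq}\) is by crossing the top chain first. The one loose phrase is ``it leaves \(P\)'' when \(R\) crosses the top chain --- the chain's edges are chords between reflex vertices rather than polygon boundary edges, so blocking really follows from the fact that a straight segment inside \(P\) cannot cross a geodesic twice --- but this is no less precise than the paper's own wording, and the rest of your argument (in particular, ruling out the bottom chain via the \(\seg{sq}\) split) is sound.
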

\begin{proof}
Note that \(\seg{pq}\) crosses \(\seg{vw}\), so \(\seg{sq} \subseteq F\).
Consider the area bounded by the upper chain of \(F\) and \(\seg{sq}\).
This area cannot contain any polygon edges or vertices.
(See \cref{fig:piercing_case}b.)
The point of intersection between \(R\) and \(\seg{sq}\) is contained within
this area, if it exists.
Moreover, the triangles \(\tadj\) and \(T\) immediately left of \(\seg{sq}\)
also cannot contain any polygon vertices.
Thus, either \(R\) realises mutual visibility, or \(R\) intersects a polygon
edge that belongs to the funnel \(F\).
\end{proof}

\begin{corollary}\label{cor:nb_orange}
\(\norange^\B(\seg{sq}) = \SetSize{\borange} - X\), where \(X\) is the number of
rays in \(\borange\) that intersect the upper boundary of the funnel.
\end{corollary}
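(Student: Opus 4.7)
The plan is to convert Lemma \ref{lem:only_top_chain_can_block} into the desired counting identity by partitioning \(\borange\) into two classes whose sizes are \(\norange^\B(\seg{sq})\) and \(X\). I would argue that every right ray \(R \in \borange\) either realises visibility with \(\seg{sq}\) in \(P\) or intersects the upper chain \(\pi(v,q)\) of \(F\), and that these two cases are mutually exclusive.

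First I would locate \(R\) inside the funnel. Because an orange cone passes strictly above \(s\) on \(\seg{vw}\), the ray \(R\) enters \(F\) through a point on \(\seg{vs}\), placing it immediately inside the subregion \(W\) bounded by \(\seg{vs}\), the upper chain \(\pi(v,q)\), and \(\seg{sq}\). By Lemma \ref{lem:only_top_chain_can_block} this subregion contains no polygon edges or vertices. A straight ray crosses a line segment at most once, so \(R\) cannot exit \(W\) back through \(\seg{vs}\); by the Jordan curve theorem it must leave \(W\) either through the upper chain (so by Lemma \ref{lem:only_top_chain_can_block} visibility along \(R\) is blocked, contributing to \(X\)) or through \(\seg{sq}\) (so the right ray of the cone sees \(\seg{sq}\), and by Lemma \ref{lem:type_2_reduce_bottom} the orange cone sees \(\queryseg\), contributing to \(\norange^\B(\seg{sq})\)).

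The main step that is not automatic is ruling out double counting: a ray which already sees \(\seg{sq}\) must not also be counted in \(X\). Since \(W\) is simply connected and bounded in part by \(\pi(v,q)\) and \(\seg{sq}\), these two curves lie on opposite sides of \(W\); once \(R\) has crossed \(\seg{sq}\) and left \(W\), it is in the part of \(F\) on the opposite side of \(\seg{sq}\) from \(\pi(v,q)\), and since \(R\) is straight it cannot cross \(\seg{sq}\) a second time to reach the upper chain. Therefore, intersecting \(\pi(v,q)\) at any point is equivalent to being blocked, the two classes partition \(\borange\), and we obtain \(\SetSize{\borange} = \norange^\B(\seg{sq}) + X\), which rearranges to the claimed identity.
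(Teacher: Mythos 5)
Your proof is correct and takes essentially the route the paper intends: the paper states this as an immediate consequence of \cref{lem:only_top_chain_can_block} with no separate proof, and your argument simply makes the intended partition of \(\borange\) explicit. Your extra care in ruling out double counting (a ray that crosses \(\seg{sq}\) and only afterwards meets \(\pi(v,q)\)) is a detail the paper glosses over; it is sound because the empty region bounded by \(\seg{vs}\), \(\pi(v,q)\), and \(\seg{sq}\) is convex, so a straight ray crosses its boundary at most twice.
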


Now that we have established the necessary relations, we want to bring them
together and present a way to count the blue and the top cones that see the
query segment.

\begin{lemma}\label{lem:pierce_triangle_ds}
Using \DSPartRef{T2}, we can compute the number of blue and top cones that see
the segment \(\queryseg\), assuming \(p \in T\) and \(q \notin T \cup \tadj\),
in time \(\bO(\log nm)\).
\end{lemma}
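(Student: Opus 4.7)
The plan is to express the count of blue and top cones that see $\queryseg$ as the sum $\ngreen + \norange + \ngrey$, following the class partition introduced above, and to compute each summand with $\bO(1)$ queries to the components of \DSPartRef{T2}, each running in $\bO(\log nm)$ time.

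First, I would locate the intersection point $s = \queryseg \cap \seg{vw}$ and the adjacent triangle $\tadj$ in $\bO(\log n)$ time via the \spds, and retrieve the funnel $F$ from $q$ back to $\seg{vw}$. The upper boundary of $F$ is the shortest path $\pi(q, v)$ in the subpolygon bounded by $\seg{vw}$ not containing $T$, which is an upwards convex chain as required by \cref{lem:sp_intersection_ds}; the lower boundary $\pi(q, w)$ is analogous. All these operations fit in $\bO(\log n)$ using the \spds.

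Next, I would compute $\ngreen$ by a stabbing query at $s$: a cone is green exactly when its right boundary crosses $\seg{vw}$ above $s$ and its left boundary crosses $\seg{vw}$ below $s$. Since both crossing points are computable at preprocessing time from the rays stored in \DSPartRef{T2.2} and \DSPartRef{T2.3}, this stabbing reduces to two one-dimensional predicates against $s$ and can be answered by a constant-depth multilevel query in $\bO(\log m)$.

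Third, by \cref{lem:visible,cor:nb_orange},
\[\norange = \norange^\B(\seg{ps}) + \SetSize{\borange} - X,\]
where $X$ is the number of right rays of orange cones that intersect the upper boundary of $F$. I compute $\norange^\B(\seg{ps})$ by a segment-intersection query on \DSPartRef{T2.2} with $\seg{ps}$ (via \cref{lem:ds_viscone}), restricted to the orange cones. I compute $\SetSize{\borange}$ by a range query over the $\seg{vw}$-crossings of the stored right rays. For $X$, I invoke the shortest-path query interface of \DSPartRef{T2.2} from \cref{lem:sp_intersection_ds} on the upper boundary of $F$ with root $v$, again restricted to orange cones. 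The grey summand $\ngrey$ is handled symmetrically using \DSPartRef{T2.3} and the lower chain of $F$.

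The main obstacle is restricting each of the above queries to the orange (resp.\ grey) class, since this class depends on the query-dependent point $s$. This is resolved by observing that for every stored ray its crossing point with $\seg{vw}$ is fixed at preprocessing time, so the class filter reduces to a constant number of one-dimensional comparisons against $s$ and can be added as $\bO(1)$ additional levels of the underlying cutting trees without affecting the asymptotic space, preprocessing, or query bounds. Summing $\bO(1)$ queries, each in $\bO(\log nm)$ time, gives the claimed bound.
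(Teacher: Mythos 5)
Your overall skeleton matches the paper's: split the count into green, orange, and grey contributions, use \cref{lem:visible} and \cref{cor:nb_orange} to reduce the orange count to \(\norange^\B(\seg{ps}) + \SetSize{\borange} - X\), and answer everything with \DSPartRef{T2}. The gap is in how you obtain \(X\), the number of \emph{orange} right rays crossing the upper funnel boundary. You claim every class restriction can be ``added as \(\bO(1)\) additional levels of the underlying cutting trees,'' but the component you need for \(X\) is not a cutting tree: in \cref{lem:sp_intersection_ds}, the intersection count along the convex chain is obtained from a single precomputed scalar stored at each shortest-path-map vertex, aggregated over \emph{all} rays in the structure. A query-dependent filter (orange means ``crossing point with \(\seg{vw}\) above \(s\)'', where \(s\) is only known at query time) cannot be layered on top of a precomputed scalar; to support it you would have to store, at each of the \(\bO(n)\) map vertices, a searchable structure over the up-to-\(\Theta(m)\) contributing rays, costing \(\Omega(nm)\) per triangle edge and \(\Omega(n^2m)\) overall, which breaks the space bound. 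The same issue hits your grey term symmetrically. (The restrictions you apply to the \(\seg{ps}\)-queries and to \(\SetSize{\borange}\) are genuinely one-dimensional comparisons on cutting-tree components and are fine.)

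The paper avoids this exact problem by never classifying cones when it touches the shortest-path-map structure. It first counts \(\ngreen + \ngrey(\seg{ps}) + \norange(\seg{ps})\) with one \emph{unrestricted} query for the cones intersecting \(\seg{ps}\) (a green cone contains \(s\) and hence intersects \(\seg{ps}\), so no separate stabbing query is needed). It then observes that every right ray in \(\bgreen \cup \bgrey \cup \borange\) falls into exactly one of three buckets --- it intersects the funnel, its supporting line lies below \(\seg{sq}\), or it is an orange ray realising visibility of \(\seg{sq}\) --- so \(\norange^\B(\seg{sq})\) equals the total ray count minus two counts that are each computable \emph{without} any class filter: one by the half-plane part of \cref{lem:separated-cones-seg-ds} and one by \cref{lem:sp_intersection_ds} over all stored rays. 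To repair your proof you either need this inclusion--exclusion reformulation or a different mechanism for class-filtered convex-chain intersection counting; as written, the step computing \(X\) does not go through.
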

\begin{proof}
Given \(\queryseg\), we compute \(s\) in \(\bO(1)\) time.
We want to compute \(\ngreen + \ngrey + \norange\).
Using the fact that the observations for the orange and the grey cones are
symmetric, we can use \cref{lem:visible}.
For the grey cones, the left cone boundaries matter instead of the right cone
boundaries for the orange cones.
We want to find \(\ngreen + \ngrey(\seg{ps}) + \norange(\seg{ps}) +
\ngrey^\U(\seg{sq}) + \norange^\B(\seg{sq})\).

It is easier not to separate the classes when checking visibility for
\(\seg{ps}\).
Since \(\tadj\) and \(T\) contain no polygon vertices, we can find
\(\ngreen + \ngrey(\seg{ps}) + \norange(\seg{ps})\) by using the data structure
of \cref{lem:separated-cones-seg-ds} stored in \DSPartRef{T2} in \(\bO(\log m)\)
time by querying visibility of \(\seg{ps}\).

Finally, we determine \(\ngrey^\U(\seg{sq})\) and \(\norange^\B(\seg{sq})\).
We argue the query time for \(\norange^\B(\seg{sq})\); the query for
\(\ngrey^\U(\seg{sq})\) can be done symmetrically.

The proof is illustrated by \cref{fig:piercing_case}b.
We show that we can uniquely count all right rays that are not in
\(\norange^\B(\seg{sq})\).
Again, we do not want to explicitly classify the cones into green, orange, and
grey, so we need a way to filter out the green and grey cones, as well as the
orange cones that do not see \(\seg{sq}\).

For any right ray \(R_R \in \bgreen \cup \bgrey\), we know that the point \(s\)
is \emph{above} the supporting line of \(R_R\) and either
\begin{itemize}
    \item \(R_R\) intersects \(F\), or
    \item \(\seg{sq}\) is above the supporting line of \(R_R\).
\end{itemize}
For any right ray \(R_R \in \borange\), we know, by
\cref{lem:only_top_chain_can_block,cor:nb_orange}, that \(s\) is \emph{below}
the supporting line of \(R_R\) and either
\begin{itemize}
    \item \(R_R\) intersects \(F\), or
    \item the orange cone sees \(\seg{sq}\) along \(R_R\).
\end{itemize}

It immediately follows that \(\norange^\B(\seg{sq})\) is equal to the total
number of rays \(\SetSize{\bgreen \cup \bgrey \cup \borange}\) minus the number
of right rays whose supporting line is below \(\seg{sq}\) and the number of
right rays that intersect \(F\).
We store the total count of the cones in \DSPartRef{T2}.
We can compute the first count we subtract by reusing one part of the data
structure of \cref{lem:separated-cones-seg-ds} we store in \DSPartRef{T2}.
We can compute the second count using the data structure of
\cref{lem:sp_intersection_ds} in \DSPartRef{T2}.
Together, these queries require \(\bO(\log m + \log n) = \bO(\log nm)\) time.
\end{proof}

Now that we have a procedure to count the blue and the top cones, we want to
bring our results together to count all visible entities in end polygons.
We now combine \cref{lem:count_red,lem:top_sees_all,lem:pierce_triangle_ds} into
the following \lcnamecref{lem:triangle_count_ds}.

\begin{lemma}\label{lem:triangle_count_ds}
Using \DSPartRef{T1--T3} of our SQDS, we can, given a query segment
\(\seg{pq}\) with \(p \in T\), and funnels from \(q\) to the edges of \(T\),
count the number of visible entities in the end polygons of \(T\) in time
\(\bO(\log nm)\).
\end{lemma}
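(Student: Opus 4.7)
The plan is to combine Observation~\ref{obs:enclosed-segment-ds} with Lemmas~\ref{lem:count_red}, \ref{lem:top_sees_all}, and \ref{lem:pierce_triangle_ds} through a case distinction on whether \(q \in T\). Because the end polygons of \(T\) are pairwise disjoint subsets of \(P\), the visibility counts per end polygon can be summed without double-counting.

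If \(q \in T\), then \(\queryseg \subset T\), so no polygon boundary obstructs the line of sight between \(\queryseg\) and any cone apex in an end polygon; I apply Observation~\ref{obs:enclosed-segment-ds}, which handles all three end polygons of \(T\) at once using the \DSPartRef{T1} structures in \(\bO(\log m)\) time.

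Otherwise \(\queryseg\) pierces exactly one edge of \(T\); without loss of generality, \(\seg{vw}\). For each of the remaining end polygons \(E(\seg{uv})\) and \(E(\seg{uw})\), I partition the cones \(\C_{E(\cdot)}\) into red, purple, and blue classes with respect to the opposite vertex of \(T\). Red cones that see \(\queryseg\) are counted via Lemma~\ref{lem:count_red}; purple cones are replaced by top cones using Lemma~\ref{lem:top_sees_all} and merged with the blue class, which are stored together in \DSPartRef{T2}. Using the provided funnel from \(q\) to \(\seg{vw}\), I determine in \(\bO(1)\) time whether \(q\) lies in the triangle \(\tadj\) incident to \(T\) across \(\seg{vw}\). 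If \(q \in \tadj\), visibility inside \(T \cup \tadj\) is unobstructed, so a single query to the Lemma~\ref{lem:separated-cones-seg-ds} substructure of \DSPartRef{T2} yields the count of visible blue and top cones. Otherwise \(q \notin \tadj\), and I invoke Lemma~\ref{lem:pierce_triangle_ds}, which queries \DSPartRef{T2} (and, for the symmetric grey class, \DSPartRef{T3}) together with the funnel.

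A constant number of queries is issued, each in \(\bO(\log nm)\) time by the cited lemmas, so the total query time is \(\bO(\log nm)\). The main obstacle is correctness rather than efficiency: I must verify that the composition of the red/purple/blue partition (with respect to the opposite vertex of \(T\)) with the later green/grey/orange refinement in Lemma~\ref{lem:pierce_triangle_ds} (with respect to the intersection point of \(\queryseg\) and \(\seg{vw}\)) assigns each cone to exactly one counting branch, and that replacing purple cones by top cones via Lemma~\ref{lem:top_sees_all} preserves the count. Both properties are guaranteed by the earlier geometric lemmas, so here the remaining work is the bookkeeping of summing per-end-polygon counts.
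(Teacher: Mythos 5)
Your proposal is correct and follows essentially the same route as the paper: classify the cones of each relevant end polygon into red/purple/blue, reduce purple to top cones via Lemma~\ref{lem:top_sees_all}, count red cones with Lemma~\ref{lem:count_red} and blue/top cones with Lemma~\ref{lem:pierce_triangle_ds} (or directly via \DSPartRef{T2} when \(q\) lies in the adjacent triangle), and sum over the disjoint end polygons. The only cosmetic difference is that you fold the \(\queryseg \subset T\) case (Observation~\ref{obs:enclosed-segment-ds}) into this lemma, whereas the paper handles it separately in the proof of Theorem~\ref{thm:point_segment}; note also that, like the paper's own proof, your argument actually uses \DSPartRef{T4--T5} for the red cones, so the \enquote{T1--T3} in the statement is an imprecision inherited from the paper rather than a gap in your reasoning.
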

\begin{proof}
With the query segment and the given funnels, we can determine which edges of
\(T\) are incident to end polygons and thus should be queried.
By \cref{lem:top_sees_all}, we can count the purple cones as the blue cones,
and they are already stored in \DSPartRef{T2}.
The red and the blue (or top) cones are queried separately, using
\DSPartRef{T4--T5} and \DSPartRef{T2--T3}, respectively.
Using \cref{lem:count_red,lem:pierce_triangle_ds}, we then query the data
structures associated with the edges of \(T\) to count the cones of the
different classes for each end polygon.
This requires \(\bO(\log m)\) and \(\bO(\log nm)\) time for the different
classes per edge, thus giving a total query time of \(\bO(\log nm)\).
\end{proof}

\subsection{Counting Entities in Side Polygons}
We now describe how to count entities in the side polygons of the hourglasses.
Let \(H\) be an hourglass that covers a part of query segment \(\queryseg\) (see
\cref{fig:hourglass-visibility}).

\begin{figure}[t]
\centering
\includegraphics{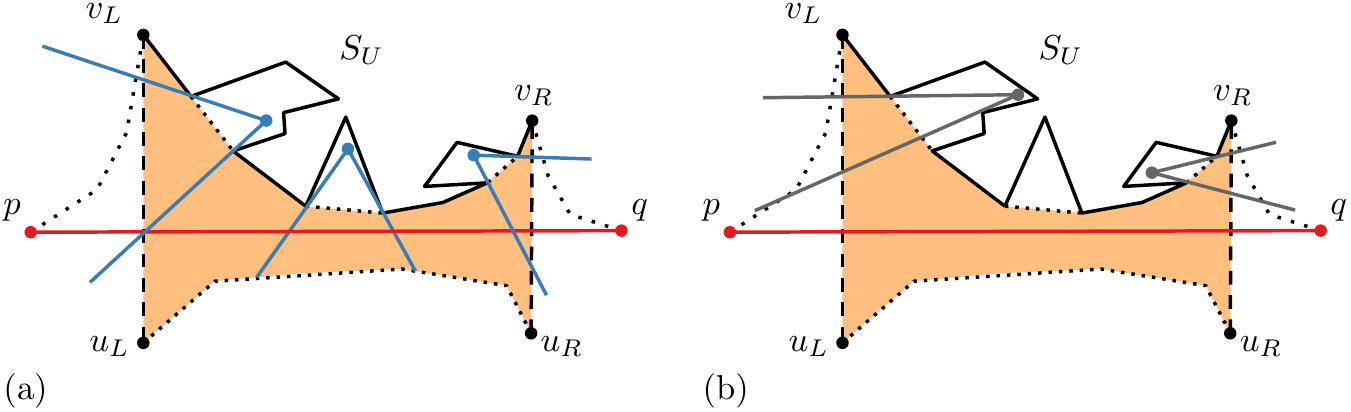}
\caption{Cones entering from side polygon \(S_U\) that (a)~see or (b)~do not see
\(\queryseg\).}
\label{fig:hourglass-visibility}
\end{figure}

\begin{lemma}\label{lem:hourglassvisibility}
Let \(H\) be an hourglass with diagonals \(D_L = \seg{u_Lv_L}\) and
\(D_R = \seg{u_Rv_R}\), let \(S_U\) be the side polygon bounded by the upper
chain of \(H\), and let \(\seg{pq}\) be a segment that intersects both \(D_L\)
and \(D_R\), with \(p\) to the left of \(D_L\) and \(q\) to the right of
\(D_R\).
Let \(a \in S_U\) be a point with a non-empty visibility cone \(C\) into \(H\).
Then point \(a\) does not see \(\queryseg\) if and only if either:
\begin{itemize}
    \item the right boundary \(R_R\) of \(C\) intersects \(\pi(v_R, q)\), or
    \item the left boundary \(R_L\) of \(C\) intersects \(\pi(v_L, p)\).
\end{itemize}
\end{lemma}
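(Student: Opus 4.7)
My plan is to prove both directions of the biconditional by analysing how the cone boundaries $R_L, R_R$ interact with the geodesic barriers $\pi(v_L, p)$ and $\pi(v_R, q)$. The starting observation is that any line of sight $\seg{az}$ from $a$ to a point $z \in \seg{pq}$ must cross the upper chain of $H$, since $a \in S_U$ is separated from $\seg{pq}$ by that chain (because $\seg{pq}$ intersects both $D_L$ and $D_R$, it lies partly in $H$). Hence the direction of $\seg{az}$ lies within the angular range of $C$, between $R_L$ and $R_R$, so visibility reduces to whether some direction in this range survives the obstructions in $P_L$ and $P_R$ to reach $\seg{pq}$.

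For the sufficiency direction, I assume $R_R$ intersects $\pi(v_R, q)$ at some point $x$ (the case of $R_L$ and $\pi(v_L, p)$ is symmetric). The plan is to construct a piecewise-linear barrier inside $P$ consisting of the portion of $R_R$ from its tangent vertex on the upper chain down to $x$, followed by the portion of $\pi(v_R, q)$ from $x$ to $q$. I will then argue that this barrier, together with the part of $\seg{pq}$ near $q$, bounds a region that no ray from $a$ within $C$ can enter without crossing a polygon edge: the geodesic property of $\pi(v_R, q)$ guarantees the presence of such edges hugging the barrier on the side facing $C$. A complementary angular argument that exploits the tangency of $R_R$ to the upper chain of $H$ rules out the rays in $C$ that try to reach $\seg{pq}$ from the side of the barrier opposite these polygon edges.

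For the necessity direction, I suppose neither $R_R$ crosses $\pi(v_R, q)$ nor $R_L$ crosses $\pi(v_L, p)$. Then the extensions of $R_L, R_R$ through $H$ and into the adjacent subpolygons remain inside the funnels bounded by $\pi(v_L, p), \pi(u_L, p)$ on the left and $\pi(v_R, q), \pi(u_R, q)$ on the right, reaching near $p$ and $q$ respectively; by continuity, the cone covers a nonempty sub-segment of $\seg{pq}$, and every direction into this sub-segment from $a$ yields an unobstructed line of sight. The hard part will be formalising the sufficiency direction, where it is not a priori clear why a single blocked boundary precludes visibility even when the other boundary might seem to offer access; resolving this requires carefully combining the tangency condition defining $R_R$ with the geodesic property of $\pi(v_R, q)$ to show that the induced barrier cuts the entire angular sweep of $C$ off from $\seg{pq}$ inside $P$.
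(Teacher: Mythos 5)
Your high-level plan (confine every line of sight to the cone \(C\), then analyse how \(R_R\) and \(R_L\) interact with the geodesic chains) is the same kind of argument the paper makes, but both directions are left with genuine gaps precisely where the work is. For sufficiency you explicitly defer the key step, and the mechanism you sketch does not hold up as stated: your barrier consists partly of a portion of the ray \(R_R\), which is not backed by any polygon edge, so rays of \(C\) are not kept from crossing it by ``crossing a polygon edge'' but only by the fact that the whole cone lies on one side of the supporting line of \(R_R\); conflating these two kinds of obstruction is exactly where the difficulty hides, and the ``tangency of \(R_R\) to the upper chain'' is not the operative fact. The paper instead works inside the region \(I_R\) bounded by \(\queryseg\), \(D_R\) and \(\pi(v_R, q)\), which contains no polygon boundary, and splits on whether \(R_R\) crosses the convex chain \(\pi(v_R, q)\) once or twice: with one crossing, \(q\) and hence the whole of \(\queryseg\) to the right of \(D_R\) ends up on the non-cone side of \(R_R\) and so is not even in \(C\); with two crossings, the two components of \(I_R\) above \(R_R\) are separated by polygon boundary (the geodesic is taut, so the pocket that \(R_R\) cuts off must contain a reflex vertex), so no cone ray can traverse \(I_R\) from \(D_R\) to the part of \(\queryseg\) that rises above \(R_R\). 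Your barrier also only guards the part of \(\queryseg\) near \(q\); you never argue why the portions of \(\queryseg\) inside \(H\) and to the left of \(D_L\) are invisible, which again needs the ``wrong side of \(R_R\)'' observation rather than a physical barrier.

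For necessity, the configuration you describe cannot occur: a ray is unbounded while the funnel region above \(\queryseg\) is bounded, so \(R_R\) cannot ``remain inside'' it ``reaching near \(q\)''---it must exit, and if it exits through neither \(\pi(v_R, q)\) (your hypothesis) nor \(\queryseg\), the only remaining option is to cross over to the other side of the hourglass. That crossing-over is the actual content of the paper's argument: in the region bounded by \(\seg{v_Lv_R}\), the two geodesic chains and \(\queryseg\), the ray \(R_R\) is forced to exit through \(\pi(v_L, p)\) and \(R_L\) through \(\pi(v_R, q)\), which traps the concrete point \(\queryseg \cap D_L\) inside \(C\) and hence makes it visible. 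Your ``by continuity, the cone covers a nonempty sub-segment'' is the right conclusion, but it does not follow from the premises you state, and you never exhibit a visible point or justify that the line of sight to it is unobstructed.
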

\begin{proof}
First, assume that \(a\) does not see \(\queryseg\).
We argue that \(R_R\) intersects \(\pi(v_R, q)\) or \(R_L\) intersects
\(\pi(v_L, p)\).
Assume for the sake of contradiction that neither condition holds.
Let \(I\) be the region bounded by \(\seg{v_Lv_R}\), \(\pi(v_R, q)\),
\(\queryseg\), and \(\pi(p, v_L)\).
Since \(C\) can see points in \(H\) along \(R_R\) and \(R_L\), \(R_R\) and
\(R_L\) enter the region \(I\) through \(\seg{v_Lv_R}\), or \(a\) already lies
inside \(I\).
Since \(R_R\) is a ray, it must also exit \(I\), and by definition it cannot
exit through \(\seg{v_Lv_R}\).
It cannot exit \(I\) through \(\queryseg\), either, as that would mean \(a\) can
see \(\queryseg\).
Furthermore, by our assumption, \(R_R\) also does not intersect \(\pi(v_R, q)\).
Hence, \(R_R\) intersects \(\pi(v_L, p)\).
Using an analogous argument, \(R_L\) must intersect \(\pi(v_R, q)\).
However, it now follows that the intersection point \(s = \queryseg \cap D_L\)
lies inside the cone \(C\), and must therefore be visible to \(a\) (i.e.\@
nothing above \(H\) can intersect \(\seg{ap}\), and inside \(H\) \(\seg{ap}\)
also does not intersect any polygon vertices).
Hence, \(a\) sees \(\queryseg\).
Contradiction.

Now assume that \(R_R\) intersects \(\pi(v_R, q)\) (the case that \(R_L\)
intersects \(\pi(v_L, p)\) is symmetric).
We now argue that \(a\) cannot see \(\queryseg\).
Let \(I_R\) be the region bounded by \(\queryseg\), \(D_R\), and
\(\pi(v_R, q)\).
A point \(s\) on \(\queryseg\) is visible via a ray \(R\), entering via \(D_R\),
if it first exits the region \(I_R\) via \(\queryseg\).
Since \(R_R\) is not obstructed in \(H\), it must enter \(I_R\) via \(D_R\).
In addition, by assumption, it first exits via \(\pi(v_R, q)\).
If \(R_R\) intersects \(\pi(v_R, q)\) once, then by convexity of
\(\pi(v_R, q)\), it follows that \(q\) is below \(R_R\) and thus below any ray
\(R\) in the cone \(C\), thus it is not visible.
If \(R_R\) intersects \(\pi(v_R, q)\) twice, there is a subsegment of
\(\queryseg\) above \(R_R\).
The ray \(R_R\) now partitions \(I_R\) into three regions: one below \(R_R\),
containing points that cannot be visible, and two regions above \(R_R\).
The right region contains the subsegment of \(\queryseg\) that is still above
\(R_R\).
Consider now any ray \(R\) that could be a visibility ray to a point
\(x \in \queryseg\).
This ray must be above \(R_R\) and must intersect \(\queryseg\) at \(x\).
This means that it must traverse the region \(I_R\) from \(D_R\) to \(x\).
But since \(R\) must be above \(R_R\), it follows that it must cross the two
regions above \(R_R\), which are separated by a polygon boundary.
Thus, no \(x \in \queryseg\) is visible from \(a\).
\end{proof}

\begin{lemma}\label{lem:hourglass-visibility-ds}
Using \DSPartRef{H1}, \DSPartRef{H2}, and \DSPartRef{H3} stored with each chain
of hourglass \(H\) in our SQDS, we can count the visible objects in the side
polygons of \(H\) in time \(\bO(\log nm)\).
\end{lemma}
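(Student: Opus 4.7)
The plan is to apply \cref{lem:hourglassvisibility} independently on the two side polygons of $H$, say $S_U$ bounded by the upper chain and $S_L$ bounded by the lower chain. For $S_U$, the number of entities whose visibility cone sees $\queryseg$ equals the \DSPartRef{H1} count of non-empty cones in $\C_U$ minus the number of cones violating \cref{lem:hourglassvisibility}, that is, those whose right boundary $R_R$ crosses $\pi(v_R, q)$ or whose left boundary $R_L$ crosses $\pi(v_L, p)$.

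To handle these two violating conditions, I would first invoke the shortest-path query of the \spds{} to represent the convex chains $\pi(v_R, q)$ and $\pi(v_L, p)$ and identify, in $\bO(\log n)$ time, the vertex where each chain leaves the shortest-path map rooted at $v_R$ (resp.\ $v_L$); this is exactly the input expected by the query of \cref{lem:sp_intersection_ds}. I would then issue the corresponding query against \DSPartRef{H2} to count cones whose $R_R$ meets $\pi(v_R, q)$, and against \DSPartRef{H3} to count cones whose $R_L$ meets $\pi(v_L, p)$. Each query takes $\bO(\log nm)$ time, and the symmetric routine for $S_L$ adds only a constant factor.

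The main obstacle is avoiding double subtraction of a cone that violates both conditions. The first condition forces $R_R$ to exit through $D_R$, while the second forces $R_L$ to exit through $D_L$, so any single cone violating both would have to span the entire hourglass. Exploiting the storage convention that \DSPartRef{H2} (resp.\ \DSPartRef{H3}) holds boundaries only of cones fully exiting through $D_R$ (resp.\ $D_L$), a spanning cone is pre-split into two sub-cones during preprocessing---each fully exiting one of the diagonals and inheriting $R_R$ or $R_L$ as its outward boundary---so the two queries accumulate over disjoint sets of offending sub-cones, with \DSPartRef{H1} consistently counting sub-cones. Consequently, the visible count for $S_U$ is the \DSPartRef{H1} count minus the two query results; adding the analogous quantity for $S_L$ yields the overall count in $\bO(\log nm)$ time.
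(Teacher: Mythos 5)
Your overall scheme---take the \DSPartRef{H1} count of non-empty cones and subtract, via \DSPartRef{H2} and \DSPartRef{H3}, the cones violating the criterion of \cref{lem:hourglassvisibility}, doing this symmetrically for both chains---is exactly the paper's proof, and routing the two funnel chains \(\pi(v_R, q)\) and \(\pi(v_L, p)\) into queries of \cref{lem:sp_intersection_ds} is also right. The genuine problem is your last paragraph. The paper does not pre-split cones that span the hourglass, and your proposed splitting is inconsistent with the data structure as defined: \DSPartRef{H1} counts one whole non-empty cone per entity. If you instead make \DSPartRef{H1} ``consistently count sub-cones'', then an entity whose spanning cone is visible contributes two sub-cones to \DSPartRef{H1} and none to the subtracted terms, so it is counted twice; and if exactly one of its sub-cones is blocked, the entity is \emph{not} visible by \cref{lem:hourglassvisibility} (the two conditions are joined by ``or''), yet your scheme leaves a net contribution of one. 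Either way the count is wrong.

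The resolution needs no splitting. First, the sets stored in \DSPartRef{H2} and \DSPartRef{H3} are disjoint by construction: a cone that fully exits \(H\) through \(D_R\) cannot also fully exit through \(D_L\), and for such a cone the condition ``\(R_L\) intersects \(\pi(v_L, p)\)'' is vacuous, since its left boundary never reaches the far side of \(D_L\); symmetrically for \DSPartRef{H3}. Hence each non-seeing cone is subtracted exactly once. Second, a cone stored in neither structure---one whose boundaries reach the lower chain of \(H\), or a spanning cone with \(R_L\) exiting through \(D_L\) and \(R_R\) through \(D_R\)---always sees \(\queryseg\) and must not be subtracted at all: if a boundary ray crosses a diagonal below the crossing point of \(\queryseg\) with that diagonal, the ray meets \(\queryseg\) inside the (obstacle-free) hourglass, and otherwise the crossing point \(s = \queryseg \cap D_L\) lies inside the convex cone and is visible to its apex, exactly as in the first half of the proof of \cref{lem:hourglassvisibility}. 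You correctly observed that a cone could satisfy both blocking conditions only by spanning the hourglass; the right conclusion is that such a cone is always visible, not that it should be split.
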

\begin{proof}
By \cref{lem:hourglassvisibility}, we can count the number of visible objects
from the upper side polygon by taking the total number of objects with non-empty
visibility cones from the upper side polygon and subtracting those for which
either \(R_R\) intersects \(\pi(v_R, q)\) or \(R_L\) intersects \(\pi(v_L, p)\).
Counting for the lower side polygon is symmetrical.

We store the number of entities with non-empty visibility cones from the side
polygon in \DSPartRef{H1}.
Then, we query our \DSPartRef{H2} and \DSPartRef{H3} data structures to count
the number of visibility cones that exit through \(D_L\) and \(D_R\) that do not
see \(\queryseg\).
This requires \(\bO(\log m + \log n) = \bO(\log nm)\) query time per chain,
yielding the total time.
\end{proof}

We are now ready to bring all results together and prove
\cref{thm:point_segment}.

\theoremPointSegment*
\begin{proof}
The size and preprocessing time follow from \cref{lem:sqds-size-and-build-time}.
For the query, we first acquire the polygon cover of the query segment via the
\spds\ in \(\bO(\log n)\) time (\cref{sec:prelims}).
We sum up the number of entities contained in the \(\bO(\log n)\) hourglasses
and \(\bO(1)\) triangles in \(\bO(\log n)\) time.
In case the query segment is contained in a single triangle \(T\) of the polygon
decomposition, we use \DSPartRef{T1} to add the entities visible in all end
polygons of \(T\) in \(\bO(\log m)\) time (\cref{obs:enclosed-segment-ds}).
Otherwise, for each of the hourglasses, we query the associated data structure
of \cref{lem:hourglass-visibility-ds} to obtain the visible objects in the side
polygons of the hourglass in time \(\bO(\log nm)\) per hourglass, which
contributes \(\bO(\log n \log nm)\) to the query time.
We compute the funnels to the edges of the triangles in \(\bO(\log^2 n)\) time
and query the triangle data structures in \(\bO(\log nm)\) time, which is
dominated by the hourglass query time.
\end{proof}

\section{Segment Query for a Set of Segments}\label{sec:segment_segment}
As a natural extension to the previous data structure, we now consider the
problem where we have a set \(A\) of non-degenerate segments and want to
determine the number of visible segments from query segment \(\queryseg\).
As we show next, we can reuse the approach of the previous
\lcnamecref{sec:point_segment} with some minor additions and answer this query
in polylogarithmic time.

A difficulty that arises in this setting is that the entities in \(A\) are no
longer partitioned by the polygon cover of \(\queryseg\), that is, segments in
\(A\) may start or end in the polygon cover or pass through the cover.
To be able to correctly count the visible cones, we propose instead to count the
cones that we \emph{cannot} see and subtract this from the total count.

\subsection{Using Visibility Glasses}
To compute visibility of the segments in \(A\), we use \emph{visibility glasses}
(\cref{sec:prelims}) for our segments.
Let \(\seg{ac}\) be a segment to the left of a (vertical) diagonal
\(D = \seg{uv}\) in \(P\) (see \cref{fig:visibility-glass}).
We now check what \(\seg{ac}\) sees in the subpolygon to the right of the
diagonal.
To do this, we construct the visibility glass \(L(\seg{ac}, D)\) between
\(\seg{ac}\) and the diagonal.
Eades et al.~\cite{eades20} show that \(L(\seg{ac}, D)\) is an hourglass defined
by some subsegment \(\seg{or} \subseteq \seg{ac}\) and a subsegment
\(\seg{wx} \subseteq D\) (potentially, \(o = r\) or \(w = x\)).
We can now compute the lines connecting the opposite endpoints of the visibility
glass that still provide visibility, that is, the lines through \(\seg{ox}\) and
\(\seg{rw}\).
Note that these lines define the most extreme slopes under which there can still
be visibility.
These two lines intersect in a single point \(i\).
We now consider this point and the lines through it as a new cone that describes
the visible region to the right of diagonal \(D\).
We call this cone the \emph{visibility glass cone.}
Note that the left and the right rays of the cone are actual visibility rays to
points \(o\) and \(r\) on the line segment for points to the right of \(D\).

\begin{figure}[tb]
\centering
\includegraphics{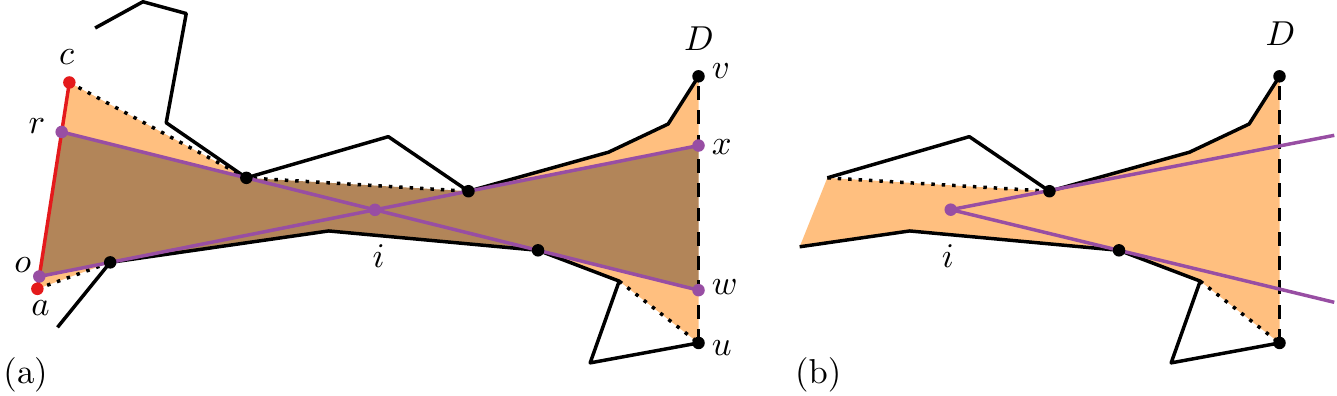}
\caption{(a)~The visibility glass (dark region) inside the hourglass (orange
region) from segment \(\seg{ac}\) to diagonal \(D = \seg{uv}\).
(b)~The intersection point \(i\) and the two rays from \(i\) through \(w\) and
\(x\) form a new visibility region in the subpolygon to the right of \(D\).}
\label{fig:visibility-glass}
\end{figure}

\begin{lemma}\label{lem:vis_glass_cone}
Consider a polygon \(P\), split into subpolygons \(P_L\) and \(P_R\) by a
diagonal \(D = \seg{uv}\), and let \(\seg{ac}\) be a line segment in \(P_L\).
Let \(C\) be the visibility glass cone of \(\seg{ac}\) into \(P_R\) through
\(D\).
If some point \(p \in P_R\) sees \(\seg{ac}\), it must be in \(C\).
\end{lemma}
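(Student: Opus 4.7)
The plan is to follow a visibility-segment decomposition argument analogous to the proof of \cref{lem:cones}. Suppose $p \in P_R$ sees some point $s$ of $\seg{ac}$; then $\seg{ps}$ lies entirely in $P$. Since $p$ and $s$ are on opposite sides of $D$, $\seg{ps}$ crosses $D$ at a unique point $t$, and the subsegment $\seg{st}$ lies in $P_L$. Thus $\seg{st}$ is a segment from $\seg{ac}$ to $D$ contained in $P_L$, and so it belongs to the visibility glass $L(\seg{ac}, D)$. Since the glass is characterized by $\seg{or} \subseteq \seg{ac}$ and $\seg{wx} \subseteq D$, we conclude $s \in \seg{or}$ and $t \in \seg{wx}$. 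Consequently, $p$ lies on the extension of $\seg{st}$ past $t$ into $P_R$.

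It remains to show that this extension lies in $C$. I introduce the auxiliary cone $C_s$ with apex $s$ bounded by the rays from $s$ through $w$ and through $x$. Since $t \in \seg{wx}$, the extension of $\seg{st}$ past $t$ lies in $C_s$, so $p \in C_s$. I then claim $C_s \subseteq C$ for every $s \in \seg{or}$, which concludes the proof. The boundary rays of $C$ in $P_R$ are the portions of $L_{o,x}$ past $x$ and of $L_{r,w}$ past $w$, while the boundary rays of $C_s$ in $P_R$ are the portions of the lines through $s, x$ past $x$ and through $s, w$ past $w$. As $s$ varies along $\seg{or}$, the line through $s$ and $x$ rotates about $x$, sweeping between $L_{o,x}$ (at $s = o$) and $L_{r,x}$ (at $s = r$); similarly, the line through $s$ and $w$ sweeps between $L_{o,w}$ and $L_{r,w}$. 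A slope comparison, using that $L_{o,x}$ and $L_{r,w}$ are the extremal visibility lines of the glass and cross at $i$ strictly to the left of $D$, shows that these rotating lines lie between $L_{o,x}$ and $L_{r,w}$ in $P_R$, giving $C_s \subseteq C$.

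The main obstacle is the slope-comparison step above. Setting up coordinates with $D$ as the $y$-axis and parameterizing $s = (1 - \alpha) o + \alpha r$, the slope of the line through $s$ and $x$ becomes a rational function of $\alpha$ that is monotonic on $[0, 1]$, with extreme values equal to the slopes of $L_{o,x}$ and $L_{r,x}$; an analogous statement holds for the line through $s$ and $w$. That the boundary rays of $C$ correspond to the extremes $L_{o,x}$ and $L_{r,w}$ rather than the non-crossing pairing $L_{o,w}$ and $L_{r,x}$ is forced by the assumption that $L_{o,x}$ and $L_{r,w}$ intersect at $i$ strictly to the left of $D$, which fixes the sign relationships among the slopes of the four candidate bounding lines and ensures the extensions stay on the correct side.
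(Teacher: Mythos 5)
Your proof is correct, but it takes a genuinely different route from the paper's. The paper argues by contradiction using the boundary structure of the hourglass: if \(p\) saw \(\seg{ac}\) via a segment \(L\) crossing \(D\) above \(x\), then \(L\) would enter the pocket bounded by the upper chain of the hourglass, the extreme ray \(R_L\), and \(\seg{xv}\), and could only escape by crossing \(R_L\), forcing its slope beyond the extremal slope and contradicting visibility. You instead argue directly and never touch the polygon's chains: every visibility segment \(\seg{st}\) has \(t \in \seg{wx}\) and slope sandwiched between the slopes of \(\seg{ox}\) and \(\seg{rw}\), so its extension past \(t\) stays below the ray of \(C\) through \(x\) and above the ray of \(C\) through \(w\). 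This is cleaner and avoids the paper's (somewhat delicate) reduction to the case where \(L\) crosses \(D\) outside \(\seg{wx}\); the price is that all the geometric content about the polygon boundary is outsourced to the extremality assertion from Eades et al.\@ that the paper states when defining the visibility glass cone, which is fair game. Two small remarks. First, your intermediate claim \(C_s \subseteq C\) is true but stronger than needed and is the only place requiring real work: it rests on the fact that \(\seg{or}\) lies in the closed wedge between the lines through \(\seg{ox}\) and \(\seg{rw}\) on the far side of \(i\), which follows because those two segments cross at \(i\) strictly between their endpoints (so \(o\) lies below the line through \(\seg{rw}\) and \(r\) lies above the line through \(\seg{ox}\)); you gesture at this but should state it as the reason the rotation stays on the correct side. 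Second, you could bypass \(C_s\) entirely: for the specific visibility segment \(\seg{st}\), the point of \(C\)'s upper (resp.\@ lower) boundary at any horizontal distance \(d > 0\) beyond \(D\) has height \(x_y + d\cdot\mathrm{slope}(\seg{ox})\) (resp.\@ \(w_y + d\cdot\mathrm{slope}(\seg{rw})\)), and \(t_y + d\cdot\mathrm{slope}(\seg{st})\) is sandwiched between these termwise, which finishes the proof in one line.
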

\begin{proof}
Assume w.l.o.g.\@ that \(v\) is above \(u\), and let \(\seg{wx}\), with \(x\)
above \(w\), be the part of \(D\) inside \(C\).
See \cref{fig:visibility-glass}.
Suppose for a contradiction that \(p\) sees \(\seg{ac}\) but is not in \(C\).
Let \(q\) be a visible point on \(\seg{ac}\) and let \(L\) be the line segment
connecting \(p\) and \(q\).
Since \(D\) separates \(P_L\) from \(P_R\), and \(L\) must be inside \(P\) to be
a visibility line, \(L\) must cross \(D\).
Suppose w.l.o.g.\@ that \(L\) crosses \(D\) above \(x\), that is, above the left
ray \(R_L\) of \(C\).
The left ray must intersect a reflex vertex of the upper chain of its associated
hourglass \(H\), so there is a region above \(R_L\) bounded by the upper chain,
\(R_L\), and \(\seg{xv}\).
Since \(L\) enters this region, it must also exit the region.
There can only be visibility if \(L\) is inside \(P\), hence, it must exit the
region via the edge bounded by \(R_L\).
Therefore, its slope is higher than the slope of \(R_L\).
By definition of the visibility glass, it then cannot see \(\seg{ac}\), leading
to a contradiction.
Thus, \(p\) is in \(C\).
\end{proof}

\begin{corollary}
If \(p\) is visible, the ray from \(p\) through the apex \(i\) of \(C\) is a
visibility line to \(\seg{ac}\).
\end{corollary}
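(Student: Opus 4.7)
The plan is to combine \cref{lem:vis_glass_cone} with the defining geometric structure of the apex \(i\).
First, since \(p\) sees \(\seg{ac}\), \cref{lem:vis_glass_cone} yields \(p \in C\).
The cone \(C\) is bounded by the rays from \(i\) through \(w\) and through \(x\), so the ray from \(p\) through \(i\) crosses \(D\) at some point \(s\) strictly between \(w\) and \(x\), i.e.\@ \(s \in \seg{wx}\).
Since \(i\) was constructed as the intersection of the lines through \(\seg{ox}\) and \(\seg{rw}\), the ray continues past \(i\) to hit \(\seg{ac}\) at some point \(t \in \seg{or}\).

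It then remains to show that the segment \(\seg{pt}\) lies inside \(P\).
I would split \(\seg{pt}\) at \(s \in D\) into \(\seg{ps}\) (on the \(P_R\)-side) and \(\seg{st}\) (on the \(P_L\)-side).
For \(\seg{st}\): the line through \(p\) and \(i\) has slope between the two extremal slopes of the lines through \(\seg{ox}\) and \(\seg{rw}\) (since \(p\) lies inside the angular wedge of \(C\)), and so \(\seg{st}\) lies within the visibility glass \(L(\seg{ac}, D)\) and is unobstructed in \(P_L\).
For \(\seg{ps}\): using that \(p\) is visible via some line \(\seg{pq}\) with \(q \in \seg{or}\), this line crosses \(D\) at some \(s' \in \seg{wx}\), so \(\seg{ps'}\) is unobstructed; the remaining step is to sweep from \(\seg{ps'}\) to \(\seg{ps}\) within the wedge from \(p\) subtending \(\seg{ss'}\) and conclude that no polygon boundary can appear inside this wedge in \(P_R\).

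The hard part will be this last step.
The cone \(C\) is defined as an angular wedge at \(i\) and does not \emph{a priori} rule out obstructions in \(P_R\) between \(p\) and \(D\).
However, the observation preceding the statement---that the bounding rays of \(C\) extended into \(P_R\) are actual visibility rays from \(r\) and \(o\)---pins \(C\) against the polygon boundary in \(P_R\): any polygon edge obstructing \(\seg{ps}\) would either cross one of the extremal visibility rays (contradicting their role as visibility rays) or separate \(\seg{ps}\) from \(\seg{ps'}\) in a way that contradicts the connectedness of the visible portion of \(D\) from \(p\), by \cref{lem:visibility-seg-intersect}.
Combining these yields that \(\seg{ps}\) is unobstructed and completes the proof.
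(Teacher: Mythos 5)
The paper gives no proof of this corollary at all---it is presented as immediate from \cref{lem:vis_glass_cone} and the construction of \(i\)---so there is nothing to match your argument against; I can only assess it on its own terms. The first half of your argument, concerning the portion of the ray left of \(D\), is sound and can be made fully rigorous with a cleaner justification than ``slope between the extremal slopes'': the segments \(\seg{ox}\), \(\seg{rw}\), \(\seg{or}\) and \(\seg{wx}\) are all inside \(P\), so the two triangles of the bow-tie, \(\triangle iwx\) and \(\triangle ior\), have their entire boundaries inside \(P\) and hence (by simple-connectedness of \(P\)) lie inside \(P\); the segment \(\seg{st}\) is contained in their union. Note that membership in the visibility glass is a statement about pairs of endpoints, so ``has an intermediate slope'' does not by itself certify that the particular chord through \(i\) is realisable; the bow-tie argument sidesteps this.

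The genuine gap is exactly where you suspected it: the segment \(\seg{ps}\). Neither branch of your closing dichotomy holds. First, \cref{lem:visibility-seg-intersect} tells you that \(V(p)\cap D\) is a single interval, and that interval contains \(s'\); nothing forces it to contain \(s\), so ``connectedness of the visible portion of \(D\)'' cannot be contradicted by an obstruction that merely cuts \(s\) off while leaving \(s'\) visible. Second, the extremal rays are only known to be unobstructed \emph{between} \(\seg{ac}\) and \(D\) (they support the visibility segments \(\seg{ox}\) and \(\seg{rw}\)); right of \(D\) they are just directions, so an obstruction in \(P_R\) need not cross them. Concretely, take \(P_L\) convex, so that \(i\) lies on the horizontal axis of symmetry, place \(p\) far to the right on that axis, and let a thin spike of \(\partial P\) descend into \(P_R\) with its tip just below the line \(pi\), strictly between \(p\) and \(D\) and well inside the cone. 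Then \(p\in C\), \(p\) still sees a lower portion of \(\seg{wx}\) whose extension reaches \(\seg{or}\) (so \(p\) is visible to \(\seg{ac}\)), yet \(\seg{ps}\) is blocked and the ray through \(i\) is not a visibility line. In other words, the corollary as literally stated needs the additional hypothesis that \(p\) sees the point \(s=\vec{pi}\cap D\), i.e.\@ that nothing in \(P_R\) obstructs \(p\) from \(D\) inside \(C\). That hypothesis is satisfied everywhere the paper actually uses the visibility glass cone (the query always reaches \(D\) through vertex-free triangles, hourglasses and funnels), and under it your bow-tie half of the argument already completes the proof; but the sweeping argument you propose for the right-hand side cannot be repaired in the unrestricted setting.
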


\Cref{lem:vis_glass_cone} shows that the visibility glass cones are functionally
the same as the visibility cones of points, thus we can reuse parts of our data
structures of \cref{sec:point_segment}.

\begin{observation}\label{obs:seg_vis_seg}
If a segment \(\seg{ac} \in A\) cannot see \(\queryseg\), it must be fully
contained in a side or an end polygon.
\end{observation}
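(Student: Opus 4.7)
The plan is to prove the contrapositive: if $\seg{ac}$ is not entirely contained in a single side or end polygon, then some point of $\seg{ac}$ is mutually visible with some point of $\queryseg$. I first note that the polygon cover of $\queryseg$---the two leaf triangles $T_p, T_q$ containing the endpoints of $\queryseg$ together with the $\bO(\log n)$ open hourglasses---and the surrounding side and end polygons together partition $P$. Since any two distinct side or end polygons are separated from one another by a piece of this cover, the connectedness of $\seg{ac}$ implies that if it is not entirely inside one side or end polygon, then it contains a point $x$ lying in some piece of the cover.

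The next step is to argue that such an $x$ must see some point of $\queryseg$. If $x$ lies in a leaf triangle $T$, then since $T$ is convex and contains no polygon vertices in its interior, the straight segment from $x$ to the endpoint of $\queryseg$ contained in $T$ lies inside $P$, so $\seg{ac}$ sees $\queryseg$ immediately. If instead $x$ lies in an open hourglass $H$, then $\queryseg$ crosses both diagonals $D_L, D_R$ of $H$, so $\queryseg \cap H$ is a single straight chord $\seg{\alpha\beta}$ with $\alpha \in D_L$ and $\beta \in D_R$. I plan to show that $x$ sees some point of $\seg{\alpha\beta}$ by using the characterisation of an open hourglass in the Guibas--Hershberger decomposition as the union of straight shortest-path segments from $D_L$ to $D_R$; thus $x$ already lies on some such chord $\seg{\alpha'\beta'} \subseteq H$, and a continuous deformation from $\seg{\alpha'\beta'}$ into $\seg{\alpha\beta}$ through the one-parameter family of $D_L$-to-$D_R$ chords inside $H$, combined with the fact that the upper and lower chains of $H$ are convex geodesics bowing away from the interior of $H$, should yield an unobstructed visibility segment from $x$ to a point of $\seg{\alpha\beta}$.

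The hard part will be the hourglass subcase: one has to formally justify that every interior point of an open hourglass is visible to some point of any chord spanning its two diagonals. The geometric intuition---that an open hourglass is swept by straight segments between its diagonals, so every interior point is witnessed by some such segment---is straightforward, but a clean argument likely invokes the structural characterisation of open hourglasses from Guibas--Hershberger, in particular that every straight segment in $P$ from a point of $D_L$ to a point of $D_R$ stays inside $H$. By contrast, the reduction to an intersection with the polygon cover and the leaf-triangle subcase follow directly from the definitions.
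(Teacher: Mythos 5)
Your argument matches the paper's: the paper likewise justifies the observation by noting that a segment not contained in a single side or end polygon must meet the polygon cover and therefore sees \(\queryseg\), offering even less detail than you do. Your elaboration of the hourglass subcase, though only sketched, is completable along the lines you indicate --- since an open hourglass is swept by straight chords between its diagonals, the chord through \(x\) and the chord \(\queryseg \cap H\) bound a quadrilateral contained in \(P\), and triangulating it shows \(x\) sees an endpoint of the latter chord --- so the proposal is correct and follows essentially the same route.
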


This follows easily from the fact that if a segment \(\seg{ac}\) is not
contained in either an end or a side polygon, then it is in the polygon cover or
intersects the boundary of the polygon cover.
This then means that \(\seg{ac}\) sees \(\queryseg\).
It now suffices to count the segments in the side and the end polygons that are
not visible to determine the total number of entities invisible to
\(\queryseg\), and thus determine the number of entities in \(A\) visible to
\(\queryseg\).

Since our data structure of \cref{sec:point_segment} can already correctly count
visible entities from the end and the side polygons, we can simply determine the
number of invisible entities by subtracting the visible count from the total
number of entities in the end or side polygon.

\subsection{Extended Segment Query Data Structure}
We base our new data structure on the SQDS presented in
\cref{sec:point_segment}.
In our new data structure, for the upper chain of an hourglass \(H\) with the
incident side polygon \(S_U\), we store:
\begin{description}
\item[H1--3.] The same data structures as for the SQDS, but constructed using
visibility glass cones on line segments that are inside \(S_U\).
\item[H4.] The number of segments inside \(S_U\) with empty visibility glass
cones into \(H\).
\end{description}
We store symmetrical structures for the lower chain.

For a triangle \(T = uvw\) in the polygon decomposition, for the edge
\(\seg{uv}\) with the incident end polygon \(E(\seg{uv})\), we store:
\begin{description}
\item[T1--T5.] The same data structures as for the SQDS, but constructed using
visibility glass cones on line segments that are inside \(E(\seg{uv})\).
\item[T6.] The number of segments inside \(E(\seg{uv})\) with empty visibility
glass cones into \(H\).
\item[T7.] The number of segments inside \(E(\seg{uv})\) with non-empty
visibility glass cones into \(H\).
\end{description}

We now query the data structure as follows: for a given \(\queryseg\), we
acquire the polygon cover in \(\bO(\log n)\) time, giving us \(\bO(\log n)\)
hourglasses and bridges and \(\bO(1)\) end triangles.
For each side and end polygon, we now query the associated data structures to
get the number of visible segments for each subpolygon.
Let \(N_\textnormal{vis}\) be the total number of visible segments as reported
for the side and end polygons.
Over all the encountered side and end polygons during the query, let
\(N_\textnormal{closed}\) be the total number of empty visibility glass cones,
and let \(N_\textnormal{open}\) be the total number of non-empty visibility
glass cones.
The total number of visible segments is now given by
\(m - (N_\textnormal{open} - N_\textnormal{vis}) - N_\textnormal{closed}\).

\theoremSegmentSegment*
\begin{proof}
Since we only store a constant amount of extra data in our data structure per
hourglass chain and triangle edge, the storage requirements are the same as for
the SQDS.
For construction of the new data structure, we need to compute the visibility
glass cones for each side polygon of an hourglass and for all end polygons of a
triangle.
Using the data structure by Eades et al.~\cite{eades20}, we can compute the
visibility glasses to all diagonals in \(\bO(nm\log n + n \log^5 n)\) time and
extract the visibility glass cones in constant time per visibility glass.
Asymptotically, this does not change the preprocessing time.
Since the query only does a constant number of extra operations per hourglass
and triangle, the query time is the same as the original SQDS query time.
\end{proof}

\section{Extensions and Future Work}\label{sec:extra}
In this \lcnamecref{sec:extra}, we present some natural extensions to our work
and discuss possible variations, as well as the obstacles in the way of
obtaining results in those settings.

\subsection{Subquadratic Counting}\label{sec:count}
Given our data structures, we can generalise the problem: given two sets of
points or line segments \(A\) and \(B\), each of size \(m\), in a simple
polygon \(P\) with \(n\) vertices, count the number of pairs in \(A \times B\)
that see each other.
Using the work by Eades et al.~\cite{eades20} and further work it is based on,
we can solve this problem by checking the visibility for all pairs.
If \(n \gg m\), this approach is optimal.
In particular, if both sets \(A\) and \(B\) consist of points, this yields a
solution with running time \(\bO(n + m^2\log n)\); if one of the sets contains
only segments, we need \(\bO(n\log n + m^2\log n)\) time.
However, when \(m \gg n\), we want to avoid the \(m^2\) factor.
Furthermore, the setting of \cref{sec:segment_segment} is novel, so we consider
the full spectrum of trade-offs.

The trick is to use the following well-known technique.
Suppose we have a data structure for visibility counting queries with query
time \(Q(m, n)\) and preprocessing time \(P(m, n)\).
Pick \(k = m^s\) with \(0 \leq s \leq 1\).
We split the set \(A\) into sets \(A_1, \dots, A_k\), with \(\sfrac{m}{k}\)
objects each; then we construct a data structure for each set.
Finally, with each point in \(B\), we query these \(k\) data structures and sum
up the counts.
It is easy to see that the count is correct; the time that this approach takes
is \(\bO\bigl(k \cdot P(\sfrac{m}{k}, n) + mk \cdot Q(\sfrac{m}{k}, n)\bigr)\).
We need to pick \(s\) to minimise
\(\bO\bigl(m^s \cdot P(m^{1 - s}, n) + m^{1 + s} \cdot Q(m^{1 - s}, n)\bigr)\).

Let us show the results for the various settings.
Suppose that both sets \(A\) and \(B\) contain points.
First, let us consider the approach of \cref{sec:point_point}.
We have \(P(m, n) = \bO(n + m^{2 + \eps}\log n + m\log^2 n)\) and
\(Q(m, n) = \bO(\log^2 n + \log n \log m)\).
The summands depending only on \(n\) come from preprocessing of the polygon that
only needs to be done once; so we get
\begin{align*}
&\mathrel{\hphantom{=}}\bO\bigl(n + m^s \cdot (m^{(1 - s)(2 + \eps)}\log n
+ m^{1 - s}\log^2 n) + m^{1 + s} \log^2 n
+ m^{1 + s}\log n \log m^{1 - s}\bigr)\\
&= \bO(n + m^{(1 - s)(2 + \eps) + s}\log n + m^{1 + s} \log^2 n
+ m^{1 + s} \log n \log m)\,.
\end{align*}
Unless \(n \gg m\), we pick \(s\) such that \((1 - s)(2 + \eps) + s = 1 + s\);
we find \(s = \sfrac{(1 + \eps)}{(2 + \eps)}\).
Therefore, the running time is
\(\bO(n + m^{\sfrac{3}{2} + \eps'} \log n \log nm)\) for this choice of \(s\),
where \(\eps' > 0\) is an arbitrarily small constant.

Alternatively, we could apply the arrangement-based method of
\cref{sec:arrangement}.
We have \(P(m, n) \in \bO(nm^2 + nm \log n)\) and \(Q(m, n) \in \bO(\log nm)\).
Using the formula above, we get
\[\bO\bigl(m^s \cdot (nm^{2 - 2s} + nm^{1 - s} \log n)
+ m^{1 + s} \cdot \log(nm^{1 - s})\bigr)
= \bO(nm^{2 - s} + nm\log n + m^{1 + s}\log nm)\,.\]
If \(m \gg n\), we can pick \(s\) to balance the powers of \(m\) in the terms;
so we set \(s = \sfrac{1}{2}\) to get
\[\bO(m^{\sfrac{3}{2}}\cdot(n + \log n + \log m) + nm\log n)
= \bO(nm^{\sfrac{3}{2}} + m^{\sfrac{3}{2}}\log m + nm\log n)\,.\]

If \(n \gg m\), it is best to use the pairwise testing approach; however, if
\(m \gg n\), the arrangement-based approach performs best, and if
\(m \approx n\), we obtain best results with the decomposition-based approach of
\cref{sec:point_point}.

Now suppose that one of the sets contains points and the other set contains line
segments.
As it turns out, using the approach of \cref{sec:point_segment} is always
inefficient here; if \(m \gg n\), we can use the approach of
\cref{sec:arrangement}, making sure that we do point queries, and otherwise
pairwise testing is fastest.

Finally, suppose both sets consist of line segments.
We have \(P(m, n) \in \bO(n^2\log m + nm^{2 + \eps} + nm\log n)\) and
\(Q(m, n) \in \bO(\log^2 n + \log n \log m)\).
We get
\begin{align*}
&\mathrel{\hphantom{=}}\bO\bigl(m^s \cdot (n^2 \log m^{1 - s}
+ nm^{(1 - s)(2 + \eps)} + nm^{1 - s}\log n)
+ m^{1 + s} \log n \log nm^{1 - s}\bigr)\\
&= \bO(n^2m^s\log m + nm^{(1 - s)(2 + \eps) + s} + nm\log n + m^{1 + s} \log^2 n
+ m^{1 + s} \log n \log m)\,.
\end{align*}
For \(n \gg m\), the time is dominated by \(\bO(n^2m^s\log m)\), so we pick
\(s = 0\) and get \(\bO(n^2\log m + nm^{2 + \eps} + nm\log n)\) time.
For \(n \approx m\) or \(m \gg n\), we balance the powers by picking
\(s = \sfrac{(1 + \eps)}{(2 + \eps)}\) to get
\(\bO(n^2 m^{\sfrac{1}{2} + \eps'}\log m + nm\log n + nm^{\sfrac{3}{2} + \eps'}
+ m^{\sfrac{3}{2} + \eps'}\log n \log m)\) time for this choice of \(s\),
where \(\eps' > 0\) is an arbitrarily small constant.

\subsection{Preprocessing Polygons}\label{sec:ccobjects}
Instead of line segments in the set \(A\), we can extend our approach to
polygons in \(A\).
So consider now the setting where we have a query segment \(\queryseg\) and a
set of polygons \(A\).

For this extension, we mainly have to show that an equivalent to
\cref{lem:vis_glass_cone} holds---the rest then easily follows as for line
segments in \cref{sec:segment_segment}.
We first have to define the visibility glass and the visibility glass cone.
The visibility glass between a diagonal \(D = \seg{uv}\) and a polygon
\(S \in A\) is defined as the set of segments \(\seg{aw}\) where \(a \in S\) and
\(w \in D\).
Without loss of generality, assume that \(D\) is vertical and that it splits
\(P\) into subpolygons \(P_L\) and \(P_R\), left and right of \(D\),
respectively, and assume \(S \subseteq P_L\).

Consider the segment \(\seg{ay}\) of the visibility glass with the highest
slope.
In case of ties, take the shortest segment, so the intersection of \(\seg{ay}\)
and \(S\) consists of only \(a\).
Similarly, define \(\seg{cx}\) as the segment with the lowest slope.
Let \(L_L\) and \(L_R\) denote the supporting lines of \(\seg{ay}\) and
\(\seg{cx}\), respectively.
The visibility glass cone of \(S\) through \(D\) is then the cone defined by
\(L_L\) and \(L_R\) that passes through \(D\).
We are now ready to prove an equivalent to \cref{lem:vis_glass_cone}.

\begin{figure}[tb]
\centering
\includegraphics{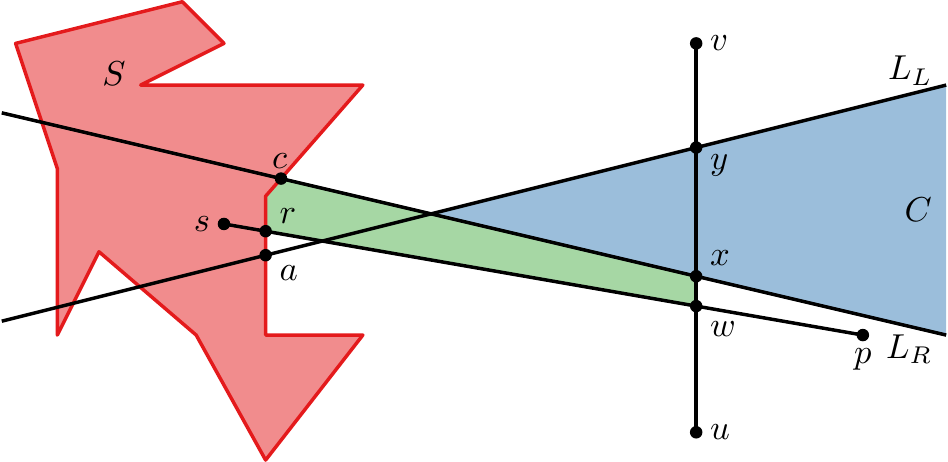}
\caption{The visibility glass cone \(C\) for a region \(S\) and a diagonal
\(D = \seg{uv}\).}
\label{fig:region-cone}
\end{figure}

\begin{lemma}\label{lem:vis_glass_cone_region}
Consider a polygon \(P\), split into subpolygons \(P_L\) and \(P_R\) by a
diagonal \(D = \seg{uv}\) between two vertices \(u\) and \(v\), and let \(S\) be
a simple polygon in \(P_L\).
Let \(C\) be the visibility glass cone of \(S\) into \(P_R\) through \(D\).
If some point \(p \in P_R\) is visible to \(S\), it must be in \(C\).
\end{lemma}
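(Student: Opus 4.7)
The plan is to mirror the proof of \cref{lem:vis_glass_cone} almost verbatim, with the simple polygon $S$ playing the role of the segment $\seg{ac}$ and the extremal visibility-glass segments $\seg{ay}$ and $\seg{cx}$ playing the role of $\seg{ox}$ and $\seg{rw}$. Assume w.l.o.g.\ that $v$ is above $u$, so that $y$ is above $x$ on $D$. Suppose for contradiction that some $q \in S$ is visible to a point $p \in P_R$ with $p \notin C$. Then $\seg{pq} \subset P$ must cross $D$ at some point $w$, because $D$ separates $P_L$ from $P_R$. Since $C$ is the cone with apex $i$ whose boundary lines $L_L$ and $L_R$ pass through $y$ and $x$ on $D$, the hypothesis $p \notin C$ implies, after a symmetric case split, that we may assume the ``upper'' case: $\seg{pq}$ crosses $D$ above $y$, or crosses at $w \in \seg{xy}$ with a slope exceeding that of $L_L$.

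Next, apply the pocket argument exactly as in \cref{lem:vis_glass_cone}. Because $\seg{ay}$ is the steepest segment of the visibility glass from $S$ to $D$, its supporting line $L_L$ is tangent in $P_L$ to some reflex vertex on the upper chain of the hourglass from $S$ to $D$, or to a vertex on $\partial S$. In either subcase, $L_L$, a portion of this chain (or of $\partial S$), and the piece of $D$ from $y$ up to $v$ bound a pocket region $R \subset P_L$. The subsegment $\seg{qw} \subseteq \seg{pq}$ enters $R$ through $D$ and must therefore exit. Exit through $\partial P_L$ is impossible because $\seg{pq} \subset P$; exit through the $L_L$-portion of $\partial R$ forces the slope of $\seg{qw}$ to strictly exceed the slope of $L_L$. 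Since $\seg{qw}$ lies inside $P_L$ and connects a point of $S$ to a point of $D$, it is itself a segment of the visibility glass, directly contradicting the maximality of the slope of $\seg{ay}$.

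The main obstacle is the subcase where $L_L$ is supported by a vertex on $\partial S$ rather than by a reflex vertex of $P_L$: the pocket $R$ is then partially bounded by $\partial S$, so one must verify that $q$ cannot sit inside $R \cap S$ in a way that avoids crossing $L_L$. This is precisely why the definition of $\seg{ay}$ takes the \emph{shortest} segment among those of maximum slope: this tiebreak ensures $\seg{ay} \cap S = \{a\}$, so locally near $a$ the polygon $S$ lies on the $L_R$-side of $L_L$, and the pocket $R$ is genuinely disjoint from the interior of $S$ near $a$. The symmetric lower case, where $w$ lies below $x$ or the slope is too shallow, is handled identically with $\seg{cx}$ and $L_R$ in place of $\seg{ay}$ and $L_L$.
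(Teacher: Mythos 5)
Your case where \(\seg{pq}\) meets \(D\) at a point \(w\) on or below \(y\) is fine: there \(\seg{qw}\) is itself a visibility-glass segment whose slope exceeds that of \(L_L\), which is the same immediate contradiction the paper uses. The gap is in the case \(w\) strictly above \(y\), which is exactly what your pocket argument is meant to dispatch. The step ``\(\seg{qw}\) enters \(R\) through \(D\) and must therefore exit \dots\ through the \(L_L\)-portion'' fails whenever the seen point \(q\) itself lies strictly above \(L_L\): then both endpoints of \(\seg{qw}\) lie in the closed half-plane above \(L_L\), so the segment never meets \(L_L\) and no slope comparison is available. You flag this possibility (\(q \in R \cap S\)), but your fix does not close it: the tie-break only guarantees \(\seg{ay} \cap S = \{a\}\), i.e.\ that \(S\) avoids the \emph{segment} \(\seg{ay}\) away from \(a\); it says nothing about \(S\) meeting the open half-plane above the \emph{line} \(L_L\), and \(S\) can perfectly well have points there (it can cross \(L_L\) to the left of \(a\), or bulge above it elsewhere). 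This is unlike the situation in \cref{lem:vis_glass_cone}, where the visible subsegment \(\seg{or}\) of \(\seg{ac}\) automatically lies on the correct side of the extremal ray, which is why the pocket argument works there. Moreover, ``the upper chain of the hourglass from \(S\) to \(D\)'' is not a defined object for a polygon \(S\), and the tangency claim for \(L_L\) fails when \(y = v\) (harmless in that subcase, since nothing on \(D\) lies above \(y\), but a sign the case analysis is not airtight).

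The paper closes the problematic case with a construction you are missing (stated there for the symmetric lower side, with \(\seg{cx}\) and \(L_R\)). Let \(r\) be the point of \(S\) on \(\seg{qp}\) closest to \(p\), and consider the closed curve formed by the extremal segment \(\seg{ay}\), the piece \(\seg{yw}\) of \(D\), the piece \(\seg{wr}\) of the visibility segment, and an arc of \(\partial S\) from \(a\) to \(r\). Every piece of this curve lies in \(P\), so the region it bounds contains no part of \(\partial P\); hence \(\seg{aw}\) lies in \(P\), is itself a visibility-glass segment, and, since \(w\) lies strictly above \(y\) on \(D\), has strictly larger slope than \(\seg{ay}\). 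The contradiction thus comes from the extremal apex \(a\) seeing \(w\), not from the slope of \(\seg{qw}\); to repair your proof you need this (or an equivalent) argument in place of the pocket step for the subcase where \(q\) lies above \(L_L\).
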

\begin{proof}
For a contradiction, assume that there is a point \(p \in P_R\) that sees \(S\)
but lies outside of \(C\).
Without loss of generality, suppose that \(p\) is below \(C\).
Let \(s \in S\) be a point that is visible to \(p\), and let \(r \in S\) be the
point closest to \(p\) on the line segment \(\seg{sp}\)---see also
\cref{fig:region-cone}.
Let \(w\) be the intersection point of \(\seg{sp}\) and \(D\).
We first note that \(s\) cannot lie above (or on) \(L_R\).
If that were the case, then the line segment \(\seg{ws}\) would have a lower
slope than \(L_R\) and would be in the visibility glass, contradicting the
definition of the visibility glass cone.
So we can assume that \(s\) is below \(L_R\).
However, if \(s\) is below \(L_R\), then so is \(r\).
Now consider the region defined by \(\seg{cx}\), \(\seg{xw}\), \(\seg{wr}\),
and the path from \(c\) to \(r\) along the boundary of \(S\) in clockwise
direction (green region in \cref{fig:region-cone}).
None of these segments or the path can be intersected by the polygon boundary,
so the region is empty.
However, in that case, also the line segment \(\seg{cw}\) must be in the
visibility glass and has a lower slope, again contradicting the definition of
the visibility glass cone.
From this contradiction we can conclude that any \(p \in P_R\) that sees \(S\)
must be inside the visibility glass cone \(C\).
\end{proof}

Using this \lcnamecref{lem:vis_glass_cone_region}, we can apply the same methods
as in \cref{sec:segment_segment} for a set of segments.

\subsection{Moving Points}\label{sec:moving_points}
In the context of moving objects, we may interpret a segment as a moving object
that traverses the segment from start to end with a constant velocity.
This applies both to the objects in a given set \(A\) and the query object.
More formally, consider objects \(p\), \(q\) that have trajectories
\(p(t): \R \to \R^2 \cap P\) and \(q(t): \R \to \R^2 \cap P\) inside a polygon
\(P\).
We say that \(p\) and \(q\) are \emph{mutually visible} in \(P\) if and only if
at some time \(t\), the line segment \(\seg{p(t)q(t)}\) is inside the polygon
\(P\).
In this case, we could be interested in counting how many objects can be seen
by the query object at some point during their movement.
Note that the settings we discuss in \cref{sec:point,sec:point_segment} lend
themselves to this interpretation immediately, since either the query or the
objects of \(A\) do not move.
On the other hand, the setting of a query segment with a set of segments from
\cref{sec:segment_segment} does not translate to moving objects.

Eades et al.~\cite{eades20} present a data structure that supports determining
whether two query objects see each other at some point in time by preprocessing
only the polygon.
There is no obvious extension to their data structure that also preprocesses the
set of objects.
A possible (slow) solution would be to track time as a third dimension and
construct the visibility polygon of each point \(p \in A\) as it moves.
Given a moving object as a query, we would then need to count the visibility
polygons (that include a time dimension) that are pierced by the segment.
It seems difficult to avoid double counting the points in this scenario;
actually solving this problem would be an interesting continuation of the work
presented in this paper.

\subsection{Query Variations}\label{sec:query_variations}
There are many other settings that one could consider as extensions of this
work.
For instance, we could solve the simpler problem of testing visibility: given a
query point \(q\) or line segment \(\queryseg\), check if it sees any object in
the set \(A\).
Surprisingly, it does not seem easy to simplify our approaches to answer this
question more efficiently.
We could also consider the reporting version of the problem rather than
counting; this works immediately for the point query approaches of
\cref{sec:point}, but our use of inclusion--exclusion arguments for segment
queries in \cref{sec:point_segment,sec:segment_segment} prevents us from easily
adapting those to reporting in time proportional to the number of reported
segments.
Finally, when considering segments, one can ask many other questions: how much
of each segment is seen by a query segment and vice versa, for each segment or
in total; these questions and more can also be considered for moving objects as
in \cref{sec:moving_points}.
All of these would be highly exciting directions for future work.

\bibliographystyle{plainurl}
\bibliography{references}
\end{document}